\renewcommand\@makefnmark{\hbox{\@textsuperscript{\normalfont\color{black}\@thefnmark}}}
\theoremstyle{plain} 
\newtheorem{cor}{Corollary} 
\newtheorem{prop}{Proposition}
\newtheorem{theorem}{Theorem}
\newtheorem{lemma}{Lemma}
\theoremstyle{definition} 
\newtheorem{ex}{Example}
\newenvironment{taggedtheorem}[1]
{\taggedtheoremx}
 {\endtaggedtheoremx}
 \newenvironment{taggedcorollary}[1]
{\taggedcorollaryx}
 {\endtaggedcorollaryx}
\renewenvironment{proof}[1][\proofname]{%
  \par\pushQED{\qed}\normalfont%
  \topsep6\p@\@plus6\p@\relax
  \trivlist\item[\hskip\labelsep\bfseries#1\@addpunct{.}]%
  \ignorespaces
}{%
  \popQED\endtrivlist\@endpefalse
}
\theoremstyle{remark} 
\newtheorem{rmk}{Remark}
\let\emptyset\varnothing
\DeclareMathOperator{\E}{\mathds{E}}
\renewcommand{\P}{\mathds{P}}
\newcommand{\F}{\mathcal{F}}
\newcommand{\R}{\mathds{R}}
\newcommand{\cQ}{\mathcal{Q}}
\newcommand{\N}{\mathds{N}}
\newcommand{\M}{\mathcal{M}}
\newcommand{\lb}{\underline}
\newcommand{\ub}{\overline}
\renewcommand{\1}{\mathds{1}}
\renewcommand{\epsilon}{\varepsilon}
\renewcommand*\d{\mathop{}\!\mathrm{d}}
\theoremstyle{definition}
\crefname{manualasm}{assumption}{assumptions}
\crefname{taggedtheoremx}{theorem}{theorems}
\crefname{taggedcorollaryx}{corollary}{corollaries}
\crefname{cor}{corollary}{corollaries}
\crefname{claim}{claim}{claims}
\crefname{ex}{example}{examples}
\crefname{defn}{definition}{definitions}
\crefname{rmk}{remark}{remarks}
\crefname{alg}{algorithm}{algorithms}
\begin{document}
\title{\textbf{Multidimensional Monotonicity and Economic Applications}\thanks{We thank Dirk Bergemann, Ben Brooks, Piotr Dworczak, Thomas Gresik, Marina Halac, Emir Kamenica, Andreas Kleiner, Maciej Kotowski, Ilia Krasikov, Elliot Lipnowski, Doron Ravid, Phil Reny, Fedor Sandormirskiy, Philipp Strack and Jidong Zhou, as well as seminar and conference audiences at University of Chicago, Yale, EC'25, Notre Dame, Sciences Po, NYU Shanghai for their valuable comments and suggestions. We also thank Tsong-Hong Tenn for his valuable research assistance.}
}
\author{Frank Yang\thanks{Department of Economics, Harvard University. Email: fyang@fas.harvard.edu} \and Kai Hao Yang\thanks{School of Management, Yale University. Email: kaihao.yang@yale.edu}}
\date{\today
}
\maketitle
\begin{abstract}
We characterize the extreme points of multidimensional monotone functions from $[0,1]^n$ to $[0,1]$, as well as the extreme points of the set of one-dimensional marginals of these functions. These characterizations lead to new results for various mechanism design and information design problems, including public good provision with interdependent values; interim efficient bilateral trade mechanisms; mechanism (anti) equivalence; asymmetric reduced form auctions; and optimal private private information structure. \\

\noindent\textbf{Keywords:} Multidimensional monotone functions, extreme points, mechanism design, information design. 

\end{abstract}
\setcounter{page}{1}
\newpage

\addtocontents{toc}{\protect\setcounter{tocdepth}{2}} 
\newpage

\section{Introduction}

Many important economic problems involve the design of allocation rules or information structures. In an inspiring recent contribution, \citet*{kleiner2021extreme} show that many mechanism and information design problems involving a single agent, or symmetric agents, reduce to finding the extreme points of one-dimensional monotone functions subject to a majorization constraint. This extreme-point perspective is very useful as it uncovers the structural properties of these design problems and immediately leads to characterizations of optimal mechanisms.

In many settings, however, the agents are naturally asymmetric (e.g., bilateral trade), and their payoffs depend on the entire ex-post allocation rule (e.g., public good provision with interdependent values). 
In this paper, we show that a common structure behind these problems involves selecting a multidimensional monotone function that satisfies certain properties. 

We distinguish between these two cases: \textit{(i)} the payoff-relevant information depends on the entire ex-post allocation rule, and \textit{(ii)} the payoff-relevant information depends only on the interim allocation rules, but they can be asymmetric. The structural properties of these problems can be understood via the extreme points of \textit{(i)} the set of multidimensional monotone functions from $[0, 1]^n$ to $[0, 1]$, and \textit{(ii)} the set of their one-dimensional marginals. We characterize the extreme points of these sets, and show that these characterizations immediately lead to various new results in well-known mechanism design problems---including public good provision, bilateral trade, reduced form auctions, and mechanism equivalence---as well as in recent information design problems.

\paragraph{Abstract Results.}\hspace{-2mm}We start by describing our abstract results before discussing the economic applications in detail. A subset $A \subseteq [0,1]^n$ is an \textit{\textbf{upward-closed set}} (henceforth, an \textit{\textbf{up-set}}) if for any $x \leq y$, $x \in A$ implies $y \in A$ (see \Cref{fig:up-set}), where the partial order $\leq$ on $[0,1]^n$ is the usual component-wise order. A function $f:[0, 1]^n \rightarrow [0, 1]$ is \textit{\textbf{monotone}} if $f(x) \leq f(y)$ for all $x \leq y$.  The set of monotone functions is a compact and convex set, and thus admits extreme points. An elegant result of \citet{choquet1954theory} identifies the extreme points of this set. These extreme points are given by indicator functions $\1_{A}$ defined on an up-set $A \subseteq [0, 1]^n$. In the one-dimensional case, these extreme points are exactly the one-jump step functions $\{\1_{[k,1]}\}_{k \in [0,1]}$. Combined with Choquet's integral representation theorem, this implies that any monotone function can be represented as a mixture over such up-set functions.
However, when $n \geq 2$, such a representation is not unique in general, and the complexity of the mixture naturally increases with $n$. 

Our first result (\Cref{thm:ordered-up-sets}) refines Choquet's representation and reduces its complexity: We show that any monotone function can, in fact, be represented as a mixture over a collection of \textit{\textbf{nested}} up-set functions (i.e., these up-sets are totally ordered by set inclusion). Moreover, such a nesting representation is unique. This representation is useful as it allows us to equivalently view a multidimensional monotone function as a totally ordered chain of up-sets coupled with a one-dimensional probability distribution. Our second result (\Cref{thm:nested-up-sets}) exploits this structure and well-known results regarding one-dimensional distributions to characterize the extreme points of multidimensional monotone functions subject to $m$ linear constraints: they can always be represented as a mixture of $m+1$ nested up-sets (see \Cref{fig:nested-up-set}).

\begin{figure}[t]

  \begin{subfigure}[b]{0.45\textwidth}
    \centering
    \includegraphics[scale=0.25]{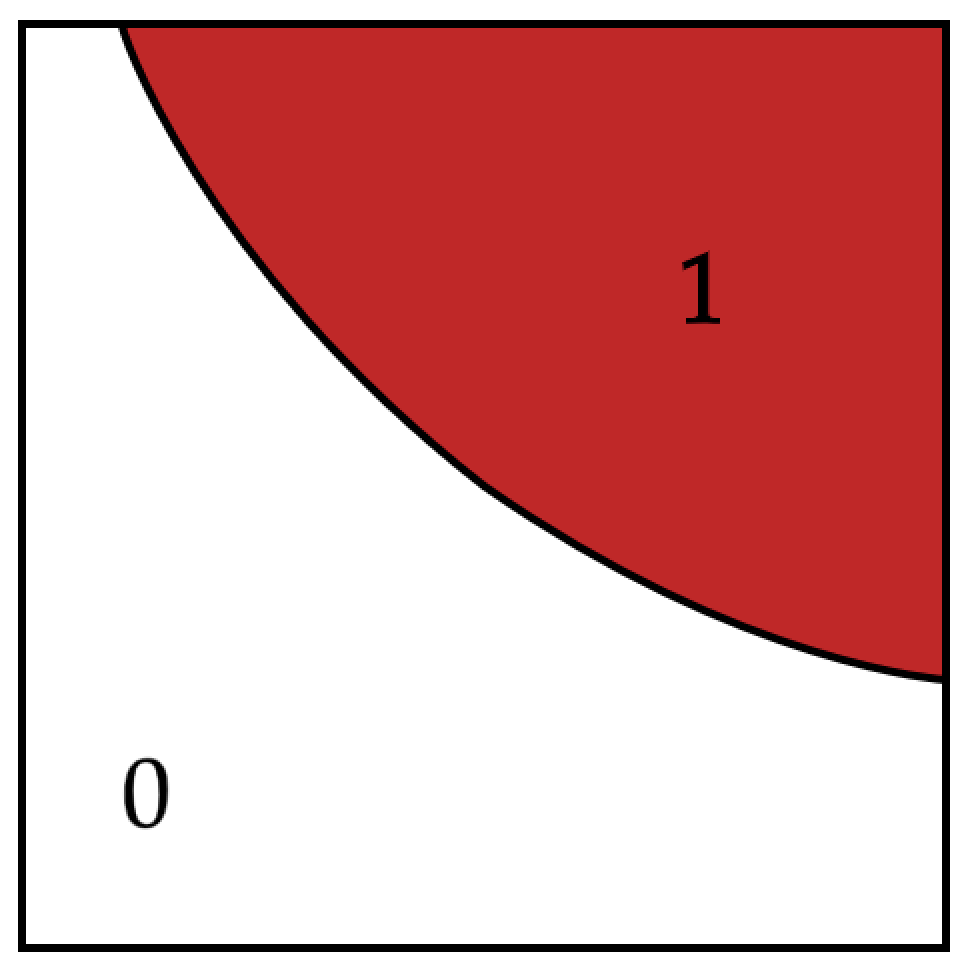} 
    \caption{An up-set function $\1_{A}$    \label{fig:up-set}}
  \end{subfigure}
  \quad 
  \begin{subfigure}[b]{0.45\textwidth}
    \centering
    \includegraphics[scale=0.25]{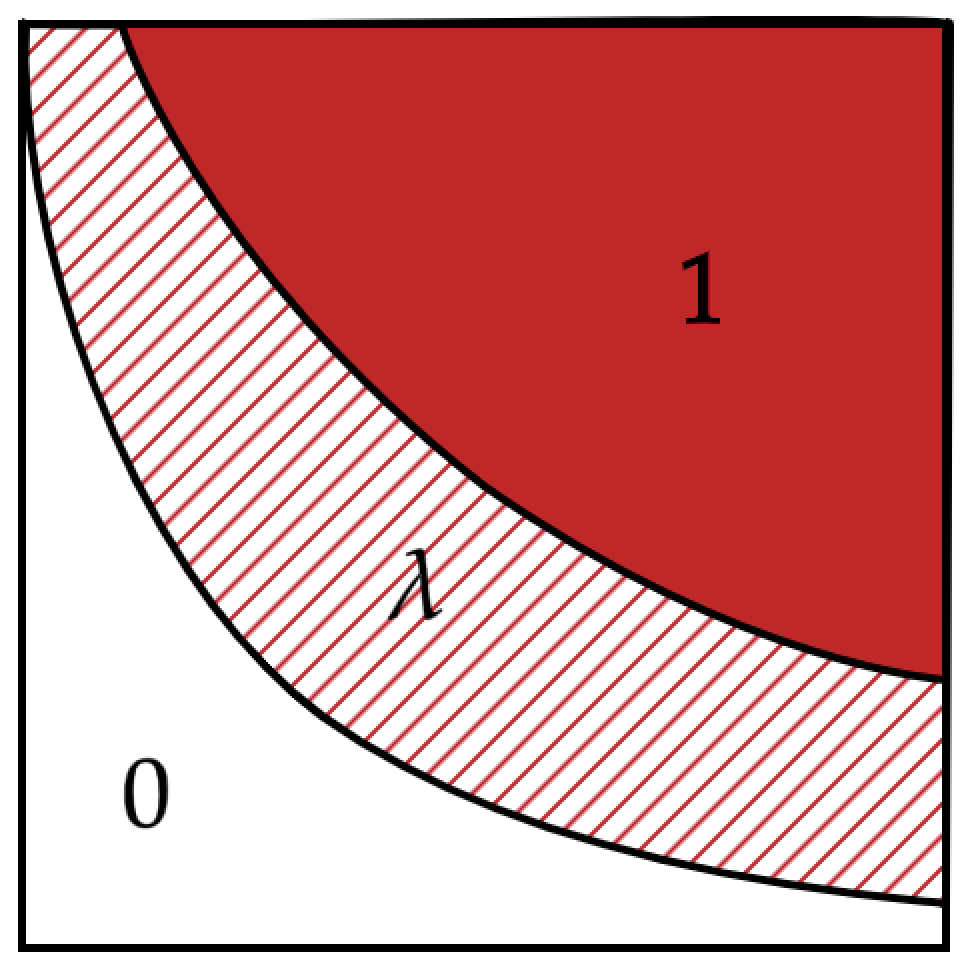}
    \caption{Mixture of two nested up-set functions\label{fig:nested-up-set}}
  \end{subfigure}
  
  \vspace{0.5cm} 
  \begin{subfigure}[b]{0.45\textwidth}
    \centering
    \includegraphics[scale=0.25]{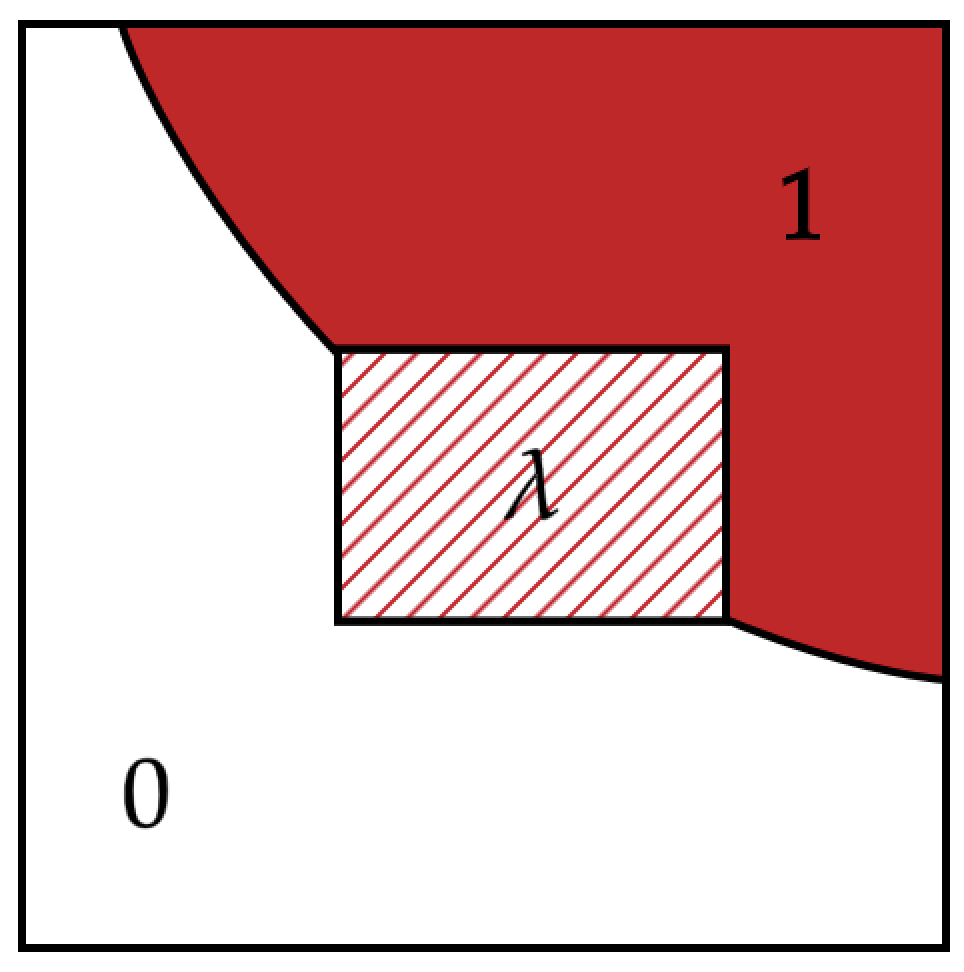} 
    \caption{Nested up-sets differ by a rectangle\label{fig:rectangle}}
 
  \end{subfigure}
  \quad 
  \begin{subfigure}[b]{0.45\textwidth}
    \centering
    \includegraphics[scale=0.25]{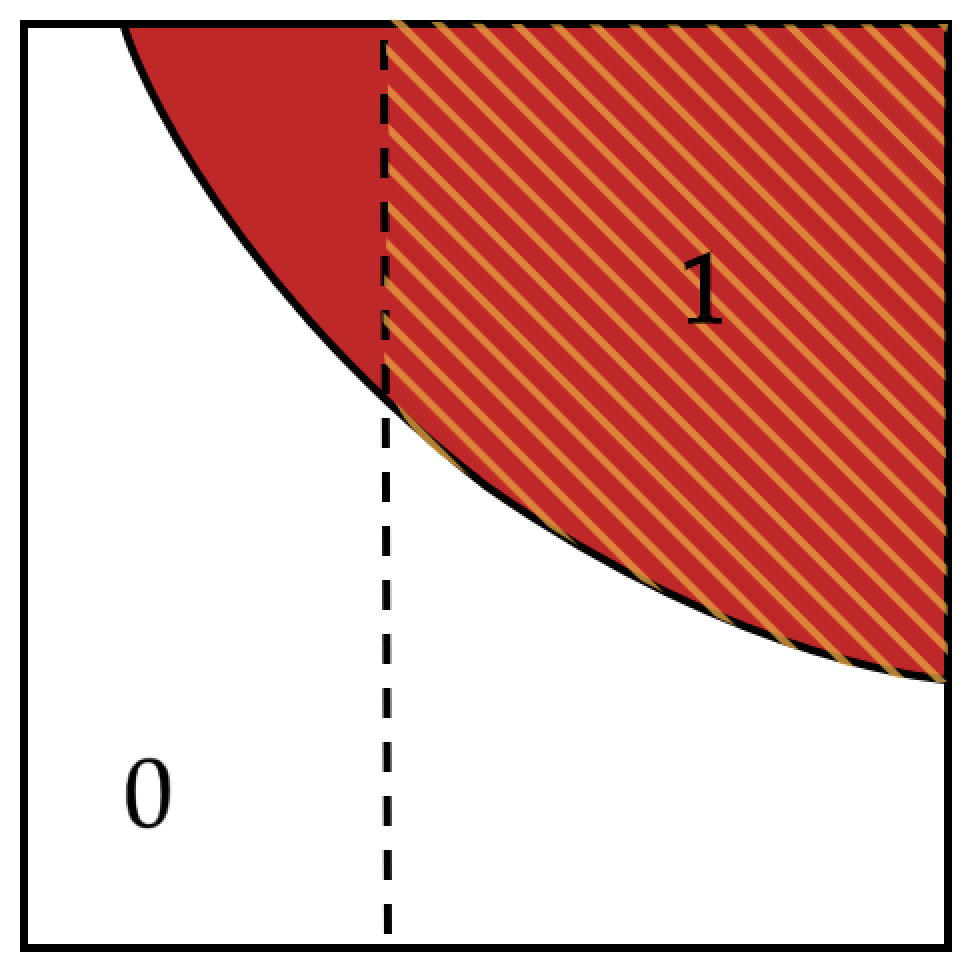} 
    \caption{Truncation of up-set for one marginal\label{fig:truncation}}
  \end{subfigure}
  \caption{Illustration of extreme points for different sets of functions}
  \label{fig:extreme-points}
\end{figure}

Building on these characterizations, our next set of results studies what we call \textit{\textbf{rationalizable monotone functions}}. 
For any integrable function $f: [0, 1]^n \rightarrow [0, 1]$, we say that the marginal of $f$ in dimension $i$ is the function $x_i \mapsto \int f(x) \d x_{-i}$. A collection of one-dimensional monotone functions $q:=(q_i)_{i=1,\dots n}$ is \textit{\textbf{rationalizable}} if there exists some function $f:[0, 1]^n \rightarrow [0, 1]$ such that $q_i$'s are exactly the marginals of $f$. Note that this definition does not require the function $f$ to be monotone. However, an elegant result by \citet{gutmann1991existence} shows that any such $q$ that is rationalizable by some function must also be rationalizable by a \textit{monotone} function. We exploit this connection and our characterizations of multidimensional monotone functions to characterize the extreme points of the set of rationalizable monotone functions $q$. In particular, we show that  (\Cref{thm:extreme-are-up-sets}) every extreme point of the set of rationalizable monotone functions must be the one-dimensional marginals of an up-set function $\1_{A}$, and every extreme point of rationalizable monotone functions subject to $m$ linear constraints must also be the one-dimensional marginals of a mixture of at most $m+1$ such $\1_{A_k}$, where the sets $\{A_k\}_k$ are nested up-sets.

In the case of rationalizable pairs $q:= (q_1, q_2)$, we further sharpen \Cref{thm:extreme-are-up-sets} to a necessary and sufficient characterization of the extreme points. Without any linear constraint, we show that (\Cref{thm:rationalized-upsets}) the one-dimensional marginals of the up-set functions $\1_A$ are exactly the extreme points of the rationalizable pairs, and these extreme points must be  \textit{\textbf{uniquely}} rationalized. With one linear constraint, we show that (\Cref{thm:rectangle}) the extreme points of rationalizable pairs are characterized by a mixture of $\1_{A_1}$ and $\1_{A_2}$, where $A_1$ and $A_2$ are nested up-sets that differ by a single \textit{\textbf{rectangle}} (see \Cref{fig:rectangle}). Moreover, these extreme points must be \textit{\textbf{uniquely}} rationalized among all monotone functions.

In the case of rationalizable pairs, it is known that a pair of monotone functions $(q_1, q_2)$ is rationalizable if and only if $q_1$ is majorized by the conjugate of $q_2$ (\citealt{gutmann1991existence}).\footnote{That is, $q_1(z)$ is majorized by $\hat{q}_2(z):=1-q_2^{-1}(1-z)$.} Building on this connection, our characterization immediately gives the extreme points of this joint majorization set: They are exactly the pairs where the majorization constraint binds everywhere (\Cref{prop:reverse-majorization}). This structure is very different from the one-dimensional case (\citealt*{kleiner2021extreme}), precisely because these majorization constraints are imposed on a pair of monotone functions, linking them between the original space $q_1$ and the inverse space $q^{-1}_2$. Using this logic, we also characterize the extreme points of the joint majorization set under a linear constraint (\Cref{prop:square-majorization}), and the extreme points of the weak joint majorization set (\Cref{prop:weak}) where the majorization relation is replaced by weak majorization. For weak joint majorization, any extreme point $(q_1, q_2)$ has the structure that $q_2$ is the marginal of an up-set function $\1_A$, and $q_1$ is the marginal of another up-set function $\1_{\tilde{A}}$ where $\tilde{A}$ is a \textit{\textbf{truncation}} of $A$ (see \Cref{fig:truncation}).

\paragraph{Economic Applications.}\hspace{-2mm}We now discuss the economic applications of our results (see \Cref{sec:application} for details). We apply our abstract results to obtain new results in four well-known mechanism design problems and one recent information design problem.  

Our first application studies the classical problem of public good provision, where a designer decides whether to implement a public project subject to an ex-ante budget constraint for financing the project. Almost all existing studies of this problem focus on the case of independent private values (see e.g., \citealt{guth1986private}). In practice, the values of the agents for a public good seem likely to be interdependent and correlated. However, under Bayesian incentive compatibility, it is well known that the designer can generically achieve the first best by constructing bets \'{a} la \citet{cremer1988full}, as long as there are correlated signals. This may not be satisfying, due to the fragility of such mechanisms and calls for other approaches to this problem (\citealt{brooks2023robust}). A natural candidate is to study the design of \textit{\textbf{ex-post}} incentive-compatible (IC) and individually rational (IR) mechanisms. By a standard argument, an allocation rule is ex-post IC if and only if it is monotone in each agent's signal. Because the objective is a linear functional, our extreme point results (recall \Cref{fig:nested-up-set}) immediately imply that the optimal ex-post IC mechanism is simply a \textit{\textbf{two-threshold policy}}, where the designer aggregates the signals of each agent into a score, and if the score is higher than the high threshold, the project is implemented for sure, and if it is lower than the high threshold but higher than the low threshold, the project is implemented with some positive probability (and otherwise abandoned). The specifics of the scoring rule and the two thresholds will depend on the correlation structure of the agents' signals, but the optimal mechanism turns out to always have this simple structure, regardless of the correlation structure. Even though the Myersonian ironing approach does not apply due to the multidimensional monotonicity constraint, the extreme point structure allows us to fully characterize the optimal mechanism in specific settings. In particular, we illustrate our approach in a setting with limited negative externalities and derive a new mechanism (\textit{\textbf{externality-refund mechanism}}) for implementing the second-best allocation.  

Our second application studies another well-known mechanism design problem, bilateral trade \'{a} la \citet{myerson1983efficient}. An important question in bilateral trade is to understand its \textit{\textbf{interim-efficient frontier}} (\citealt{myerson1984two}). It has been shown that in some special cases depending on the welfare weights and type distributions, the interim-efficient mechanism takes a familiar form of trading if and only if the suitably defined virtual value is above the virtual cost (\citealt{ledyard1999characterization,ledyard2007general}). However, to understand the frontier, we have to characterize the optimal mechanism for arbitrary welfare weights and type distributions. Because any interim-efficient mechanism must maximize a linear functional over the pair of interim allocation rules for the buyer and seller, and the ex-post budget constraint can be represented as a linear constraint, our extreme point results (recall \Cref{fig:rectangle}) immediately yield a new class of trading mechanisms that can attain the optimal welfare for any welfare weights and type distributions. We call these trading mechanisms \textit{\textbf{markup-pooling mechanisms}}: In such a mechanism, trade is implemented if and only if the value $v$ is above a monotone transformation of the cost $\phi(c)$ (the \textit{\textbf{markup function}}), with the exception that when $c$ falls into a single interval $I$ (the \textit{\textbf{pooling interval}}), we \textit{resample} the cost from the two ends of the interval, and execute the trade if the buyer's value $v$ is above the marked up, resampled cost $\phi(\tilde{c})$. 

Our third application studies the question of interim equivalence between mechanisms initiated by \citet{manelli2010bayesian}. Two mechanisms are \textit{\textbf{payoff-equivalent}} if they yield the same ex-ante expected surplus and the same interim expected utilities for all agents. From a series of fundamental contributions (\citealt{manelli2010bayesian}; \citealt{gershkov2013equivalence}; \citealt{chen2019equivalence}), it is known that \textit{(i)} for any Bayesian incentive compatible (BIC) mechanism, there exists a payoff-equivalent dominant strategy incentive compatible (DIC) mechanism (\textit{\textbf{BIC-DIC equivalence}}); and \textit{(ii)} for any stochastic mechanism, there exists a payoff-equivalent deterministic mechanism (\textit{\textbf{stochastic-deterministic equivalence}}). These mechanism equivalence results are very helpful, because DIC mechanisms are robust to changes in the agents' beliefs and deterministic mechanisms do not require a credible randomization device. However, as another application of our results, we show that there is a strong conflict between asking for DIC mechanisms and asking for deterministic mechanisms. In particular, we present a \textit{\textbf{mechanism anti-equivalence theorem}} for social choice problems with two agents and two alternatives: A BIC mechanism is payoff-equivalent to a deterministic DIC mechanism if and only if they are ex-post equivalent. It turns out that, although every BIC mechanism has a payoff-equivalent DIC counterpart and every stochastic mechanism has a payoff-equivalent deterministic counterpart, the interim payoffs under a mechanism that is both deterministic and DIC cannot be replicated by \textit{any} other mechanism. Consequently, among DIC mechanisms, a stochastic mechanism and a deterministic mechanism must yield different interim payoffs; among deterministic mechanisms, a BIC mechanism and a DIC mechanism must yield different interim payoffs, unless they are ex-post equivalent. Thus, our result implies that the BIC-DIC equivalence necessarily relies on randomization, and the stochastic-deterministic equivalence necessarily relies on non-DIC mechanisms.

Our fourth application studies asymmetric reduced form auctions. A \textit{\textbf{reduced form auction}} is a collection of interim allocation probabilities that is implementable by an ex-post auction allocation rule (\citealt{matthews1984implementability}; \citealt{border1991implementation}).  \citet*{kleiner2021extreme} characterize the extreme points of symmetric reduced form auctions, by representing the symmetric Border's condition as a majorization constraint on a one-dimensional monotone function. For asymmetric auctions, suppose that there are two bidders. Let $q_i$ be the allocation probability to bidder $i$ as a function of bidder $i$'s type in the quantile space. Exploiting the extreme-point structure of the weak joint majorization set, we show that $(q_1, q_2)$ is implementable by an auction if and only if $q_1$ is weakly majorized by the inverse of $q_2$, with every extreme point of that set characterized by $q_1 = q_2^{-1} \1_{[k_1, 1]}, q_2 = q_1^{-1} \1_{[k_2, 1]}$ for some $k_1, k_2$.\footnote{Using a different approach, \citet*{he2024private} also connect the two-bidder asymmetric reduced form auctions to majorization, though they do not characterize the extreme points of this set.} These extreme points correspond to \Cref{fig:truncation} (up to relabeling), where the left truncation of the up-set $A$ is the region where the seller keeps the object, and the remaining up-set is the region of types that we allocate to bidder $1$ (similar truncation for the complement of $A$ describes the allocation to bidder $2$). These extreme mechanisms include the standard Myersonian auction (\citealt{Myerson1981}) as well as non-auction mechanisms such as sequential posted prices (\citealt{gershkov2021theory}). We apply this extreme point characterization to auctions with endogenous values studied in \citet{gershkov2021theory}, where revenue maximization becomes a convex objective, and show how our characterization helps illuminate their examples of optimal asymmetric mechanisms.

Our last application studies private private information structures introduced by \citet*{he2024private}. Consider a binary state and $n$ agents with a common prior about the state. A \textit{\textbf{private}} \textit{\textbf{private}} information structure sends an independent signal to each agent so that they update their belief about the state but not about other agents' signals. \citet*{he2024private} characterize the Blackwell-Pareto undominated information structures and use a garbling argument to characterize the set of \textit{\textbf{feasible belief distributions}} that can be induced by some private private information structure. Complementary to their analysis, we use our extreme point results to provide a convex-hull characterization of the \textit{\textbf{feasible belief quantiles}} under private private information. Unlike the feasible belief distributions, the set of feasible belief quantiles is convex and admits extreme points, which we also characterize using our abstract results. In particular, we show that any extreme feasible belief quantiles can be implemented by a \textit{\textbf{nested bi-upset}} signal structure, where we randomize between two nested up-sets in $[0, 1]^n$, and uniformly draw signals from there. This is because a collection of belief quantile functions is feasible if and only if it is rationalizable in our sense and satisfies one linear constraint (i.e., the Bayes plausibility constraint). As an immediate consequence, for many information design problems subject to the privacy constraint, a nested bi-upset signal would be optimal. Moreover, as we show, a pair $(q_1, q_2)$ is an extreme point of the feasible belief quantiles if and only if it can be implemented using a nested bi-upset signal where the nested up-sets differ by at most a rectangle. Combining with the characterization of Blackwell-Pareto frontier from \citet*{he2024private}, our result then implies that in the case of two agents, any such optimal private private information---even though it may not be Blackwell-Pareto undominated---can be attained by \textit{(i)} selecting a Blackwell-Pareto undominated signal and then \textit{(ii)} applying a single \textit{\textbf{interval-pooling}} for one of the two agents.

\subsection{Related Literature}

We study the convex set of multidimensional monotone functions and their marginals. In particular, we characterize the extreme points of the set of multidimensional monotone functions subject to finitely many linear constraints, as well as the extreme points of the rationalizable monotone functions. \citet{choquet1954theory} provides a well-known characterization of extreme points of monotone functions, stating that a monotone function is an extreme point if and only if it is an indicator function defined on an up-set. We use Choquet's characterization to further characterize the extreme points of monotone functions subject to finitely many linear constraints. Our characterization sharpens the set of candidate extreme points derived from \citet{Winkler1988} to finite mixtures of indicator functions defined on \textit{nested} up-sets.  

Our results on rationalizable monotone functions build on and contribute to the mathematics literature on the existence of a joint density for given marginals (\citealt{lorenz1949}; 
 \citealt{kellerer1961funktionen}; \citealt{Strassen1965}). In particular, \citet{gutmann1991existence} provide conditions on when a collection of monotone functions is rationalizable, and show the equivalence between the rationalizability by an arbitrary function, by a monotone function, and by an indicator function.\footnote{Specifically, \citet{gutmann1991existence} show these three results using complementary but separate approaches: the rationalizability characterization follows from \citet{Strassen1965}, which in turn is built upon supporting hyperplanes, instead of extreme points, of convex sets; rationalizability by monotone functions follows from a convex minimization program, which is later  generalized by \citet{gershkov2013equivalence}; whereas the proof of rationalizability by indicator functions relies on the existence and the characterization of extreme points of a different convex set compared to our focus---the set of (not necessarily monotone) functions on $[0,1]^n$ with \textit{fixed} marginals.} Exploiting this equivalence and our results on multidimensional monotone functions, we characterize the extreme points of the set of rationalizable monotone functions, with and without linear constraints. The unique rationalizability property of our extreme points in the two-dimensional case also connects to and builds on results from the mathematical tomography literature on ``sets of uniqueness'' (\citealt{lorenz1949,fishburn1990sets,kemperman1991sets,kellerer1993uniqueness}), which studies necessary and sufficient conditions for a function to be uniquely determined by its marginals.

In economics, several recent papers characterize extreme points of one-dimensional monotone functions under various constraints and derive economic applications.
\citet*{kleiner2021extreme} characterize the extreme points of the set of nondecreasing functions that majorize, or are majorized by, some fixed nondecreasing function. They then derive applications on reduced-form implementation, BIC-DIC equivalence, delegation, and persuasion.\footnote{See also \citet{arieli2023optimal}.} \citet{nikzad2023constrained} and \citet{candogan2023disclosure} further characterize the extreme points of these convex sets subject to finitely many linear constraints. Meanwhile, \citet{yang2024monotone} characterize the extreme points of the set of nondecreasing functions that are pointwise bounded by two other nondecreasing functions, and derive applications to gerrymandering, quantile-based persuasion, overconfidence, and security design. Our results both complement these recent characterizations and provide new applications in various settings as well. 

A one-dimensional monotone function can be equivalently viewed as a probability distribution. However, this equivalence breaks down for multidimensional monotone functions.\footnote{Indeed, recall that a bivariate CDF $F$ must also satisfy $F(x'_1,x'_2) - F(x_1,x'_2) - F(x'_1,x_2) + F(x_1,x_2) \ge 0$ for all $x_1 < x'_1$ and $x_2 < x'_2$.} Complementary to our analysis of multidimensional monotone functions, \citet{kleiner2024extreme} characterize the extreme points of multidimensional probability distributions that are dominated by a fixed distribution in the convex stochastic order. Also complementary to our extreme point approach, \citet*{bedard2023multivariate} study a problem of maximizing a concave functional over multidimensional monotone functions and develop a computationally tractable method using a notion of multivariate majorization and a sweeping operator \'{a} la  \citet{Rochet1998}. 

Our analysis also connects to the literature on multidimensional screening. \citet{Manelli2007} introduce the extreme point approach to study multidimensional screening. They show that the well-known result of a posted price being optimal for selling one good (\citealt{Myerson1981}; \citealt{riley1983optimal}) is an immediate consequence of the extreme points of one-dimensional monotone functions. They then show that with multiple goods, the optimal mechanism is drastically more complex, as it generally involves many lotteries. The key difference between our setting and theirs is that with multiple dimensions, monotonicity no longer characterizes the implementable allocations.\footnote{Indeed, the implementable allocations are characterized by \textit{\textbf{cyclic monotonicity}} (\citealt{rochet1987necessary}). However, in certain screening problems without monetary transfers, the implementable allocations can be characterized by multidimensional monotonicity; see e.g. \citet{vravosinos2024bidimensional}.} Indeed, the indirect utility functions in multidimensional screening are monotone \textit{convex} functions, which turn out to have drastically different structures and are known to admit a dense set of extreme points (\citealt{johansen1974extremal}; \citealt{bronshtein1978extremal}; \citealt{lahr2024extreme}).\footnote{These results are also the reason why recent studies of multidimensional screening do not adopt an extreme point approach but provide conditions on primitives such that a simple mechanism is optimal (see e.g. \citealt{haghpanah2021pure}; \citealt{yang2022costly}).} In comparison, we relax the convexity constraint and use multidimensional monotone functions to study design problems with multiple agents.\footnote{We focus on Bayesian Nash or ex post Nash equilibria as solution concepts. See \citet*{rudov2025extreme} for conditions under which such equilibria are extreme points of the \textit{correlated} equilibria.} 

Several recent economics papers also build on results from mathematical tomography including \citet{gershkov2013equivalence} who study BIC-DIC equivalence, \citet{chen2019equivalence} who study stochastic-deterministic equivalence, and \citet*{he2024private} who study information structures with independent signals.\footnote{See also \citet{arieli2021feasible} who study feasible posterior distributions when signals need \textit{not} be independent. Their question corresponds to the study of coherent distributions (see \citealt{dawid1995coherent}), for which the characterization of extreme points is an open question (\citealt{burdzy2020bounds}).} As our applications to these settings show, our extreme point characterizations of rationalizable monotone functions lead to a variety of new results.

The remainder of the paper proceeds as follows. \Cref{sec:multidimensional} presents our results on multidimensional monotone functions. \Cref{sec:rationalizable} presents our results on rationalizable monotone functions. \Cref{sec:application} presents our economic applications. \Cref{sec:conclusion} concludes. \Cref{app:proof} provides omitted proofs. \Cref{app:additional} provides additional results.

\section{Multidimensional Monotone Functions}\label{sec:multidimensional}
Endow the set of integrable functions from $[0,1]^n \to [0,1]$ with the $L^1$ norm. Let $\mathcal{F}$ be the set of monotone functions from $[0,1]^n$ to $[0,1]$. That is, a function $f:[0,1]^n \to [0,1]$ is in $\mathcal{F}$ if $f(x) \leq f(y)$ for any $x,y \in [0,1]^n$ such that $x \leq y$.  Clearly, $\mathcal{F} \subseteq L^1([0,1]^n)$ is a convex set. Moreover, by a suitable version of Helly's selection theorem and the dominated convergence theorem, $\F$ is compact.\footnote{See, e.g., \citet{leonov1996total} for a multidimensional version of Helly's selection theorem.} 

We say that a set $A \subseteq [0,1]^n$ is an \emph{\textbf{upward-closed set}} (or simply an \emph{\textbf{up-set}}) if, for any $x \in A$, $y \geq x$ implies $y \in A$. Clearly, an indicator function $\1_A$ is monotone if and only if $A$ is an up-set. Given a totally ordered index set $\mathcal{I}$, a family of up-sets $\{A_i\}_{i \in \mathcal{I}}$ is \emph{\textbf{nested}} if $A_i \subseteq A_{i'}$ for all $i,i' \in \mathcal{I}$ such that $i<i'$. Our results pertain to the properties of the extreme points of sets related to $\F$. To begin with, we first note the following well-known observation.

\begin{lemma}[\citealt{choquet1954theory}, Theorem 40.1]\label{thm:choquet}
$f \in \mathcal{F}$ is an extreme point of $\mathcal{F}$ if and only if $f=\1_A$ for some up-set $A \subseteq [0,1]^n$.   
\end{lemma}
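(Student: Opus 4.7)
The plan is to prove both directions of the characterization. The easy direction follows directly from the extreme structure of $\{0,1\}$-valued functions: if $f = \1_A$ for an up-set $A$ and $f = \lambda g + (1-\lambda) h$ with $g, h \in \mathcal{F}$ and $\lambda \in (0,1)$, then since $g(x), h(x) \in [0,1]$ pointwise, the constraint $f(x) \in \{0,1\}$ a.e.\ forces $g = h = f$ a.e.\ on both $A$ and its complement. Hence $f$ is an extreme point of $\mathcal{F}$.

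For the converse, the key device is the layer-cake representation. Any $f \in \mathcal{F}$ satisfies $f(x) = \int_0^1 \1_{A_t}(x)\,\d t$, where $A_t := \{x : f(x) \ge t\}$ is an up-set for every $t \in (0, 1)$ by the monotonicity of $f$. Splitting this integral at $t = 1/2$ and rescaling yields a decomposition $f = \tfrac{1}{2}(g + h)$ with
\begin{equation*}
g(x) := 2 \int_0^{1/2} \1_{A_t}(x)\,\d t = \min\{2 f(x),\, 1\} \quad \text{and} \quad h(x) := 2 \int_{1/2}^{1} \1_{A_t}(x)\,\d t = \max\{2 f(x) - 1,\, 0\}.
\end{equation*}
Both $g$ and $h$ lie in $\mathcal{F}$, either by direct inspection of the closed-form expressions above (which are visibly monotone and $[0,1]$-valued) or by viewing each as a continuous mixture of up-set indicators, themselves elements of $\mathcal{F}$. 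If $f$ is an extreme point, then $f = \tfrac{1}{2}(g + h)$ forces $g = h = f$; in particular $f = \min\{2f, 1\}$ a.e., which is possible only when $f(x) \in \{0, 1\}$ a.e. Taking $A := \{x : f(x) = 1\}$, the monotonicity of $f$ immediately gives that $A$ is an up-set, and therefore $f = \1_A$ as claimed.

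The argument is essentially calculation-free once the layer-cake representation is in place, so the main (minor) obstacle is checking that the split $f = \tfrac{1}{2}(g + h)$ stays inside $\mathcal{F}$; this reduces to observing that $g$ and $h$ are averages of indicators of up-sets, which inherit monotonicity and the $[0,1]$-bound from each term in the mixture.
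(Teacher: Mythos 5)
Your proof is correct. Note that the paper does not include its own proof of this lemma---it simply cites Choquet's Theorem 40.1---so there is no argument in the text to compare against. Your two directions both go through: the easy direction uses the pointwise rigidity of $\{0,1\}$-valued functions under convex combination, and the converse uses the layer-cake decomposition $f(x)=\int_0^1 \1_{A_t}(x)\,\d t$ with $A_t=\{x:f(x)\ge t\}$, split at $t=\tfrac12$ into $g=\min\{2f,1\}$ and $h=\max\{2f-1,0\}$, both of which are visibly monotone and $[0,1]$-valued, hence in $\mathcal{F}$. If $f$ is extreme then $f=g=h$, which forces $f\in\{0,1\}$ a.e., and monotonicity of $f$ makes the superlevel set $\{f=1\}$ an up-set. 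This is a standard and clean way to prove the Choquet characterization, and it correctly respects the a.e.\ identification under the $L^1$ topology the paper works with.
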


An immediate consequence of \Cref{thm:choquet} is that \emph{any} monotone function $f \in \F$ can be written as a mixture of indicator functions defined by up-sets (by Choquet's integral representation theorem). Nonetheless, since up-sets in $[0,1]^n$ are only partially ordered under set-inclusion when $n \geq 2$, these indicator functions may not be ordered. As a result, mixtures implied by \Cref{thm:choquet} could potentially involve mixing up-sets that are related in complicated ways. However, as \Cref{thm:ordered-up-sets} below shows, there is always a way to represent a monotone function $f \in \F$ as a mixture of indicator functions defined by a family of \emph{nested} up-sets. 

\begin{prop}[Nesting representation]\label{thm:ordered-up-sets}
A function $f:[0, 1]^n \rightarrow [0, 1]$ is monotone if and only if there exist a collection of nested up-sets $\{A_r\}_{r\in[0, 1]}$ and a probability measure $\mu \in \Delta([0, 1])$ such that for all $x \in [0,1]^n$,
\[
f(x) = \int_0^1 \1_{A_r}(x) \mu(\d r) \,.
\]
Moreover, if $\big(\{A_r\}_{r\in[0, 1]}, \mu\big)$ and $\big(\{A'_r\}_{r\in[0, 1]}, \mu'\big)$ represent the same monotone function in the above sense, then they must induce the same distribution over the same family of nested up-sets. 
\end{prop}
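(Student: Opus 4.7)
The ``if'' direction is immediate: each $\1_{A_r}$ is monotone because $A_r$ is an up-set, and any mixture of monotone $[0,1]$-valued functions is again monotone and $[0,1]$-valued. For the ``only if'' direction, the plan is to use the canonical layer-cake construction. Given monotone $f$, set $A_r := \{x \in [0,1]^n : f(x) \geq 1-r\}$ for $r \in [0,1]$ and take $\mu$ to be Lebesgue measure on $[0,1]$. Monotonicity of $f$ makes each $A_r$ an up-set, and $r < r'$ implies $A_r \subseteq A_{r'}$, so the family is nested in the sense of the paper. Fubini then gives $\int_0^1 \1_{A_r}(x)\, \d r = f(x)$, since for each fixed $x$ the integrand equals $1$ exactly on $[1-f(x),1]$.

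For the uniqueness claim, let $(\{A_r\}, \mu)$ be any representation of $f$. The plan is to show that $A_r$ is essentially determined by $f$ alone, so that the pushforward of $\mu$ under $r \mapsto A_r$ coincides with that induced by the canonical construction above. The key observation is that, since $\{A_r\}$ is nested, for each $x$ the set $\{r : x \in A_r\} \subseteq [0,1]$ is upward-closed and therefore an interval of the form $[\tau(x),1]$ or $(\tau(x),1]$. Setting $G(t) := \mu([t,1])$, this yields $f(x) = G(\tau(x))$ (modulo the treatment of atoms of $\mu$), so $\tau$ is pinned down by $f$. Unwinding, for every $r$,
\[
\{x : f(x) > G(r)\} \,\subseteq\, A_r \,\subseteq\, \{x : f(x) \geq G(r)\},
\]
so $A_r$ coincides with the super-level set $\{f \geq G(r)\}$ up to the single level set $\{f = G(r)\}$.

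The main obstacle I anticipate is reconciling two simultaneous sources of irregularity: atoms of $\mu$ (which create jumps in $G$) and positive-measure level sets of $f$ (which create ambiguity in the sandwich above). The resolution is that $f$ admits at most countably many level sets of positive Lebesgue measure, since they are pairwise disjoint subsets of $[0,1]^n$, and each atom of $\mu$ of mass $\alpha$ is forced to correspond to exactly one such level set with height-jump $\alpha$. For Lebesgue-a.e.\ $r$ the ambiguity vanishes, and the remaining atoms in the $r$-parameter are distributed identically across representations. Hence any two representations $(\{A_r\}, \mu)$ and $(\{A'_r\}, \mu')$ push forward to the same distribution on the chain of super-level sets of $f$ (identifying up-sets that differ by a Lebesgue-null set), which is the uniqueness statement.
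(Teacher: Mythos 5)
Your proof is correct and takes essentially the same route as the paper: the ``only if'' direction is the identical layer-cake construction $A_r=\{f\ge 1-r\}$ with $\mu$ Lebesgue, and the uniqueness direction reduces, in both cases, to showing that any nested representation must coincide (up to null sets) with the super-level-set family of $f$. The only real difference is the bookkeeping in the uniqueness step: the paper reparametrizes by the quantile function $M^{-1}$ of $\mu$ so that the weight becomes Lebesgue, writes the boundary of $A_{M^{-1}(r)}$ as a nondecreasing function $g(1-r,\cdot)$, and identifies $f$ with the generalized inverse $g^{-1}$, concluding $A_{M^{-1}(r)}=\{f\ge 1-r\}$; you instead stay in the original parametrization and sandwich $A_r$ between $\{f>G(r)\}$ and $\{f\ge G(r)\}$ with $G(r)=\mu([r,1])$, which is an equivalent way to see the same fact. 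One small caution on your last paragraph: the assertion that each atom of $\mu$ ``is forced to correspond to exactly one'' positive-measure level set of $f$ with matching height-jump is not quite right as stated --- the canonical Lebesgue representation is atomless even when $f$ has flat spots, and conversely an atom of $\mu$ forces a positive-measure level set only when it actually moves $A_r$ by a set of positive measure. This imprecision is inessential to the conclusion, since once one passes to the pushforward of $\mu$ under $r\mapsto A_r$ the countability of problematic levels plus the sandwich give the result, but the wording would need to be tightened to constitute a full proof.
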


Although Choquet's extreme point theorem (\Cref{thm:choquet}) implies that any monotone function $f \in \F$ can be represented as a mixture of indicator functions defined by up-sets, \Cref{thm:ordered-up-sets} further refines the representation and ensures that there must be such a mixture that is supported on a family of nested up-sets. Moreover, \Cref{thm:ordered-up-sets} asserts that such a nesting representation must be unique, even though Choquet's integral representation is in general not unique. This shows that a multidimensional monotone function can be equivalently viewed as a \emph{one}-dimensional probability distribution on a family of nested up-sets. 

An immediate consequence of this single-dimensional representation of a multidimensional monotone function is a characterization of extreme points of multidimensional monotone functions subject to finitely many affine constraints. Fix any finite collection $\{\phi^j\}_{j=1}^m$ of essentially bounded functions on $[0,1]^n$, and a finite collection $\{\eta^j\}_{j=1}^m$ of real numbers. Consider the following subset $\overline{\F}$ of $\F$: 
\[
\overline{\F}:=\left\{f \in \F: \int_{[0,1]^n} f(x) \phi^j(x) \d x \leq \eta^j\,, \forall j \in \{1,\ldots,m\}\right\}\,.
\]
From \Cref{thm:choquet}, it is well known that, by Proposition 2.1 of \citet{Winkler1988}, any extreme point of $\F$ is a mixture of at most $m+1$ indicator functions on up-sets. However, there are many different up-sets in $[0,1]^n$, and mixtures of them, even if there are only finitely many, could be complex. Nonetheless, \Cref{thm:ordered-up-sets} allows us to further sharpen the characterization, as stated by \Cref{thm:nested-up-sets} below.

\begin{theorem}\label{thm:nested-up-sets}
Every extreme point of $\overline{\F}$ is a mixture of at most $m+1$ indicator functions $\{\1_{A_j}\}_{j=1}^{m+1}$ where $\{A_j\}_{j=1}^{m+1}$ are nested up-sets. 
\end{theorem}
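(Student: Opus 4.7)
The plan is to use \Cref{thm:ordered-up-sets} as a bridge that converts this infinite-dimensional extreme-point problem on $\overline{\F}$ into a one-dimensional extreme-point problem on probability measures on $[0,1]$, to which the classical bound on the support of constrained extreme measures (Proposition 2.1 of \citet{Winkler1988}) applies directly. I would fix an arbitrary extreme point $f \in \overline{\F}$ and, by \Cref{thm:ordered-up-sets}, obtain the unique nested family $\{A_r\}_{r \in [0,1]}$ and probability measure $\mu_f \in \Delta([0,1])$ satisfying $f(x) = \int_0^1 \1_{A_r}(x)\,\mu_f(\d r)$. Holding the family $\{A_r\}$ fixed, I would then define the affine map $T : \Delta([0,1]) \to \F$ by $T(\nu)(x) := \int_0^1 \1_{A_r}(x)\,\nu(\d r)$. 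By Fubini, each affine constraint defining $\overline{\F}$ pulls back to an affine constraint on $\nu$ of the form $\int_0^1 \gamma^j(r)\,\nu(\d r) \leq \eta^j$, where $\gamma^j(r) := \int_{[0,1]^n} \phi^j(x)\,\1_{A_r}(x)\,\d x$ is Borel measurable in $r$ because $r \mapsto \1_{A_r}(x)$ is monotone. Let $M \subseteq \Delta([0,1])$ denote the resulting slice; $\mu_f \in M$ by construction.

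The crux of the argument lies in transferring the extremality of $f$ in $\overline{\F}$ to extremality of $\mu_f$ in $M$, and the main obstacle is establishing injectivity of $T$. For this I would invoke the uniqueness clause of \Cref{thm:ordered-up-sets}: if $T(\nu_1) = T(\nu_2)$ for some $\nu_1, \nu_2 \in \Delta([0,1])$, then $\nu_1$ and $\nu_2$ are both valid representing measures for the same monotone function built on the same nested family $\{A_r\}$, so uniqueness forces $\nu_1 = \nu_2$. With injectivity in hand, any decomposition $\mu_f = \tfrac{1}{2}(\nu_1 + \nu_2)$ inside $M$ yields $f = \tfrac{1}{2}(T(\nu_1) + T(\nu_2))$ with $T(\nu_i) \in \overline{\F}$; extremality of $f$ forces $T(\nu_1) = T(\nu_2) = f$, and injectivity then forces $\nu_1 = \nu_2 = \mu_f$. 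Hence $\mu_f$ is an extreme point of $M$.

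To close, I would observe that $M$ is the probability simplex $\Delta([0,1])$ cut by $m$ affine inequalities, and the extreme points of $\Delta([0,1])$ are precisely the Dirac masses $\{\delta_r\}_{r \in [0,1]}$. Proposition 2.1 of \citet{Winkler1988} therefore yields $\mu_f = \sum_{k=1}^{m+1} \lambda_k \delta_{r_k}$ for some nonnegative weights $\lambda_k$ summing to one and some points $r_1, \dots, r_{m+1}$ in $[0,1]$. Applying $T$ gives $f = \sum_{k=1}^{m+1} \lambda_k \1_{A_{r_k}}$, and the family $\{A_{r_k}\}_{k=1}^{m+1}$ is automatically nested as a subcollection of $\{A_r\}_{r \in [0,1]}$, which is the desired representation.
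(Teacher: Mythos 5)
Your overall strategy---collapse the multidimensional problem to a one-dimensional measure problem via the nesting representation, then apply Winkler's bound---is the same as the paper's. But your argument has a genuine gap at the injectivity step, and it is not a minor one: the map $T$ is \emph{not} injective in general, so you cannot conclude that $\mu_f$ itself is an extreme point of $M$.

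To see why injectivity fails, note that the nested family $\{A_r\}_{r\in[0,1]}$ produced by \Cref{thm:ordered-up-sets} will typically have repeated up-sets when $f$ has gaps in its range. Concretely, if $f = \1_A$ for an up-set $A$, the canonical construction gives $A_r = A$ for all $r \in [0,1)$ and $A_1 = [0,1]^n$, and the representing measure is Lebesgue. Then for any two measures $\nu_1, \nu_2$ concentrated on $[0,1)$ we have $T(\nu_1) = T(\nu_2) = \1_A$, so $T$ is far from injective; in particular the Lebesgue representing measure $\mu_f$ is certainly not extreme in $\Delta([0,1])$, yet $f$ is extreme in $\F$. You have also misread the uniqueness clause of \Cref{thm:ordered-up-sets}: it guarantees uniqueness of the induced distribution \emph{over the up-sets themselves}, i.e.\ of the pushforward of $\mu$ under $r \mapsto A_r$, not uniqueness of $\mu$ as a measure on the index set $[0,1]$. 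These are genuinely different objects precisely when $r \mapsto A_r$ is not injective.

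The repair is exactly what the paper does, and it is a small change to your argument. Rather than trying to show that your particular $\mu_f$ is extreme, observe that $T$ restricted to $M$ is a continuous affine map from the compact convex set $M$ onto a convex subset $T(M)$ of $\overline{\F}$ containing $f$; since $f$ is extreme in $\overline{\F}$ it is extreme in $T(M)$, and then \Cref{lem:extreme-projection} hands you \emph{some} extreme point $\mu^* \in M$ with $T(\mu^*) = f$. You then apply Winkler's Proposition 2.1 to $\mu^*$ instead of to $\mu_f$, and the rest of your argument goes through unchanged. The affine mapping lemma is doing real work here: it is exactly what lets you sidestep the failure of injectivity.
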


\Cref{thm:nested-up-sets} further simplifies the structure of extreme points of $\overline{\F}$ implied by \Cref{thm:choquet} and the result of \citet{Winkler1988}. In addition to being a mixture of at most $m+1$ indicator functions defined on up-sets, \Cref{thm:nested-up-sets} states that these up-sets must be nested.  An immediate consequence is that any extreme point $f$ of $\overline{\F}$ can be written as 
\[
f(x)=\sum_{j=1}^{m+1}p_j\1\{g(x) \in [k_{j-1},k_{j})\}\,,
\]
where $0 \leq p_1 \leq p_2 \leq \cdots \leq p_{m+1}=1$, $0 \leq k_0 \leq k_1 \leq \cdots \leq k_{m+1}=1$, and $g:[0,1]^n \to [0,1]$ is a monotone function. That is, $f$ increases from $0$, through $\{p_1,\ldots,p_{m}\}$, to $1$ as a monotone aggregator $g \in \F$ increases from $0$ to $1$. Note that without the nesting structure identified by \Cref{thm:nested-up-sets}, $m+1$ many mixtures of arbitrary up-sets would in general induce a monotone function that has $2^{m+1}$ many distinct values. Thus, \Cref{thm:nested-up-sets} provides an exponential reduction in the complexity of the extreme points by exploiting the nesting structure from \Cref{thm:ordered-up-sets}. 

Clearly, not all mixtures of $m+1$ indicator functions defined by nested up-sets are extreme points of $\overline{\F}$. Indeed, if $\phi^j \equiv 0$ and $\eta^j=0$ for all $j$, then the only extreme points of $\overline{\F}$ are single indicator functions defined on up-sets, according to \Cref{thm:choquet}. However, if the constraints are tight and have enough independence, then every mixture of at most $m+1$ indicator functions defined on nested up-sets is an extreme point of $\overline{\F}$. Specifically, consider any family $\{\phi^j\}_{j=1}^m$ of essentially bounded functions on $[0,1]^n$ such that for any nonempty disjoint subsets $\{B_j\}_{j=1}^k$, $B_j \subseteq [0,1]^n$, $k \leq m$, the vectors 
\[
\left\{\begin{pmatrix}
\int_{B_1}\phi^1(x) \d x\\
\vdots\\
\int_{B_1}\phi^m(x) \d x
\end{pmatrix},\cdots,\begin{pmatrix}
\int_{B_k}\phi^1(x) \d x\\
\vdots\\
\int_{B_k}\phi^m(x) \d x
\end{pmatrix}\right\}
\]
are linearly independent. Let 
\[
\overline{\F}^\star:=\left\{f \in \F: \int_{[0,1]^n} f(x) \phi^j(x) \d x = \eta^j\,, \forall j \in \{1,\ldots,m\}\right\}\,.
\]
We then have the following converse of \Cref{thm:nested-up-sets}:
\begin{prop}\label{prop:necessity-joint}
$f \in \overline{\F}^\star$ is an extreme point if and only if $f=\sum_{j=1}^k \lambda_j\1_{A_j}$ for some nested up-sets $\{A_j\}_{j=1}^k$, $k \leq m$, and for some $\{\lambda_j\}_{j=1}^k$ such that $\lambda_j \geq 0$ and $\sum_{j=1}^k \lambda_j=1$.  
\end{prop}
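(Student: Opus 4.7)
I prove the two directions separately. For the ``only if'' direction ($\Rightarrow$), I appeal to \Cref{thm:nested-up-sets}: since $\overline{\F}^\star$ is contained in the inequality-constrained $\overline{\F}$ (each equality $\int f\phi^j \d x = \eta^j$ being a pair of inequalities), any extreme point of $\overline{\F}^\star$ inherits the structure of a mixture of nested up-set indicators. A refined count, using the normalization $\sum_j \lambda_j = 1$ together with the $m$ equality constraints, trims the bound to $k \leq m$ nontrivial up-sets (by dropping a redundant endpoint when a weight is zero).

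For the ``if'' direction ($\Leftarrow$), suppose $f = \sum_{j=1}^k \lambda_j \1_{A_j}$ with nested $A_1 \subsetneq \cdots \subsetneq A_k$, $\lambda_j > 0$, $\sum_j \lambda_j = 1$, $k \leq m$, and $f \in \overline{\F}^\star$. Suppose $f = \tfrac12(g_1 + g_2)$ with $g_1, g_2 \in \overline{\F}^\star$, let $h := g_1 - f = f - g_2$, and aim to show $h \equiv 0$. The argument has three steps. Step (1): Because $f \equiv 1$ on $A_1$ and $f \equiv 0$ on $[0,1]^n \setminus A_k$, the constraints $g_1, g_2 \in [0,1]$ combined with $g_1 + g_2 = 2f$ force $h \equiv 0$ outside the interior annular strips $B_l := A_l \setminus A_{l-1}$ for $l = 2, \ldots, k$. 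Step (2): I would establish that $h$ is a.e.\ constant on each $B_l$, say $h \equiv \alpha_l$. Step (3): The equality constraints reduce to $\sum_{l=2}^k \alpha_l \int_{B_l} \phi^j \d x = 0$ for each $j = 1, \ldots, m$. Since $\{B_l\}_{l=2}^k$ is a collection of $k-1 \leq m-1$ disjoint subsets of $[0,1]^n$, the linear independence hypothesis (applied after extending to any $m$ disjoint subsets and passing to a subcollection) makes the coefficient vectors $(\int_{B_l} \phi^j)_{j=1}^m$ linearly independent in $\R^m$, forcing every $\alpha_l = 0$ and hence $g_1 = g_2 = f$.

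The main obstacle is Step (2): promoting the claim ``$h$ is strip-constant'' past the difficulty that the interior strips $B_l$ need not be chain-connected in $[0,1]^n$. Monotonicity of $g_1 = f + h$ and $g_2 = f - h$, combined with $f$ being constant on $B_l$, only forces $h$ to be simultaneously non-decreasing and non-increasing along any comparable chain inside $B_l$, i.e., locally constant on chain-connected components. To close this gap I would invoke the uniqueness of the nesting representation from \Cref{thm:ordered-up-sets}: the canonical nesting representation of $f$ is supported exactly on $\{A_1, \ldots, A_k\}$, and applying the unique nesting representations to $g_1$ and $g_2$ together with the averaging identity $f = \tfrac12(g_1 + g_2)$ forces the representing measures of $g_1$ and $g_2$ to be supported on $\{A_1, \ldots, A_k\}$ as well. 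Consequently $g_i = \sum_j \lambda_j^{(i)} \1_{A_j}$ for some weights with $\tfrac12(\lambda_j^{(1)} + \lambda_j^{(2)}) = \lambda_j$, which makes $h$ automatically strip-constant and feeds directly into Step (3) to conclude $f$ is extreme.
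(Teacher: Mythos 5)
Your plan follows the paper's architecture: necessity via \Cref{thm:nested-up-sets} (which applies because $\overline{\F}^\star$ is a face of the inequality-constrained set, a point worth making explicit since inclusion alone does not pass down extremality), and sufficiency by ruling out nontrivial perturbations $h$ with $f\pm h\in\overline{\F}^\star$, which reduces to strip-constancy of $h$ plus the linear-independence hypothesis. You are right to single out strip-constancy as the genuine obstacle. The paper asserts it in one line from monotonicity, but monotonicity of $f\pm h$ only forces $h$ to be constant on each chain-connected component of $A_{j+1}\setminus A_j$, and such strips need not be chain-connected even in $[0,1]^2$: choose nested up-sets whose lower boundaries both jump at a common abscissa, so the strip splits into two mutually incomparable rectangles; one can then perturb on each rectangle independently while preserving both monotonicity of $f\pm h$ and the $m$ linear constraints.

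Your proposed repair via \Cref{thm:ordered-up-sets} does not close this gap. If $h$ is distinct on two incomparable pieces of a strip, some level set of $g_i:=f\pm h$ equals $A_j$ joined to just one of the pieces --- an up-set strictly between $A_j$ and $A_{j+1}$ that is not in $\{A_1,\dots,A_k\}$. Hence the canonical nesting representations of $g_1$ and $g_2$ are supported off $\{A_1,\dots,A_k\}$, and the identity $f=\tfrac{1}{2}(g_1+g_2)$ writes $f$ as a mixture of two chains of up-sets whose union is not itself a chain, which is precisely the case where the uniqueness clause of \Cref{thm:ordered-up-sets} gives no leverage. So your Step (2) remains open exactly where it is implicitly open in the paper's proof. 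A smaller point: the ``drop a redundant endpoint'' argument for $k\le m$ does not actually trim the Winkler count; with $n=m=1$ and $\phi^1\equiv1$, the set of nondecreasing $q$ with $\int q=\eta$ has extreme points $\lambda\1_{[a_1,1]}+(1-\lambda)\1_{[a_2,1]}$, $\lambda\in(0,1)$, which require $k=2=m+1$, and the paper's own proof implicitly treats $k$ up to $m+1$.
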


\begin{figure}
\centering
\tikzset{
solid node/.style={circle,draw,inner sep=1.25,fill=black},
hollow node/.style={circle,draw,inner sep=1.25}
}
\begin{subfigure}[b]{0.4\linewidth}
\centering
    \includegraphics[scale=0.25]{b.png}
\caption{An Extreme Point}
\label{fig1a}
\end{subfigure}%
\begin{subfigure}[b]{0.4\linewidth}
\centering
 \includegraphics[scale=0.25]{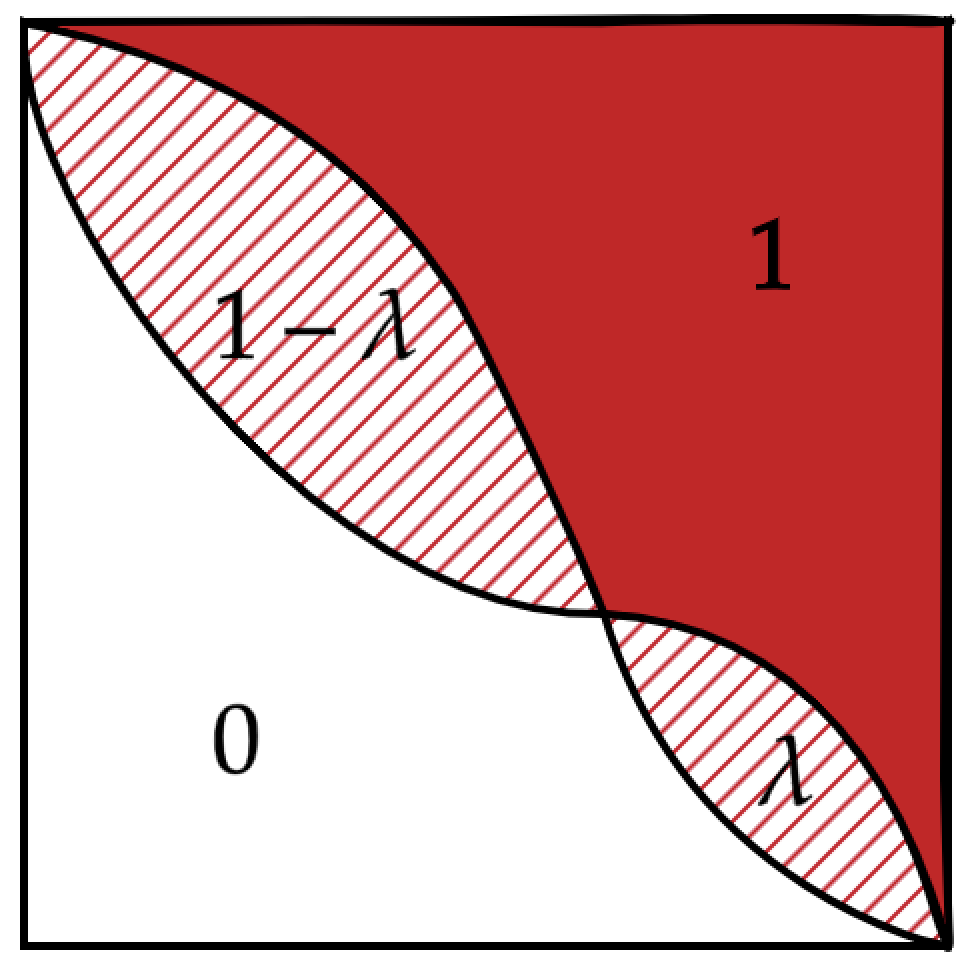}
\caption{Not an Extreme Point}
\label{fig1b}
\end{subfigure}
\caption{Extreme Points of $\overline{\F}$}
\label{fig1}
\end{figure}

\Cref{fig1} illustrates the structure of extreme points of $\overline{\F}$, in the case of $m=1$. \Cref{fig1a} depicts a monotone function $f=\lambda\1_A +(1-\lambda) \1_{A'}$, where $A' \subseteq A$ are nested up-sets. According to \Cref{thm:nested-up-sets}, any extreme point of $\overline{\F}$ must exhibit the structure given by \Cref{fig1a}. \Cref{prop:necessity-joint} further ensures that any such function must be an extreme point of $\overline{\F}^\star$ when the constraints have enough independence. On the other hand, \Cref{fig1b} depicts a monotone function $f=\lambda \1_A+(1-\lambda) \1_{A'}$, where $A$ and $A'$ are up-sets but are not nested. According to \Cref{thm:nested-up-sets}, such functions can never be an extreme point of $\overline{\F}$, even if they are mixtures of two extreme points of $\F$. 

Let $\overline{\F}_{\emph{sym}} \subset \overline{\F}$ be the set of symmetric monotone functions satisfying the linear constraints. A set $A$ is \textit{\textbf{symmetric}} if $\1_{A}$ is a symmetric function. It turns out that \Cref{thm:nested-up-sets} continues to hold when, in addition, the symmetry constraint is imposed:

\begin{prop}\label{prop:sym-nested-up-sets}
Every extreme point of $\overline{\F}_{\emph{sym}}$ is a mixture of at most $m+1$ indicator functions $\{\1_{A_j}\}_{j=1}^{m+1}$ where $\{A_j\}_{j=1}^{m+1}$ are nested symmetric up-sets. 
\end{prop}

\section{Rationalizable Monotone Functions}\label{sec:rationalizable}
Next, we characterize extreme points of monotone functions that can be ``rationalized'' by some (not necessarily monotone) function $f:[0,1]^n \to [0,1]$. For any tuple $q=(q_1,\ldots,q_n)$ of nondecreasing, left-continuous functions from $[0,1]$ to $[0,1]$, we say that $q$ is \textit{\textbf{rationalizable}} if there exists a function $f:[0,1]^n \to [0,1]$ (not necessarily monotone) such that 
\[
q_i(x_i)=\int_{[0,1]^{n-1}}f(x)\d x_{-i}\,,
\]
for all $i$ and all $x_i \in [0,1]$. For any such $f$ and $q$, we also say that $q$ is \emph{\textbf{rationalized}} by $f$. Let $\cQ$ be the set of all rationalizable tuples of nondecreasing left-continuous functions. Fix a family $\{\psi_i^j\}_{1\leq i\leq n}^{1\leq j \leq m}$ of essentially bounded functions on $[0,1]$ and a family $\{\eta^j\}_{j=1}^m$ of real numbers, let $\overline{\cQ}$ be the set of all tuples of rationalizable monotone functions subject to $m$ affine constraints: 
\[
\overline{\cQ}:=\left\{q \in \cQ: \sum_{i=1}^n \int_0^1 q_i(x_i)\psi^j_i(x_i)\d x_i \leq \eta^j\,, \forall j \in \{1,\ldots,m\}\right\}\,.
\]

While we do not impose monotonicity on the joint function $f$, an elegant result of \citet{gutmann1991existence} shows that any rationalizable monotone tuple $q$ can be rationalized by a multidimensional monotone $f \in \F$:
\begin{lemma}[\citealt{gutmann1991existence}, Theorem 5; \citealt{gershkov2013equivalence}, Theorem 1]\label{lem:gutmann-monotone}
A tuple of one-dimensional monotone functions $q=(q_1,\dots, q_n)$ is rationalizable if and only if it can be rationalized by a multidimensional monotone function $f \in \F$. 
\end{lemma}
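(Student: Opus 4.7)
The ``if'' direction is immediate. For the ``only if'' direction, I would take a variational approach. Let $\A$ denote the set of measurable $g : [0,1]^n \to [0,1]$ whose one-dimensional marginals are $(q_1, \ldots, q_n)$. By hypothesis $\A$ is non-empty; it is convex and, in the weak-$*$ topology of $L^\infty([0,1]^n)$, closed (each marginal constraint is weak-$*$ continuous) and bounded, hence compact by Banach--Alaoglu. Fix a strictly supermodular $h : [0,1]^n \to \R$, e.g.\ $h(x) = \prod_{i=1}^n x_i$, and maximize the linear functional $\Phi(g) := \int g(x) h(x)\,\d x$ over $\A$; a maximizer $f^*$ exists by weak-$*$ continuity and compactness.

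The heart of the argument is to show that $f^*$ is monotone almost everywhere. The tool is a marginal-preserving \emph{four-point swap}: for any two incomparable points $a, b \in [0,1]^n$ with meet $c = a \wedge b$ and join $d = a \vee b$, translate a small cube $V$ to each of the four points and modify $f^*$ by $-\epsilon$ on the boxes around $a, b$ and by $+\epsilon$ on the boxes around $c, d$. Because along each coordinate $i$ one has $\{c_i, d_i\} = \{\min(a_i, b_i), \max(a_i, b_i)\}$, the $\pm\epsilon$ contributions cancel on every slice, so every marginal $q_i$ is preserved; the swap is feasible in $\A$ whenever $f^* > 0$ near $a, b$ and $f^* < 1$ near $c, d$. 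Strict supermodularity gives $h(c) + h(d) > h(a) + h(b)$, making such a swap strictly $\Phi$-improving. Maximality of $f^*$ therefore yields the obstruction: for a.e.\ incomparable pair $(a, b)$, either $f^* = 0$ near $a$ or $b$, or $f^* = 1$ near $c$ or $d$.

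To upgrade this a.e.\ obstruction to pointwise monotonicity of a chosen representative of $f^*$, I would work at Lebesgue density points. If $f^*$ were not monotone, one could find comparable $x < y$ with $f^*(x) > f^*(y)$, and by a density-point reduction refine to strict coordinate-wise inequality $x_i < y_i$. Choosing $a = (x_1, y_2, \ldots, y_n)$ and $b = (y_1, x_2, \ldots, x_n)$ gives $a \wedge b = x$ and $a \vee b = y$, and the obstruction above at this incomparable pair --- together with a case analysis on which of $\{f^*(x), f^*(y), f^*(a), f^*(b)\}$ saturate $\{0,1\}$, possibly after locally perturbing the witness pair within $\{0 < f^* < 1\}$ --- yields a contradiction. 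Thus $f^* \in \F$ rationalizes $q$.

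The main obstacle is precisely this last measure-theoretic step: upgrading the a.e.\ ``no upward swap'' condition to pointwise monotonicity while controlling the boundary cases $f^* \in \{0, 1\}$ (where the relevant swap is not directly available). A cleaner alternative that sidesteps these technicalities is to first take $f^*$ to be an extreme point of $\A$, which by a standard cyclic-perturbation argument along rectangular Latin-square cycles must be an indicator function $\1_S$; one then invokes a monotone-rearrangement statement at the level of sets --- provable by the same four-point swap now applied directly to indicators --- to produce an up-set $A$ with the same marginals as $S$, so that $\1_A \in \F$ rationalizes $q$.
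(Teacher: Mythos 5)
Both of your proposed routes have a gap that is deeper than the measure-theoretic technicality you flag at the end, and a simple example makes it concrete. Take $n=2$ with $q_1 \equiv q_2 \equiv \tfrac12$, and let $S := \{(x_1,x_2)\in[0,1]^2 : (x_1 + x_2) \bmod 1 \in [0,\tfrac12]\}$. Every horizontal and vertical slice of $S$ has measure $\tfrac12$, so $\1_S$ lies in your $\mathcal{A}$; a direct computation gives $\int_S x_1 x_2\,\d x = \tfrac18 = \int \tfrac12\, x_1 x_2\,\d x$, so $\1_S$ and the constant function $\tfrac12$ are \emph{both} maximizers of $\Phi$. Your four-point obstruction is genuinely satisfied at $\1_S$: for instance at $a=(0.1,0.3)$, $b=(0.3,0.1)$ the meet $c=(0.1,0.1)$ has $\1_S(c)=1$, so the swap is blocked. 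Yet $\1_S$ is not monotone, since $\1_S(0.1,0.1)=1 > 0 = \1_S(0.3,0.3)$. So the a.e.\ obstruction you derive is strictly weaker than monotonicity, and no density-point refinement or boundary case analysis can close the gap: the maximizer $f^*$ you select may simply be non-monotone. Supermodular maximization has too many ties to single out the monotone rationalizer. The ``cleaner alternative'' fails on the same example: any up-set $A \subseteq [0,1]^2$ with $\int_0^1 \1_A(x_1,x_2)\,\d x_2 \equiv \tfrac12$ must equal $[0,1]\times[\tfrac12,1]$ up to a null set, whose other marginal is $\1_{[\tfrac12,1]} \neq \tfrac12$; hence no up-set carries the marginals of $S$, and the ``monotone-rearrangement at the level of sets'' you invoke cannot exist. (Gutmann et al.'s indicator-rationalizability result concerns indicators of \emph{arbitrary} measurable sets, not up-sets, as the paper's related-literature footnote spells out.)

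This lemma is a cited result and the paper does not re-prove it. The proofs in the cited sources (Gutmann et al.\ 1991, Theorem 5; Gershkov et al.\ 2013, Theorem 1) take the opposite variational route: they \emph{minimize} a strictly convex functional such as $\int f^2$ over $\mathcal{A}$ rather than maximize a linear supermodular one. Strict convexity yields a unique minimizer, and that uniqueness is exactly what your linear objective lacks; one then shows the unique minimizer is monotone. On the example above this correctly isolates $f\equiv\tfrac12$, since $\int(\tfrac12)^2\,\d x = \tfrac14 < \tfrac12 = \int\1_S\,\d x$. If you want to keep a self-contained proof in this spirit, switching from your supermodular $\max$ to a strictly convex $\min$ is the repair.
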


As a consequence of \Cref{lem:gutmann-monotone}, the set $\cQ$ is exactly a linear projection of the set $\F$. \Cref{lem:extreme-projection} below connects the extreme points of a convex set to the extreme points of its linear projection.
\begin{lemma}[Affine mapping lemma]\label{lem:extreme-projection}
Let $X$ be a compact convex subset of a locally convex topological vector space, and let $Y$ be a topological vector space. For any continuous affine map $L:X \to Y$, we have
\[
\mathrm{ext}(L(X)) \subseteq L(\mathrm{ext}(X)) \,.
\]
\end{lemma}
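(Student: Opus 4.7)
The plan is to take an arbitrary extreme point $y \in \mathrm{ext}(L(X))$, form its fiber inside $X$, and extract from that fiber a point that is already extreme in $X$ itself. First I would set $F_y := L^{-1}(\{y\}) \cap X$ and record that $F_y$ is non-empty (since $y \in L(X)$), closed (since $L$ is continuous and $X$ is closed), convex (since $L$ is affine, so the preimage of a singleton is an affine set), and compact (as a closed subset of the compact set $X$). Because $F_y$ lies entirely within the locally convex ambient space housing $X$, the Krein--Milman theorem furnishes some $x \in \mathrm{ext}(F_y)$.

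Next I would upgrade $x$ from an extreme point of the fiber to an extreme point of $X$. Suppose $x = \lambda x_1 + (1-\lambda) x_2$ for some $x_1, x_2 \in X$ and $\lambda \in (0,1)$. Applying $L$ and using affinity yields $y = L(x) = \lambda L(x_1) + (1-\lambda) L(x_2)$, a convex combination of points of $L(X)$. Extremity of $y$ in $L(X)$ then forces $L(x_1) = L(x_2) = y$, so both $x_1$ and $x_2$ lie in $F_y$; extremity of $x$ inside $F_y$ in turn forces $x_1 = x_2 = x$. Hence $x \in \mathrm{ext}(X)$, and since $L(x) = y$, this exhibits $y \in L(\mathrm{ext}(X))$, completing the inclusion.

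I do not expect a substantive obstacle: the argument is essentially an application of Krein--Milman to the fiber coupled with a one-line lifting via the affinity of $L$. The only subtlety worth flagging is that $Y$ is merely a topological vector space and need not be locally convex, so Krein--Milman cannot be invoked inside $Y$; but the plan sidesteps this entirely, because Krein--Milman is only applied to $F_y$, which lives inside the locally convex ambient space of $X$.
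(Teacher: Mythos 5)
Your proposal is correct and matches the paper's proof essentially verbatim: both apply Krein--Milman to the fiber $L^{-1}(y)$ (nonempty, compact, convex inside the locally convex space containing $X$) to extract an extreme point of the fiber, and both then lift it to an extreme point of $X$ via the affinity of $L$ and the extremity of $y$ in $L(X)$. The remark you flag about Krein--Milman living in the domain rather than in $Y$ is exactly the right point and is implicit in the paper's argument.
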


Now, using \Cref{lem:gutmann-monotone} and \Cref{lem:extreme-projection}, we exploit \Cref{thm:choquet} and \Cref{thm:nested-up-sets} to characterize the extreme points of $\cQ$ and $\overline{\cQ}$:  

\begin{theorem}\label{thm:extreme-are-up-sets}
\mbox{}
\begin{compactenum}[(i)]
\item Every extreme point of $\cQ$ is rationalized by $\1_{A}$ for some up-set $A \subseteq [0,1]^n$.  
\item Every extreme point of $\overline{\cQ}$ is rationalized by a mixture of $\{\1_{A_j}\}_{j=1}^{m+1}$ for some nested up-sets $\{A_j\}_{j=1}^{m+1}$.
\end{compactenum}
\end{theorem}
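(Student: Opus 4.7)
The plan is to realize $\cQ$ and $\overline{\cQ}$ as continuous affine images of $\F$ and $\overline{\F}$, respectively, and then combine the affine mapping lemma (\Cref{lem:extreme-projection}) with the extreme-point characterizations already established (\Cref{thm:choquet} and \Cref{thm:nested-up-sets}).

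First, I would introduce the marginalization map $L: L^1([0,1]^n) \to \prod_{i=1}^n L^1([0,1])$ defined by $L(f)_i(x_i) := \int_{[0,1]^{n-1}} f(x) \d x_{-i}$. By Fubini's theorem this map is linear and continuous in the $L^1$ topology, and Gutmann's equivalence (\Cref{lem:gutmann-monotone}) says precisely that the rationalizable tuples coincide with the image of $\F$, i.e.\ $\cQ = L(\F)$. Next, I would rewrite the affine constraints defining $\overline{\cQ}$ as affine constraints on $\F$. For $q = L(f)$, Fubini gives
\[
\sum_{i=1}^n \int_0^1 q_i(x_i)\, \psi^j_i(x_i) \d x_i \;=\; \int_{[0,1]^n} f(x)\, \phi^j(x) \d x, \qquad \phi^j(x) := \sum_{i=1}^n \psi^j_i(x_i).
\]
Setting these $\phi^j$ as the constraint functions defining $\overline{\F}$ yields $\overline{\cQ} = L(\overline{\F})$. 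Since $\F$ and $\overline{\F}$ are compact convex subsets of $L^1([0,1]^n)$ (the latter being cut out of the former by continuous linear inequalities), \Cref{lem:extreme-projection} applies and gives
\[
\mathrm{ext}(\cQ) \subseteq L(\mathrm{ext}(\F)), \qquad \mathrm{ext}(\overline{\cQ}) \subseteq L(\mathrm{ext}(\overline{\F})).
\]

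Both conclusions now follow by plugging in the known structure of the extreme points on the right-hand side. For (i), Choquet's characterization (\Cref{thm:choquet}) says every $g \in \mathrm{ext}(\F)$ equals $\1_A$ for some up-set $A \subseteq [0,1]^n$, so every extreme point of $\cQ$ equals $L(\1_A)$, meaning it is rationalized by $\1_A$. For (ii), \Cref{thm:nested-up-sets} says every $g \in \mathrm{ext}(\overline{\F})$ is a mixture of at most $m+1$ indicators of nested up-sets, so every extreme point of $\overline{\cQ}$ is the marginal tuple of such a mixture, which is precisely the claimed rationalization.

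The only subtle step is the bookkeeping underlying the affine mapping lemma: one must ensure $\F$ and $\overline{\F}$ are compact in a locally convex topological vector space, which follows from the Helly--dominated-convergence argument already invoked in \Cref{sec:muldimensional} for $\F$, together with closedness of $\overline{\F}$ under the continuous linear constraints. Beyond that, the argument is essentially a transfer of previously established extreme-point structure through a continuous linear map, with Gutmann's lemma playing the crucial role of identifying the image as exactly $\cQ$ (rather than a strict superset).
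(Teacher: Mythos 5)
Your proof is correct and follows essentially the same route as the paper's: realize $\cQ$ and $\overline{\cQ}$ as continuous affine images of $\F$ and $\overline{\F}$ via the marginalization map, use Gutmann's equivalence to establish surjectivity, invoke the affine mapping lemma, and plug in the known extreme-point structure of $\F$ and $\overline{\F}$. The only cosmetic difference is that you invoke Fubini where the paper cites dominated convergence for continuity of $L$, and your identification $\overline{\cQ} = L(\overline{\F})$ compresses a two-step argument (Gutmann gives a monotone rationalizer; the Fubini identity shows that rationalizer lands in $\overline{\F}$) into one sentence, but the substance is identical.
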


One may also impose symmetry on $\mathcal{Q}$ and $\overline{\cQ}$, since \Cref{lem:gutmann-monotone} holds with the symmetry restriction (\citealt{gershkov2013equivalence}), and \Cref{thm:nested-up-sets} holds under symmetry (\Cref{prop:sym-nested-up-sets}), \Cref{thm:extreme-are-up-sets} continues to hold on $\cQ_{\text{sym}}$ and $\overline{\cQ}_{\text{sym}}$ with  nested symmetric up-sets. 

While \Cref{thm:extreme-are-up-sets} provides necessary conditions for the extreme points of $\cQ$ and $\overline{\cQ}$, these conditions may not be sufficient in general. There might be (mixtures of) indicator functions defined on (nested) up-sets that are not extreme points of $\cQ$ and $\overline{\cQ}$. Nonetheless, when $n=2$, these necessary conditions can be greatly sharpened. Below, we explore the cases where $n=2$ and $m \in \{0, 1\}$. 

\subsection{Rationalizable Monotone Pairs}
Our next set of results characterizes the extreme points of rationalizable functions in the case of $n=2$ with and without a linear constraint. Namely, $\cQ$ is the set of pairs $q=(q_1,q_2)$ of rationalizable monotone functions on $[0,1]$; and 
\[
\overline{\cQ}=\left\{q \in \cQ: \int_0^1 q_1(x_1)\psi_1(x_1)\d x_1+\int_0^1 q_2(x_2) \psi_2(x_2) \d x_2 \leq \eta\right\}\,,
\]
for some essentially bounded functions $\psi_1$ and $\psi_2$ and some real number $\eta$. 

When $n=2$, part $(i)$ of \Cref{thm:extreme-are-up-sets} becomes both necessary and sufficient. The sufficient condition also exhibits a notion of uniqueness,  as \Cref{thm:rationalized-upsets} below shows. To state the theorem, we say that $q \in \cQ$ is \textit{\textbf{uniquely rationalized}} by a function $f:[0,1]^2 \to [0,1]$ if for any function $\tilde{f}:[0,1]^2 \to [0,1]$ that also rationalizes $q$, it must be that $f\equiv \tilde{f}$ (up to measure zero).  

\begin{theorem}\label{thm:rationalized-upsets}
$q=(q_1,q_2)$ is an extreme point of $\cQ$ if and only if $q$ is rationalized by $\1_A$ for some up-set $A \subseteq [0,1]^2$. Moreover, every extreme point of $\cQ$ is uniquely rationalized. 
\end{theorem}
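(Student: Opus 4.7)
The ``only if'' direction is immediate from part (i) of \Cref{thm:extreme-are-up-sets}. The substantive content of the theorem is therefore the uniqueness statement; the converse ``if'' direction will then follow easily via \Cref{lem:gutmann-monotone}.

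For uniqueness, fix an up-set $A \subseteq [0,1]^2$ and, up to measure zero, represent it as $A = \{(x_1,x_2) : x_2 \geq \alpha(x_1)\}$ for some non-increasing $\alpha : [0,1] \to [0,1]$ (defined, e.g., as $\alpha(x_1) := \inf\{x_2 : (x_1,x_2) \in A\}$). Let $f : [0,1]^2 \to [0,1]$ have the same marginals as $\1_A$, and set $h := f - \1_A$. Then $h$ has vanishing row and column integrals, $h \geq 0$ a.e.\ on $A^c$ (since $\1_A = 0$ and $f \geq 0$ there), and $h \leq 0$ a.e.\ on $A$ (since $\1_A = 1$ and $f \leq 1$ there). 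Define $M(x_1,x_2) := \int_0^{x_1}\int_0^{x_2} h(s,t)\,\d t\,\d s$; the vanishing marginals force $M \equiv 0$ on the entire boundary of $[0,1]^2$.

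The key geometric observation is that, because $\alpha$ is non-increasing, for every $x_1^* \in [0,1]$ the rectangle $R_1 := [0,x_1^*] \times [0,\alpha(x_1^*))$ is contained in $A^c$, while the complementary rectangle $R_2 := [x_1^*, 1] \times [\alpha(x_1^*), 1]$ is contained in $A$. Consequently, $\int_{R_1} h = M(x_1^*, \alpha(x_1^*)) \geq 0$, while by inclusion--exclusion and the vanishing of $M$ on the boundary, $\int_{R_2} h = M(x_1^*, \alpha(x_1^*)) \leq 0$. Hence $M(x_1^*, \alpha(x_1^*)) = 0$, which combined with $h \geq 0$ on $R_1$ yields $h \equiv 0$ a.e.\ on $R_1$. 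Varying $x_1^*$ over a countable dense set (and taking a right-continuous version of $\alpha$ to absorb its at-most-countable jumps) covers $A^c$ up to a null set; a symmetric argument using $R_2$ handles $A$. Hence $f = \1_A$ a.e., establishing unique rationalization.

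The sufficiency is then short: if $q = \lambda q' + (1-\lambda) q''$ with $q', q'' \in \cQ$ and $\lambda \in (0,1)$, then \Cref{lem:gutmann-monotone} provides monotone rationalizers $f', f'' \in \F$ of $q'$ and $q''$, and their convex combination $\lambda f' + (1-\lambda) f''$ is a $[0,1]$-valued rationalizer of $q$, so by uniqueness it equals $\1_A$ a.e. Because $\1_A$ is $\{0,1\}$-valued, $f',f'' \in [0,1]$, and $\lambda \in (0,1)$, this forces $f' = f'' = \1_A$ a.e., whence $q' = q''$ and $q$ is extreme. I expect the uniqueness step --- in particular the rectangle-balance identity through $M(x_1^*,\alpha(x_1^*))$ --- to be the main obstacle; the only real technicality is passing from the pointwise identity at each $x_1^*$ to a null-set statement on all of $A^c$ and $A$, which requires the routine choice of a canonical representative for $\alpha$.
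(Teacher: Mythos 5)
Your proposal is correct, and it takes a genuinely different route from the paper's proof for both the uniqueness and the sufficiency ("if") direction. The paper proves sufficiency and uniqueness together by constructing an explicit additively separable functional $\phi_1(x_1)+\phi_2(x_2) = x_2 - g(x_1)$ that is positive on $A$, negative on $A^c$, and zero precisely on the graph of the boundary; it shows $\1_A$ is the \emph{unique} maximizer of $f\mapsto\int(\phi_1+\phi_2)f$ over \emph{all} $[0,1]$-valued functions (not just monotone ones), and reads off both the exposedness of $q$ in $\cQ$ (hence extremality) and unique rationalizability from strict optimality. You instead prove uniqueness first by a direct mass-balance argument on the potential $M(x_1,x_2)=\int_0^{x_1}\int_0^{x_2}h$, exploiting that any upper-left rectangle $R_1$ sits in $A^c$, its complementary lower-right rectangle $R_2$ sits in $A$, and the two share the same $\int h$ by inclusion--exclusion and vanishing marginals; since the signs of $h$ on $A^c$ and $A$ squeeze that common value to zero, $h\equiv 0$. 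Sufficiency then comes for free via \Cref{lem:gutmann-monotone} and the fact that a $\{0,1\}$-valued function cannot be a nontrivial convex combination of $[0,1]$-valued functions.

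Both arguments correctly exploit that two-dimensional up-sets are ``additive sets''; the paper encodes this as a separating additive functional, whereas you encode it as the geometric fact that the boundary curve of the up-set splits every reference rectangle into one piece in $A$ and one in $A^c$. The paper's route has the side benefit of directly yielding the stronger statement that every extreme point of $\cQ$ is \emph{exposed} (which the authors remark on in a footnote), while your route is more elementary and self-contained, avoiding the appeal to exposed points. One thing your write-up handles correctly but should be stated with care when transferring to the sufficiency step: the unique-rationalizability you establish is among \emph{all} $[0,1]$-valued functions, not just monotone ones — you use this when concluding that $\lambda f' + (1-\lambda)f''$ equals $\1_A$ a.e., and it is essential there because the convex combination of two monotone rationalizers is only being used as \emph{some} rationalizer.
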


For a general $n \in \N$, \Cref{thm:extreme-are-up-sets} states that any extreme point $q$ of $\cQ$ can be rationalized by an indicator function $\1_A$ defined on an up-set. It does not imply that any $q \in \cQ$ rationalized by indicator functions defined on an up-set must be an extreme point. However, when $n=2$, being rationalized by an indicator function defined on an up-set is both necessary and sufficient for $q$ to be an extreme point of $\cQ$, according to \Cref{thm:rationalized-upsets}. Perhaps surprisingly, the projection map that maps $\mathcal{F}$ to $\mathcal{Q}$ actually does not ``destroy'' extreme points in dimension $2$. 
Furthermore, any extreme point must be \textit{uniquely} rationalized by such a function---thus, any $q \in \cQ$ rationalized by some function $f$ that is \textit{not} an indicator function $\1_A$ defined on an up-set \textit{cannot} be an extreme point.

The proof of \Cref{thm:rationalized-upsets} proceeds as follows. It actually shows that if $q$ is rationalized by an indicator function defined on an up-set, then $q$ must be an \textit{\textbf{exposed point}} (the unique maximizer for some linear functional). Since every exposed point is an extreme point, the sufficiency part of \Cref{thm:rationalized-upsets} follows. The argument exploits the fact that when $n=2$, an up-set is an \textit{\textbf{additive set}}, in the sense of \citet{fishburn1990sets}.
Using this property, we can define an additively separable function $\phi$ on $[0, 1]^2$ such that $\1_{A}$ is the unique maximizer for the linear functional $f \mapsto \int \phi(x) f(x) \d x$, which in turn implies that the one-dimensional marginals of $\1_A$ must be an exposed point of $\mathcal{Q}$, by the additive separability of $\phi$. Moreover, for the unique rationalizability part of \Cref{thm:rationalized-upsets}, we observe that $\1_{A}$ is actually the unique maximizer of $f \mapsto \int \phi(x) f(x) \d x$ among \textit{all} functions (not necessarily monotone). Thus, any function $f$ sharing the same marginals with $\1_A$ must also maximize $f \mapsto \int \phi(x) f(x) \d x$, and hence must be identical to $\1_A$.

\begin{rmk}\label{rmk:exposed}
In fact, as a consequence of the proof, we have also shown that every extreme point in $\mathcal{Q}$ is exposed when $n =2$, and that every extreme point in $\mathcal{F}$ is exposed for any $n$. The proof can be readily extended to show that, in any dimension $n$, every exposed point of $\mathcal{Q}$ must be the projection of some additive set, and the projection of ``almost'' every additive set is an exposed point of $\mathcal{Q}$ (see \Cref{thm:exposedQ} in the appendix for the formal statement and proof). However, since it is known that up-sets do not coincide with additive sets when $n > 2$, there would still be a gap between projections of up-sets and extreme points of $\cQ$ when $n > 2$. Regarding the unique rationalizability of our extreme points in dimension $n = 2$, an alternative proof would be to exploit the fact that every additive set is a ``strong set of uniqueness'' (\citealt{lorenz1949}; \citealt{fishburn1990sets}; \citealt{kemperman1991sets}). However, that argument would not generalize to prove the unique rationalizability of the extreme points of $\overline{\mathcal{Q}}$ in \Cref{thm:rectangle}.
\end{rmk}

We further characterize the extreme points of $\overline{\cQ}$ in the case of $n=2$ and $m=1$. To this end, recall that for two nondecreasing functions $g_1:[0,1] \to \R$ and $g_2:[0,1] \to \R$, $g_2$ is said to \emph{\textbf{weakly majorize}} $g_1$ (denoted by $g_1 \preceq_w g_2)$ if 
\[
\int_x^1 g_1(z) \d z \leq \int_x^1 g_2(z)\d z 
\]
for all $x \in [0,1]$. Moreover, $g_2$ is said to \emph{\textbf{majorize}} $g_1$ (denoted by $g_1 \preceq g_2$) if, in addition to the above inequalities, the equality holds at $x=0$. Note that if $g_1$ and $g_2$ are CDFs, then $g_2$ majorizes $g_1$ if and only if $g_1$ is a mean-preserving spread of $g_2$, whereas $g_2$ weakly majorizes $g_1$ if and only if $g_2$ second-order stochastic dominates $g_1$. 

Another result of \citet{gutmann1991existence} shows that rationalizable monotone functions can be characterized by a joint majorization relation: 

\begin{lemma}[\citealt{gutmann1991existence}, Theorem 4]\label{lem:gutmann}
For any pair $q=(q_1,q_2)$ of nondecreasing functions, $q \in \cQ$ if and only if $q_1 \preceq \hat{q}_2$, where $\hat{q}_2(z):=1-q_2^{-1}(1-z)$ is the conjugate of $q_2$.\footnote{For any nondecreasing left-continuous function $g:[0,1] \to [0,1]$, $g^{-1}$ is defined as $g^{-1}(z):=\inf\{x \in [0,1]:g(x) > z\}$. Note that $g^{-1}$ is right-continuous, and hence $\hat{g}$ is left-continuous.} 
\end{lemma}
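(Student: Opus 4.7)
The plan is to prove the two implications separately: the forward direction by a direct Fubini estimate, and the reverse direction via a Krein--Milman argument on a conveniently chosen convex set, exploiting the concavity of the conjugate functional.

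For the forward direction, suppose $q$ is rationalized by some $f:[0,1]^2\to[0,1]$ (monotone without loss of generality by \Cref{lem:gutmann-monotone}). Fubini together with the pointwise bounds $f \le 1$ and $\int_0^1 f(\cdot, z_2)\,\d z_1 = q_2(z_2)$ gives, for every $x \in [0,1]$,
\[
\int_x^1 q_1(z)\,\d z \;=\; \int_0^1\!\!\int_x^1 f(z_1,z_2)\,\d z_1\,\d z_2 \;\le\; \int_0^1 \min\!\bigl(1-x,\,q_2(z_2)\bigr)\,\d z_2.
\]
A change of variable $u = 1-z$ in $\int_x^1 \hat{q}_2$ together with the elementary area identity $\int_0^a q_2^{-1}(u)\,\d u + \int_0^{q_2^{-1}(a)} q_2(y)\,\d y = a\,q_2^{-1}(a)$ identifies the right-hand side with $\int_x^1 \hat{q}_2(z)\,\d z$; the case $x=0$ gives the equality of total masses, so the full majorization $q_1 \preceq \hat{q}_2$ follows.

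For the reverse direction, the key observation is that $\int_x^1 \hat{q}_2(z)\,\d z = \int_0^1 \min(1-x, q_2(z))\,\d z$ is a \emph{concave} functional of $q_2$ (a sum of concave pointwise minima), so $\cQ' := \{q : q_1 \preceq \hat{q}_2\}$ is convex, and it is compact in $L^1$ by Helly's selection theorem. Since the forward direction gives $\cQ \subseteq \cQ'$, by Krein--Milman it suffices to show that every extreme point of $\cQ'$ lies in $\cQ$. I would argue that every extreme point of $\cQ'$ must satisfy $q_1 \equiv \hat{q}_2$: the concavity above is in fact strict, in the sense that $\widehat{\lambda q_2^a + (1-\lambda) q_2^b}$ strictly majorizes $\lambda \hat{q}_2^a + (1-\lambda)\hat{q}_2^b$ whenever $q_2^a \ne q_2^b$ on a positive-measure set (by choosing $x$ so that $1-x$ falls between $q_2^a(z)$ and $q_2^b(z)$ on that set and invoking the kink of $\min$). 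Consequently, any slackness $\int_x^1 q_1 < \int_x^1 \hat{q}_2$ can be absorbed by a small perturbation of $(q_1,q_2)$ within $\cQ'$, contradicting extremality. The pairs $(q_1,q_2)$ with $q_1 = \hat{q}_2$ are, by direct computation, precisely the marginals of the up-set indicator $\1_A$ with $A = \{(x_1,x_2): x_1 \ge 1 - q_2(x_2)\}$, and hence automatically lie in $\cQ$.

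The main obstacle is making the perturbation step rigorous — in particular, producing a genuine nontrivial convex decomposition of $(q_1,q_2)$ in $\cQ'$ whenever the majorization is slack. Perturbing $q_1$ alone requires a mean-zero bump compatible with the monotonicity of $q_1$, which is subtle near flat regions or endpoints of $q_1$; perturbing $q_2$ alone is complicated by the nonlinear dependence of $\hat{q}_2$ on $q_2$; a balanced perturbation in both coordinates is cleanest but needs the strict-concavity refinement above to produce the asymmetric decomposition. Should the direct argument prove delicate in degenerate cases (e.g., $q_1$ constant on an interval where $\hat{q}_2$ has a jump), I would fall back on an explicit construction: start from the up-set indicator $f^\ast(x_1,x_2) = \1\{x_1 \ge 1 - q_2(x_2)\}$ which rationalizes $(\hat{q}_2, q_2)$, and iteratively apply mean-preserving spreads in the $x_1$-direction à la Hardy--Littlewood--Pólya to transport the first marginal from $\hat{q}_2$ down to the majorized $q_1$, each step being realizable as a bounded local modification of $f$ within $[0,1]$ that preserves the second marginal.
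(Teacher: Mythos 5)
The paper does not prove this lemma at all; it cites it verbatim from \citet{gutmann1991existence}, Theorem 4, so there is no internal proof to compare against. That said, your attempt can be assessed on its own terms. Your forward direction (rationalizable $\Rightarrow$ $q_1 \preceq \hat{q}_2$) is correct and clean: the Fubini/truncation bound $\int_x^1 f(\cdot,z_2)\,\d z_1 \le \min(1-x, q_2(z_2))$, together with the change of variable and the area identity you cite, does establish $\int_x^1 q_1 \le \int_x^1 \hat{q}_2$ with equality at $x=0$.

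The reverse direction, however, has a genuine gap at exactly the step you flag. Your Krein--Milman reduction needs the \emph{necessity} direction of the extreme-point characterization: that every extreme point of $\cQ' := \{q : q_1 \preceq \hat{q}_2\}$ satisfies $q_1 \equiv \hat{q}_2$, i.e., that any pair with slack in the majorization admits a nontrivial convex decomposition inside $\cQ'$. But the strict-concavity observation you invoke cuts the other way: it shows that averaging two distinct $q_2$'s makes the conjugate bound \emph{strictly looser} than the average of the bounds, which is precisely what you use to prove \emph{sufficiency} (that $q_1 = \hat{q}_2$ forces any decomposition to be trivial). It does not by itself produce the decomposition of a slack pair, because when you split $q_2 = \tfrac12(q_2^a + q_2^b)$, concavity sends $\tfrac12(\Psi(q_2^a) + \Psi(q_2^b)) \le \Psi(q_2)$ \emph{downward}, so one of $\Psi(q_2^a), \Psi(q_2^b)$ may fall below $\int_x^1 q_1$ and violate the constraint; the slack has to dominate this drop uniformly in $x$, and near the endpoints where the majorization binds the slack vanishes. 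The hard case is precisely the one you identify: $q_1$ is already an extreme point of the fixed-$\hat{q}_2$ majorization set (a step function on ironing intervals with constraint binding at interval endpoints), $q_2$ is simultaneously an extreme point of the fixed-$\hat{q}_1$ set, and one must still exhibit a \emph{joint} perturbation of $(q_1,q_2)$ that respects monotonicity of both, preserves the $x=0$ equality, and keeps the majorization for all $x$ — none of which follows from concavity alone. (The paper's own characterization of $\mathrm{ext}(\cQ')$, \Cref{prop:reverse-majorization}, is proved by \emph{using} the present lemma, so it cannot be invoked here without circularity.) Your fallback via iterated mean-preserving spreads of the up-set indicator in the $x_1$-direction is a sound plan and is, in effect, how the literature handles it, but as written it is a sketch: one must verify that each spread stays in $[0,1]$, preserves the $x_2$-marginal, and that the procedure terminates/converges to $q_1$ — this is essentially the content of Gutmann et al.'s proof and needs to be carried out, not just invoked.
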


We say that a subset $B \subseteq [0,1]^2$ is a \emph{\textbf{rectangle}} if $\mathrm{cl}(B)=[\underline{x}_1,\overline{x}_1] \times [\underline{x}_2,\overline{x}_2]$, where $\underline{x}_1 \leq \overline{x}_1$ and $\underline{x}_2 \leq \overline{x}_2$. Two nested up-sets $A' \subseteq A$ are said to \emph{\textbf{differ by}} $B \subseteq [0,1]^2$ if $A \backslash A'=B$.

\begin{theorem}\label{thm:rectangle}
Every extreme point of $\overline{\cQ}$ is rationalized by a mixture of $\1_A$ and $\1_{A'}$, where $A' \subseteq A \subseteq [0,1]^2$ are nested up-sets that differ by at most a rectangle. Moreover, every extreme point of $\overline{\cQ}$ is uniquely rationalized among all monotone functions. 
\end{theorem}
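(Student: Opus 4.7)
The plan is to combine the tools already developed. Via the marginalization map $M$, the constraint $\int q_1 \psi_1 + \int q_2 \psi_2 \le \eta$ defining $\overline{\cQ}$ translates to $\int f \phi \le \eta$ with the additively separable kernel $\phi(x_1, x_2) := \psi_1(x_1) + \psi_2(x_2)$, so that $\overline{\cQ} = M(\overline{\F})$. The affine-mapping lemma (\Cref{lem:extreme-projection}) together with \Cref{thm:nested-up-sets} then yields that every extreme point $(q_1, q_2)$ of $\overline{\cQ}$ equals $M(f^*)$ for some extreme $f^* = \lambda \1_{A_1} + (1-\lambda)\1_{A_2}$ of $\overline{\F}$ with $A_1 \subseteq A_2$ nested up-sets in $[0,1]^2$. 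The remaining content of the theorem is therefore (a) that $R := A_2 \setminus A_1$ can be taken to be a (possibly degenerate) rectangle and (b) that $f^*$ is the only monotone rationalization of $(q_1, q_2)$.

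For (a) I would argue by contraposition. If $R$ is not a rectangle then either its column projection is disconnected (so $R$ splits as $R_L \sqcup R_H$ with disjoint column supports), or the column projection is a connected interval but at least one of the two thresholds bounding $R$ is non-constant, creating a staircase with pieces $P_1, \ldots, P_k$, $k \ge 2$. In either case I construct an asymmetric perturbation pair $(\delta_1, \delta_2)$ with $M(\delta_1) = -M(\delta_2) \ne 0$ and both $f^* + \delta_i \in \overline{\F}$, which witnesses a nontrivial convex decomposition of $(q_1, q_2)$. The crucial observation is that $\phi$ is additively separable, so $\int \delta \phi$ depends on $\delta$ only through $M(\delta)$; hence $\int \delta_1 \phi = -\int \delta_2 \phi$ automatically, and the two constraint inequalities collapse to the single equation $\int \delta_1 \phi = 0$. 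I take $\delta_1 = \sum_i \gamma_i \1_{P_i}$, a piecewise constant perturbation on the pieces of $R$; for small coefficients with the monotonicity-preserving ordering along the chain of pieces and satisfying $\sum_i \gamma_i \int_{P_i} \phi = 0$, a nonzero $(\gamma_i)$ exists because this system has at least $k - 1 \ge 1$ degrees of freedom. The companion $\delta_2$ then exists for small $\gamma_i$ by Gutmann's criterion (\Cref{lem:gutmann}) applied to the slightly perturbed marginals.

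For (b) I would prove en route the following lemma: a monotone function $h$ on a rectangle $[a,b] \times [c,d]$ whose row and column integrals all vanish is zero a.e. (Proof: monotonicity in $y$ together with zero row integrals gives $h(\cdot, y) - h(\cdot, c) \ge 0$ with zero row integral, hence equal a.e., making $h$ independent of $y$; the column integrals then pin $h \equiv 0$.) For any monotone rationalization $\tilde{f}$, set $g = \tilde{f} - f^*$; then $g$ has zero marginals, $g \le 0$ on $A_1$, and $g \ge 0$ on $A_2^c$. Pinning down $g$ on the extreme rows and columns where $f^*$ is identically $0$ or $1$ (e.g.\ the row $x_2 = 1$ sits entirely in $A_1$, so $g(\cdot, 1) \le 0$ with zero integral forces $g(\cdot, 1) = 0$ a.e.) and propagating via the monotonicity of $\tilde{f}$ together with the structure of $A_1, A_2$ outside $R$, I reduce $g$ to a monotone function with zero marginals supported on $R$, which the lemma then forces to vanish. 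The main obstacle I expect is the staircase case of (a): symmetric perturbations $\pm \delta$ would force $\delta$ to be constant on the chain-connected $R$ and be blocked by the constraint, so the asymmetric construction is genuinely needed, and verifying rationalizability of the perturbed marginals via Gutmann is the most delicate calculation.
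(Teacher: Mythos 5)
Your high-level structure (perturbation to kill non-rectangular $R$, then a uniqueness argument) parallels the paper, but both halves contain gaps that the paper's actual proof works hard to avoid.

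\textbf{Part (a).} You insist that $f^*+\delta_1$ and $f^*+\delta_2$ both lie in $\overline{\F}$, i.e.\ that the perturbed \emph{joint} functions remain monotone. This is an unnecessary and problematic strengthening: the paper's perturbation sets $f''=2f-f'$ without ever requiring $f'$ or $f''$ to be in $\F$; it only checks that the \emph{marginals} $q',q''$ are nondecreasing and satisfy the linear constraint, since rationalizability of $q''$ is automatic (witnessed by $f''$ itself, monotone or not). Once you demand $f^*+\delta_i\in\F$, you run into exactly the difficulty you flag: adjacent staircase pieces $P_i,P_j$ can overlap in one coordinate's range, forcing $\gamma_i=\gamma_j$ under the symmetric perturbation, and your escape hatch --- ``the companion $\delta_2$ then exists for small $\gamma_i$ by Gutmann's criterion applied to the slightly perturbed marginals'' --- is precisely the nontrivial step you do not carry out. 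There is no a priori reason $q-M(\delta_1)$ satisfies $\,(q_1-M_1(\delta_1))\preceq\widehat{(q_2-M_2(\delta_1))}\,$; rationalizability is a convexity property, not a linear one, so $q+M(\delta_1)\in\cQ$ and $q\in\cQ$ do not yield $q-M(\delta_1)\in\cQ$. The paper sidesteps all of this by first invoking Theorem~1 of \citet{nikzad2023constrained} to pin down the jump structure of $q_1$ relative to $\hat q_2$ (and symmetrically $q_2$ relative to $\hat q_1$), which both controls where perturbations can live and delivers the jump sizes $\delta_1,\delta_2$ needed to keep the perturbed marginals monotone; your sketch has no analogue of this step.

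\textbf{Part (b).} Your ``zero-marginals lemma'' (a monotone $h$ on a rectangle with vanishing row and column integrals is zero a.e.) is correct and a nice observation. The problem is getting from $g=\tilde f-f^*$ to a monotone function supported on $R$. The difference $g$ is not monotone, and the sign information $g\le0$ on $A_1$, $g\ge0$ on $A_2^c$ does not by itself localize $g$ to $R$. Your illustrative step ``the row $x_2=1$ sits entirely in $A_1$'' is not true in general --- take $A_1=[\tfrac12,1]\times[0,1]$, a perfectly good up-set whose top row is only half inside $A_1$ --- so the boundary rows and columns need not be pinned down to zero as you claim. The paper instead constructs an additively separable exposing functional $\phi(x_1,x_2)=\phi_1(x_1)+\phi_2(x_2)$ tailored to the rectangle, shows $f^*$ is the unique maximizer of $\int\phi\,\tilde f$ among mixtures of two nested up-sets rationalizing $q$, and then uses \Cref{thm:nested-up-sets} together with Choquet's integral representation and the uniqueness of the representing measure at an extreme point to extend uniqueness to all monotone rationalizations. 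That argument has no obvious shortcut through your lemma because the localization step (reducing to a monotone function on $R$) is precisely what is hard.

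In summary, the proposal identifies the right geometric target but does not supply the two load-bearing ingredients in the paper's proof: the Nikzad-type characterization of $q_1$ against $\hat q_2$ that disciplines the perturbation in part (a), and the exposed-point/Choquet argument that replaces a direct localization attempt in part (b).
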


Notably, \Cref{thm:rectangle} greatly reduces the set of possible extreme points of $\overline{\cQ}$ when $n=2$. Indeed, part $(ii)$ of \Cref{thm:extreme-are-up-sets} indicates that, even when $m=1$, an extreme point of $\overline{\cQ}$ is a mixture of two indicator functions defined on nested up-sets.  When $n=2$, \Cref{thm:rectangle} states that these two up-sets can only differ by a single rectangle, which sharpens the characterization of extreme points of $\overline{\cQ}$.

\begin{figure}[t]   \hspace{0.8cm}\includegraphics[scale=0.5]{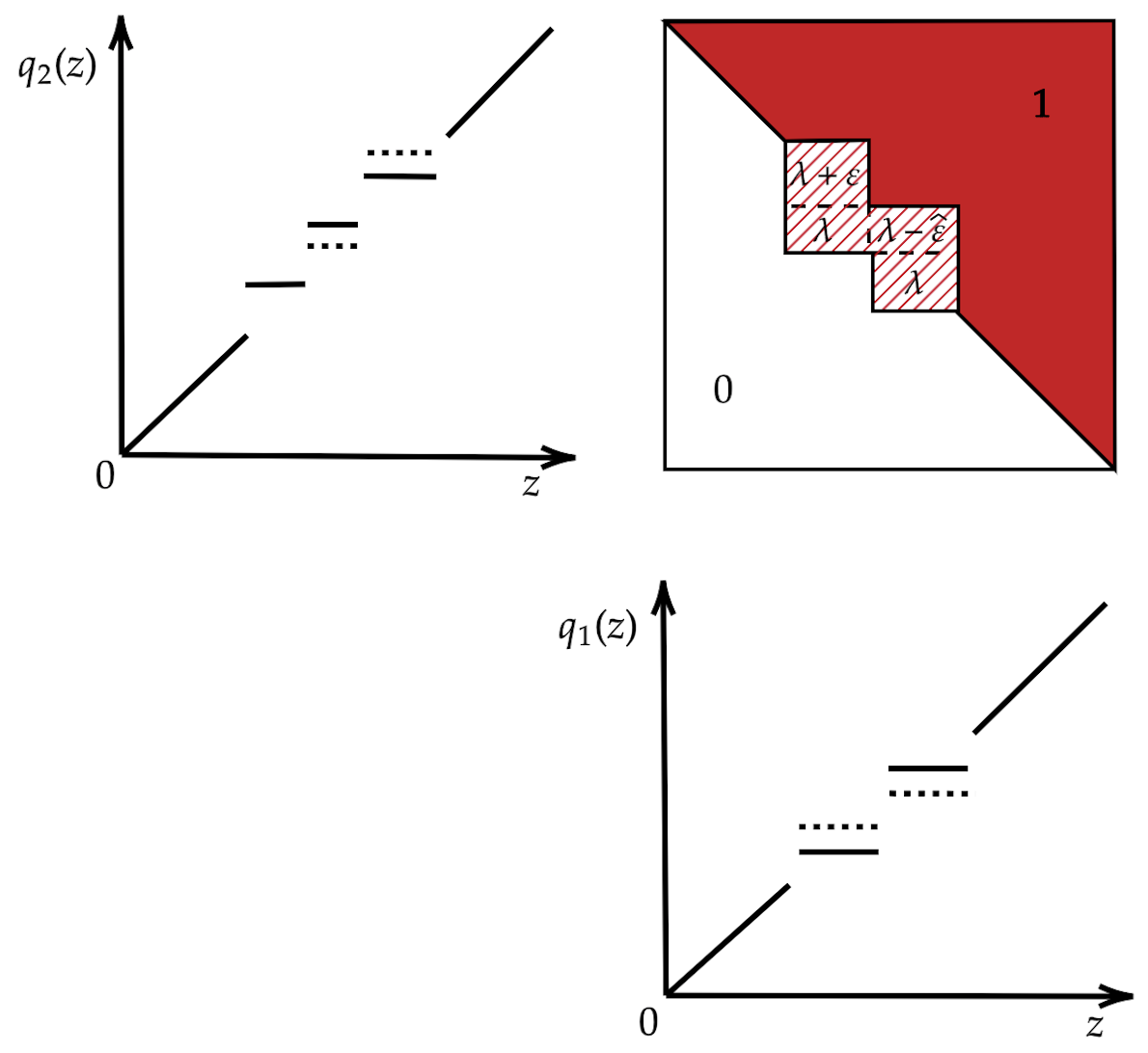}
    \caption{Perturbation of Non-Rectangular Regions, with Original Marginals (Solid) and Perturbed Marginals (Dashed)}
    \label{fig:proof}
\end{figure}

The proof of \Cref{thm:rectangle} can be found in the Appendix. The argument is involved, so we sketch the intuition here. We first sketch the intuition behind the first part of \Cref{thm:rectangle}. By part $(ii)$ of \Cref{thm:extreme-are-up-sets}, we know that any extreme point $q \in \overline{\cQ}$ must be rationalized by a mixture of indicator functions defined on two nested up-sets $A_1 \subseteq A_2$. Suppose for contradiction that the two nested up-sets differ not by a rectangle but by a ``flipped Z'' as in the dashed region of \Cref{fig:proof}.  A key observation is that by choosing the correct ratio of (small enough) $\varepsilon, \hat{\varepsilon} >0$ and perturbing the mixture in the dashed region using the way illustrated by \Cref{fig:proof}, there exist two distinct monotone $q',q'' \in \overline{\cQ}$ such that $\nicefrac{1}{2} q'+\nicefrac{1}{2} q''=q$, as illustrated by the marginals in \Cref{fig:proof}. This contradicts $q$ being an extreme point of $\overline{\cQ}$.

The actual proof shows that for two arbitrary nested up-sets, if they differ by \textit{any} non-rectangular region, then there must exist such a perturbation. It consists of three steps. First, we observe that for any extreme point $q \in \overline{\cQ}$, $q_1$ must be an extreme point of a convex set of nondecreasing functions majorized by $\hat{q}_2$. Combining with Theorem 1 of \citet{nikzad2023constrained}, it follows that there exist countably many disjoint intervals $\{(\underline{z}_k,\overline{z}_k]\}_{k=1}^\infty$ such that $q_1 \equiv \hat{q}_2$ on $[0,1] \backslash \cup_{k=1}^\infty (\underline{z}_k,\overline{z}_k]$, and equals a step function with at most two steps on each $(\underline{z}_k,\overline{z}_k]$. We then switch the role of $q_1$ and $q_2$ and apply the symmetric argument to $q_2$ fixing $\hat{q}_1$. 
Second, we use these properties and the fact that $(q_1, q_2)$ must be rationalizable by a mixture of two nested up-sets according to \Cref{thm:extreme-are-up-sets} to derive a set of necessary conditions that $(q_1, q_2)$ must satisfy. In particular, we show that on each interval $(\underline{z}_k,\overline{z}_k]$, $q_1$ must be a constant and $\hat{q}_2$ must be a step function with only one jump (symmetrically, $q_2$ and $\hat{q}_1$ have the same feature). Moreover, we show that the intervals on which $q_1$ and $q_2$ are constants form a countable collection of disjoint ``off-diagonal'' rectangles on $[0,1]^2$. Third, for any such $(q_1, q_2)$, we explicitly construct a monotone $f$ that rationalizes it, and show that the monotone function $f$ allows for a perturbation similar to the one we construct in \Cref{fig:proof}, unless it coincides with a mixture of two nested up-sets differing by at most a rectangle as depicted in \Cref{fig2a}.

For the unique rationalizability in \Cref{thm:rectangle}, we first show that any extreme 
point $(q_1, q_2)$ must be uniquely rationalized by the mixture of two nested up-sets that differ by a rectangle among all mixtures of two nested up-sets, using a similar argument as before, by constructing an additively separable function $\phi(x_1, x_2)$ such that our rationalization using a rectangle is the unique maximizer for the linear functional $\Phi: f \mapsto \int \phi f \d x$ among the mixtures of two nested up-sets fixing the marginals. Now, for any monotone $f$ that rationalizes $(q_1, q_2)$, we can apply \Cref{thm:nested-up-sets} and Choquet's integral representation theorem to represent $f$ as an integral over the mixtures of two nested up-sets. Projecting that representation into marginals also gives an integral representation of $(q_1, q_2)$. But since $(q_1, q_2)$ is an extreme point, and hence has a unique integral representation, all such mixtures of two nested up-sets must have the same marginals, and hence they all rationalize $(q_1, q_2)$. But then all such mixtures, and hence $f$, must be identical to our construction using the rectangle.  

Similar to \Cref{thm:nested-up-sets}, the conditions identified by \Cref{thm:rectangle} are also sufficient when the affine constraints have enough independence. Specifically, consider any pair $\psi_1,\psi_2$ of essentially bounded functions such that for any rectangle $D \subseteq [0,1]^2$ with a positive measure,
\[
\int_{D}(\psi_1(x_1)+\psi_2(x_2))\d x \neq 0\,.
\]
Fix any $\eta \in \R$. Let
\[
\overline{\cQ}^\star:= \left\{q \in \cQ: \int_0^1 q_1(x_1)\psi_1(x_1)\d x_1+\int_0^1 q_2(x_2) \psi_2(x_2) \d x_2 = \eta\right\}\,.
\]
We have the following characterization:

\begin{prop}\label{prop:marginal-sufficiency}
$q=(q_1,q_2) \in \overline{\cQ}^\star$ is an extreme point if and only if $q$ is rationalized by a mixture of $\1_A$ and $\1_{A'}$, where $A' \subseteq A \subseteq [0,1]^2$ are nested up-sets that differ by at most a rectangle. Moreover, every extreme point of $\overline{\cQ}^\star$ is uniquely rationalized among all monotone functions. 
\end{prop}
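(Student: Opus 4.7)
The plan is to prove necessity via \Cref{thm:rectangle}, and to handle sufficiency by showing that any monotone rationalizer of $q$ is forced to be $f_*$, using an additively separable multiplier together with the independence hypothesis on $\psi_1 + \psi_2$.

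For necessity, observe that every extreme point $q$ of $\overline{\cQ}^\star$ is also an extreme point of $\overline{\cQ}$: if $q = \tfrac{1}{2}(q'+q'')$ with $q', q'' \in \overline{\cQ}$, then linearity of the constraint functional together with $q$'s saturating $=\eta$ forces $q', q''$ to saturate equality, so both lie in $\overline{\cQ}^\star$, and extremality of $q$ there closes. \Cref{thm:rectangle} then yields the rectangular structure and the unique-rationalization claim.

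For sufficiency, write $f_* = \lambda \1_A + (1-\lambda)\1_{A'}$ with $B := A \setminus A' = I_1 \times I_2$ (up to null sets) and $I_i := [\underline{x}_i, \overline{x}_i]$; the degenerate case ($\lambda \in \{0, 1\}$ or $B$ of measure zero) immediately reduces to \Cref{thm:rationalized-upsets}. Rectangularity forces each $q_i$ to be constant on $I_i$. Now build an additively separable $\phi(x_1, x_2) = \phi_1(x_1) + \phi_2(x_2)$ with $\phi_i \equiv 0$ on $I_i$, $\phi_i > 0$ to the right of $I_i$ and $\phi_i < 0$ to the left, and consequently $\phi > 0$ on $A'$, $\phi < 0$ on $A^c$, and $\phi \equiv 0$ on $B$; this choice is available precisely because $B$'s rectangular shape means the boundaries of $A$ and $A'$ run along horizontal/vertical edges on $I_1 \cup I_2$. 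By additive separability, $\int g \phi\,dx = \int q_1 \phi_1 + \int q_2 \phi_2 = \int f_* \phi\,dx$ for any monotone rationalizer $g$ of $q$, and this coincides with the unconstrained maximum of $\int h \phi\,dx$ over $[0,1]$-valued $h$. So every such $g$ is a pointwise maximizer: $g \equiv 1$ on $A'$ and $g \equiv 0$ on $A^c$.

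What remains is to pin down $g$ on $B$ and then to establish extremality. Since $q_1$ is constant on $I_1$, $\int_{I_2} g(x_1, x_2)\,dx_2$ is constant in $x_1 \in I_1$; combined with the monotonicity of $g$ in $x_1$, the slicewise nonnegative increment has zero row integral, so it vanishes a.e., forcing $g|_B$ constant in $x_1$. Symmetrically in $x_2$, so $g|_B \equiv C$ for some $C \in [0,1]$, and the equality constraint becomes $(C - \lambda)\int_B (\psi_1 + \psi_2)\,dx = 0$. The independence hypothesis then gives $C = \lambda$, so $g = f_*$; this is the unique-rationalization claim. For extremality, if $q = \tfrac{1}{2}(q'+q'')$ with $q', q'' \in \overline{\cQ}^\star$ and monotone rationalizers $g', g''$ (via \Cref{lem:gutmann-monotone}), then $\tfrac{1}{2}(g'+g'')$ is a monotone rationalizer of $q$ and hence equals $f_*$ by the uniqueness just proved; so $g' = g'' = f_*$ off $B$. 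On $I_1$, $q_1$ being constant plus $q''_1 = 2q_1 - q'_1$ monotone force $q'_1$ constant on $I_1$, and symmetrically $q'_2$ constant on $I_2$; the same row/column collapse applied to $g'$ then gives $g'|_B$ constant, and the equality constraint satisfied by $q'$ together with the independence hypothesis yields $g' = f_*$, hence $q' = q = q''$. The main obstacle is identifying the exact degree of freedom that must be eliminated: rectangularity of $B$ reduces all admissible perturbations to a single scalar, which the one-parameter independence hypothesis precisely annihilates via the elementary fact that a nondecreasing function on a rectangle with constant marginal integrals is essentially constant.
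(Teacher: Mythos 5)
Your proof is correct and takes a genuinely different route for the sufficiency direction. The necessity step (lifting extremality from $\overline{\cQ}^\star$ to $\overline{\cQ}$ via the observation that a saturated linear constraint forces both halves of a midpoint decomposition to saturate, then invoking \Cref{thm:rectangle}) is essentially the paper's route, just made explicit. For sufficiency, however, the paper proceeds by invoking Choquet's integral representation theorem: it writes $q$ as a barycenter of extreme points $q^r$ of $\overline{\cQ}^\star$, shows using the additively separable functional that each rationalizer $f^r$ agrees with $f$ off the rectangle $D$, and then uses the independence hypothesis to conclude $q^r = q$ a.e. You instead bypass Choquet entirely: you establish unique rationalizability of $q$ among monotone functions directly via the ``row/column collapse'' observation (a nondecreasing slice with constant row integral must be a.e.\ constant), and then for extremality you take a midpoint decomposition $q = \tfrac12(q'+q'')$, pass to monotone rationalizers $g',g''$ via \Cref{lem:gutmann-monotone}, average them to a monotone rationalizer of $q$ which by uniqueness is $f_*$, and pin down the remaining scalar degree of freedom $C'$ with the independence hypothesis. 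This is more elementary and avoids appeals to Choquet's theorem and to Bauer's uniqueness of barycentric representations. Both proofs hinge on the same additively separable supporting functional.

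Two small imprecisions worth flagging. First, in the unique-rationalization step you write that ``the equality constraint becomes $(C-\lambda)\int_B(\psi_1+\psi_2)\d x = 0$'' and then invoke the independence hypothesis to conclude $C=\lambda$; this detour is unnecessary there, since $g$ and $f_*$ rationalize the \emph{same} $q$, so matching the first marginal on $I_1$ (i.e.\ $q_1|_{I_1} = 1-\overline{x}_2 + C|I_2| = 1-\overline{x}_2 + \lambda|I_2|$) already forces $C=\lambda$ without any reference to $\psi_1,\psi_2$. The independence hypothesis is genuinely needed only in the subsequent extremality step, where $q'$ is not assumed equal to $q$ so the marginal argument does not pin down $C'$ and the linear constraint must do the work. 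Second, the claim ``consequently $\phi > 0$ on $A'$, $\phi < 0$ on $A^c$, $\phi \equiv 0$ on $B$'' does not formally follow from the stated sign pattern of $\phi_1,\phi_2$ alone; one needs the particular choice $\phi_1 = \hat{x}_2 - g_1$ off $I_1$ (and $0$ on $I_1$) for the cross terms to have the right sign on $A'\cap (I_1^c \times I_2^c)$. Both points are easily repaired and do not affect the validity of the argument.
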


As a consequence of \Cref{prop:marginal-sufficiency}, since exposed points are dense in extreme points by Straszewicz's theorem (\citealt{Kleejr1958}), the set of \textit{exposed} mixtures of two nested up-sets differing by a non-trivial rectangle must be dense in the set of all such mixtures.\footnote{To see this, note that any such mixture, since it is an extreme point of $\overline{\cQ}^\star$, must be approximated by a sequence of exposed points of $\overline{\cQ}^\star$, but this sequence must contain a subsequence of mixtures of nested up-sets differing by a non-trivial rectangle since otherwise the majorization constraint would be binding everywhere in the limit.} Thus, for linear optimization problems, \Cref{prop:marginal-sufficiency} shows that the rectangular difference between the two nested up-sets cannot be ignored in general. 

\begin{figure}[t]
\centering
\begin{subfigure}[b]{0.4\linewidth}
\centering
    \includegraphics[scale=0.25]{c.png}
\caption{An Extreme Point}
\label{fig2a}
\end{subfigure}%
\begin{subfigure}[b]{0.4\linewidth}
\centering
 \includegraphics[scale=0.25]{b.png}
\caption{Not an Extreme Point}
\label{fig2b}
\end{subfigure}
\caption{Extreme Points of $\overline{\cQ}$}
\label{fig2}
\end{figure}
\Cref{fig2} illustrates the extreme points of $\overline{\cQ}$. \Cref{fig2a} depicts a monotone function $f=\lambda \1_A+(1-\lambda) \1_{A'}$, where $A' \subseteq A$ are nested up-sets that differ by a rectangle. According to \Cref{thm:rectangle}, any extreme point of $\overline{\cQ}$ must be rationalized by some $f \in \F$ of this shape. Moreover, \Cref{prop:marginal-sufficiency} states that any $q \in \overline{\cQ}^\star$ rationalized by such a monotone function must be an extreme point of $\overline{\cQ}^\star$. Meanwhile, \Cref{fig2b} depicts a monotone function $f=\lambda \1_A+(1-\lambda) \1_{A'}$, where $A' \subseteq A$ are nested up-sets but do not differ by a rectangle. According to \Cref{thm:rectangle}, because any extreme point of $\overline{\cQ}$ is uniquely rationalized among monotone functions, this function can never rationalize an extreme point of $\overline{\cQ}$, even though it might be an extreme point of $\overline{\F}$.

\begin{rmk}[Projections under Different Measures]\label{rmk:projection}
While we use the Lebesgue measure when defining rationalizability, all of our results on rationalizable monotone functions can be readily extended to settings where rationalizability is defined with respect to integrals under any \textit{\textbf{product measures}}. See \Cref{subsec:measure} for details. 
\end{rmk}

\subsection{Extreme Points of Joint Majorization Sets} 
\Cref{thm:rationalized-upsets} and \Cref{thm:rectangle}, together with the connection between rationalizability and majorization (\Cref{lem:gutmann}), provide a characterization of extreme points of the set of pairs of monotone functions $q=(q_1,q_2)$ where $q_1$ is majorized by the conjugate of $q_2$. 

\begin{prop}\label{prop:reverse-majorization}
$q$ is an extreme point of the convex set 
\[
\Big\{q\,:\, q_1 \preceq \hat{q}_2\,;\, q_1,q_2 \mbox{ are nondecreasing, left-continuous} \Big\}
\]
if and only if $q_1= \hat{q}_2$. 
\end{prop}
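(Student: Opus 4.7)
The plan is to combine \Cref{lem:gutmann} and \Cref{thm:rationalized-upsets}. By Gutmann's characterization, the convex set in the statement is exactly $\cQ$ (the set of rationalizable monotone pairs for $n=2$). By \Cref{thm:rationalized-upsets}, the extreme points of $\cQ$ are precisely those pairs that are rationalized by $\1_A$ for some up-set $A \subseteq [0,1]^2$. It therefore suffices to prove the equivalence: a monotone pair $q = (q_1, q_2)$ is rationalized by an up-set indicator if and only if $q_1 = \hat{q}_2$.

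For the ``if'' direction, given $q_1 = \hat{q}_2$ I would explicitly exhibit a rationalization by setting
\[
A := \{(x_1, x_2) \in [0,1]^2 : q_1(x_1) + x_2 > 1\}.
\]
Monotonicity of $q_1$ immediately gives that $A$ is an up-set. The first marginal of $\1_A$ equals $q_1(x_1)$ by construction. For the second marginal, I would rewrite the defining inequality $q_1(x_1) > 1 - x_2$ using $q_1 = \hat{q}_2 = 1 - q_2^{-1}(1 - \cdot)$ as $q_2^{-1}(1 - x_1) < x_2$, which is equivalent (modulo a measure-zero boundary) to $x_1 > 1 - q_2(x_2)$; integrating in $x_1$ then yields $q_2(x_2)$, as needed.

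For the ``only if'' direction, given an up-set $A$ rationalizing $q$, I would describe $A$ through its nonincreasing left frontier $\beta(x_1) := \inf\{x_2 \in [0,1] : (x_1, x_2) \in A\}$, so that $q_1(x_1) = 1 - \beta(x_1)$, and through its symmetric horizontal frontier $\gamma(x_2) := \inf\{x_1 \in [0,1] : (x_1, x_2) \in A\}$, so that $q_2(x_2) = 1 - \gamma(x_2)$. A direct computation from the definition $\hat{q}_2(z) = 1 - q_2^{-1}(1-z)$ then shows that $\hat{q}_2$ also equals $1 - \beta$; intuitively, both $q_1$ and $\hat{q}_2$ are alternative encodings of the same monotone frontier of $A$. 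The main technical obstacle is the bookkeeping with pseudo-inverses of non-strictly-monotone functions: $\beta$ and $\gamma$ may have flat parts and jumps, so identifying $q_2^{-1}$ with the correct level set of $\gamma$ (and hence of $\beta$) requires careful attention to the left-/right-continuity conventions. Because both $q_1$ and $\hat{q}_2$ are left-continuous, however, the almost-everywhere identity coming from the marginal computation promotes to pointwise equality, completing the proof.
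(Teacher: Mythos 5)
Your proposal is correct and follows essentially the same route as the paper: reduce to $\cQ$ via Gutmann's lemma (\Cref{lem:gutmann}), invoke \Cref{thm:rationalized-upsets} to identify extreme points with up-set rationalizations, and then translate between the up-set's lower frontier and the pair $(q_1, \hat q_2)$. Your set $A = \{q_1(x_1) + x_2 > 1\}$ and the paper's $A = \{x_2 \ge 1 - q_1(x_1)\}$ differ only by a measure-zero boundary, and your $\beta,\gamma$ play exactly the role of the paper's $g,g^{-1}$.
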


The structure of the extreme points characterized by \Cref{prop:square-majorization} differs considerably from the ones characterized by \citet*{kleiner2021extreme}, who consider the set of nondecreasing functions that majorize, or are majorized by, a \textit{fixed} nondecreasing function. In particular, when the majorization upper and lower bounds are endogenous and linked via the conjugate/inverse relation, the extreme points must be the ones where the majorization constraints bind everywhere, rather than those given by a combination of binding monotonicity and majorization constraints.

\begin{prop}\label{prop:square-majorization}
For any pair of essentially bounded functions $\{\psi_1,\psi_2\}$ on $[0,1]$ and any $\eta \in \R$, if $q$ is an extreme point of the convex set 
\begin{align*}
\bigg\{q\,:&\, q_1 \preceq \hat{q}_2\,;\, q_1,q_2 \mbox{ are nondecreasing, left-continuous}\,;\,\\ &\int_{0}^1 q_1(x_1)\psi_1(x_1)\d x_1+\int_0^1 q_2(x_2)\psi_2(x_2)\d x_2 \leq \eta \bigg\}\,,
\end{align*}
then there exist $\underline{z}\leq \overline{z}$, $ q_1(\underline{z})\leq \underline{\gamma} \leq \overline{\gamma} \leq  q_1(\overline{z}^{+})$, and $\lambda \in (0,1)$, such that $q_1(z)=\hat{q}_2(z)$ for all $z \in [0, 1]\backslash (\underline{z},\overline{z}]$ and otherwise 
\[
q_1(z)=\lambda \overline{\gamma}+(1-\lambda)\underline{\gamma}\,,\quad \mbox{ and } \hat{q}_2(z)=\begin{cases}
\underline{\gamma},&\mbox{if } z \in (\underline{z},\,\,\,(1-\lambda)\overline{z}+\lambda\underline{z}]\\
\overline{\gamma},&\mbox{if } z \in ((1-\lambda)\overline{z}+\lambda\underline{z},\,\,\,\overline{z}]
\end{cases}\,.
\]
\end{prop}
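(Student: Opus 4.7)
The plan is to reduce to \Cref{thm:rectangle} via \Cref{lem:gutmann} and then translate the rectangular rationalization into the claimed joint structure of $(q_1, \hat q_2)$ by direct marginal computation. By \Cref{lem:gutmann}, the majorization constraint $q_1 \preceq \hat q_2$ is equivalent to $(q_1, q_2) \in \cQ$, so the set in the proposition is exactly $\overline{\cQ}$ for $n=2, m=1$. \Cref{thm:rectangle} then implies that every extreme point $(q_1, q_2)$ is rationalized by $f = \lambda \1_A + (1-\lambda)\1_{A'}$ for some $\lambda \in [0,1]$ and nested up-sets $A' \subseteq A$ whose difference $R := A \setminus A'$ is a rectangle (possibly empty). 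If $R$ is empty or $\lambda \in \{0,1\}$, then $f$ is a single up-set indicator, and \Cref{prop:reverse-majorization} yields $q_1 \equiv \hat q_2$; the claim then holds with $\underline{z} = \overline{z}$ (and any $\lambda \in (0,1)$, $\underline{\gamma} = \overline{\gamma}$ chosen to satisfy the bounds), so the ``otherwise'' clause is vacuous.

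For the nondegenerate case, write $R = [\underline{x}_1, \overline{x}_1] \times [\underline{x}_2, \overline{x}_2]$ and set $\underline{z} := \underline{x}_1$, $\overline{z} := \overline{x}_1$, $\underline{\gamma} := 1 - \overline{x}_2$, $\overline{\gamma} := 1 - \underline{x}_2$. Writing $\1_A(x_1, x_2) = \1\{x_2 \geq b_A(x_1)\}$ for a nonincreasing boundary $b_A$ (similarly $b_{A'}$), the requirement that $A$ and $A'$ both be up-sets with $A \setminus A' = R$ forces $b_A \equiv \underline{x}_2$, $b_{A'} \equiv \overline{x}_2$ on $[\underline{x}_1, \overline{x}_1]$, $b_A(x_1) \geq \overline{x}_2$ for $x_1 < \underline{x}_1$, $b_A(x_1) \leq \underline{x}_2$ for $x_1 > \overline{x}_1$, and $b_A = b_{A'}$ outside $[\underline{x}_1, \overline{x}_1]$. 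Integrating $f$ in $x_2$ yields $q_1 \equiv \gamma^* := 1 - \lambda\underline{x}_2 - (1-\lambda)\overline{x}_2 = \lambda \overline{\gamma} + (1-\lambda)\underline{\gamma}$ on $(\underline{z}, \overline{z}]$, with $q_1 = 1 - b_A$ outside; the symmetric computation shows $q_2 \equiv q_2^* := 1 - \lambda\underline{x}_1 - (1-\lambda)\overline{x}_1$ on $(\underline{x}_2, \overline{x}_2]$, strictly exceeding the left-continuous value $q_2(\underline{x}_2) \leq 1 - \overline{z}$ and followed at $\overline{x}_2$ by a jump to $\tilde q_2(\overline{x}_2^+) \geq 1 - \underline{z}$, where $\tilde q_2$ denotes the marginal of $\1_A$.

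Since $q_2$ is flat at $q_2^*$ on $(\underline{x}_2, \overline{x}_2]$, its right-continuous inverse $q_2^{-1}$ takes value $\underline{x}_2$ on $(q_2(\underline{x}_2), q_2^*)$ and value $\overline{x}_2$ on $[q_2^*, \tilde q_2(\overline{x}_2^+))$; translating via $\hat q_2(z) = 1 - q_2^{-1}(1-z)$ produces a left-continuous $\hat q_2$ with a single step at $z^* := 1 - q_2^* = \lambda\underline{z} + (1-\lambda)\overline{z}$, equal to $\underline{\gamma}$ on an interval containing $(\underline{z}, z^*]$ and to $\overline{\gamma}$ on an interval containing $(z^*, \overline{z}]$, which is exactly the claimed step structure. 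For the off-pool identity $q_1 \equiv \hat q_2$ on $[0,1] \setminus (\underline{z}, \overline{z}]$, use that $q_2 = \tilde q_2$ off $[\underline{x}_2, \overline{x}_2]$ so $\hat q_2 = \hat{\tilde q}_2$ wherever $q_2^{-1}$ lies off that interval, and that $q_1 = \tilde q_1$ off $[\underline{z}, \overline{z}]$; then \Cref{prop:reverse-majorization} applied to $\1_A$ alone gives $\tilde q_1 = \hat{\tilde q}_2$, delivering $q_1 = \hat q_2$ on the off-pool region (with the agreement extending onto any flat extensions of $\hat q_2$ at level $\underline{\gamma}$ or $\overline{\gamma}$ that spill past $\underline{z}$ or $\overline{z}$). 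The bounds $q_1(\underline{z}) \leq \underline{\gamma}$ and $\overline{\gamma} \leq q_1(\overline{z}^+)$ follow from $b_A \geq \overline{x}_2$ just below $\underline{z}$ and $b_A \leq \underline{x}_2$ just above $\overline{z}$ together with left-continuity of $q_1$. The main delicate step is the continuity bookkeeping around $\underline{x}_2, \overline{x}_2$ to ensure the jump of $\hat q_2$ lands exactly at $z^*$ and its flat pieces cover exactly $(\underline{z}, z^*]$ and $(z^*, \overline{z}]$; aside from that, the argument is a direct projection of \Cref{thm:rectangle} through the conjugate operation.
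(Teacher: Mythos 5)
Your proposal is correct and follows essentially the same route as the paper: invoke \Cref{lem:gutmann} to identify the set with $\overline{\cQ}$, apply \Cref{thm:rectangle} to obtain a rationalization by a mixture of two nested up-sets differing by at most a rectangle, and then read off the boundary functions to compute the marginals and define $\underline{z},\overline{z},\underline{\gamma},\overline{\gamma},\lambda$. The paper's proof is terser (it defines the constants and stops), whereas you carry out the conjugate/inverse bookkeeping for $\hat q_2$ and treat the degenerate rectangle case explicitly, but the underlying argument is identical.
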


Similar to \Cref{prop:reverse-majorization}, the structure of the extreme points given by \Cref{prop:square-majorization} is considerably different from those identified by \citet{nikzad2023constrained} and \citet{candogan2023disclosure}, who consider the set of nondecreasing functions that majorize, or are majorized by, a \textit{fixed} nondecreasing function, subject to finitely many linear constraints. In particular, rather than possibly having multiple intervals on which the monotonicity constraints bind, the extreme points given by \Cref{prop:square-majorization} have only a single interval on which the monotonicity constraints bind, whereas the majorization constraints bind everywhere else.   

From \Cref{prop:reverse-majorization}, another immediate consequence is a necessary condition for the extreme points of the weak joint majorization set where $q_1$ is weakly majorized by the conjugate of $q_2$. 

\begin{prop}\label{prop:weak} 
Every extreme point  of the convex set 
\[
\Big \{q \,:\,  q_1 \preceq_w  \hat{q}_2\,;\, q_1,q_2 \mbox{ are nondecreasing, left-continuous} \Big\} 
\]
satisfies $q_1 = \hat{q}_2\1_{[k,1]}$ for some $k \in [0,1]$.
\end{prop}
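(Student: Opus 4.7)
The plan is to characterize the extreme points of the weak joint majorization set $W$ by analyzing the binding set of the weak majorization constraint, with the goal of reducing the claim to \Cref{prop:reverse-majorization}. Given an extreme point $q = (q_1, q_2)$ of $W$, I introduce the slack function $e(x) := \int_x^1 (\hat{q}_2 - q_1)(z)\, dz$, which is continuous, nonnegative, and satisfies $e(1) = 0$. The set $Z := \{x \in [0,1] : e(x) = 0\}$ is closed and contains $1$, and by differentiating $e$ and using $e(x) = 0$ on $Z$, we get $q_1 = \hat{q}_2$ almost everywhere on $Z$. The target is to show that extremality forces $Z = [k, 1]$ for some $k \in [0,1]$ and $q_1 \equiv 0$ on $[0, k)$, which together yield $q_1 = \hat{q}_2 \1_{[k, 1]}$.

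\textbf{Reduction to \Cref{prop:reverse-majorization}.} Let $k \in [0,1]$ be chosen such that $\int_k^1 \hat{q}_2(z)\, dz = \int_0^1 q_1(z)\, dz$; existence follows from the intermediate value theorem together with the weak majorization inequality $\int q_1 \leq \int \hat{q}_2$ at $x=0$. Define a truncated companion $\tilde{q}_2(t) := \min\{q_2(t), 1-k\}$. A direct calculation of the conjugate gives $\hat{\tilde q}_2(z) = \hat{q}_2(z)$ for $z > k$ and $\hat{\tilde q}_2(z) = 0$ for $z \leq k$, so $\hat{\tilde q}_2 = \hat{q}_2 \1_{(k, 1]}$ up to a set of measure zero. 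The choice of $k$ then gives $\int q_1 = \int \hat{\tilde q}_2$, and $q_1 \preceq_w \hat{q}_2$ upgrades to $q_1 \preceq \hat{\tilde q}_2$ (strong majorization), so $(q_1, \tilde q_2) \in \tilde{W}$. If I can show that $(q_1, \tilde{q}_2)$ is an extreme point of $\tilde{W}$, then \Cref{prop:reverse-majorization} yields $q_1 = \hat{\tilde q}_2 = \hat{q}_2 \1_{(k, 1]}$, completing the argument.

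\textbf{Main obstacle.} The delicate step is lifting any non-trivial convex decomposition of $(q_1, \tilde{q}_2)$ in $\tilde{W}$ back to a decomposition of $(q_1, q_2)$ in $W$, which would contradict the extremality of $(q_1, q_2)$. The natural lift is to add back the ``excess'' $r := q_2 - \tilde{q}_2 = (q_2 - (1-k))_+$ to the second coordinate of each component, producing candidate pairs $(q_1^{a,b}, q_2^{*a,b} + r)$. One must verify that $q_2^{*a,b} + r$ is nondecreasing, left-continuous, and $[0,1]$-valued, and that the conjugate $\widehat{q_2^{*a,b} + r}$ continues to weakly majorize $q_1^{a,b}$. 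The subtlety is that $r$ is supported where $q_2$ already exceeds $1-k$, while $q_2^{*a,b}$ need not be bounded by $1-k$ a priori; the key observation to push through is that the nonlinear conjugate $q_2 \mapsto \hat{q}_2$ respects such additive splits in the relevant sense, because the excess $r$ affects $\hat{q}_2$ only through the portion on $[0, k]$ that is zeroed out by the truncation $\1_{(k,1]}$. If this lift fails for a specific decomposition, one instead constructs perturbations directly in $W$: in the ``hole'' of $Z$ (if $Z \neq [k,1]$) or on $[0,k)$ where $q_1 > 0$, a simultaneous perturbation of $q_1$ and $q_2$ (exploiting the slack $e > 0$ and the flexibility from varying $q_2$) yields two distinct pairs in $W$ averaging to $(q_1, q_2)$.
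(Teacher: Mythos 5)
Your high-level plan shares the paper's key idea: truncate $q_2$ to $\tilde q_2 = \min\{q_2, 1-k\}$, use that $\widehat{\tilde q_2} = \hat q_2 \1_{(k,1]}$, and reduce to the strong-majorization extreme points of \Cref{prop:reverse-majorization}. But the critical ``lift'' step — the one you yourself flag as the main obstacle — is not actually completed, and it is genuinely where the difficulty lies. You propose to show that $(q_1,\tilde q_2)$ is an extreme point of the \emph{unconstrained} strong-majorization set $\tilde W = \cQ$, and to argue this by lifting any decomposition $(q_1, \tilde q_2) = \tfrac12(q_1^a, q_2^{*a}) + \tfrac12(q_1^b, q_2^{*b})$ back to $\mathcal{Q}^w$ via $q_2^{*a,b} \mapsto q_2^{*a,b} + r$ with $r := (q_2 - (1-k))_+$. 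The problem you correctly anticipate but then wave away is fatal: nothing forces $q_2^{*a}$ or $q_2^{*b}$ to be $\leq 1-k$, so $q_2^{*a} + r$ can exceed $1$ (for instance if $q_2^{*a}(s) = 1 = q_2(s)$, one gets $q_2^{*a}(s) + r(s) = 1 + k$), destroying the $[0,1]$-valuedness and hence the membership in $\mathcal{Q}^w$. Your remark that ``the excess $r$ affects $\hat q_2$ only through the portion on $[0,k]$'' addresses the majorization bookkeeping but not this range violation, and ``one instead constructs perturbations directly in $W$'' is a statement of intent rather than an argument.

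The paper sidesteps exactly this issue by working not with $\tilde W$ but with the \emph{constrained} set $\cQ^k := \cQ \cap \{q : q_1 \equiv 0 \text{ on } [0,k]\}$, obtained after first invoking Corollary~2 of \citet*{kleiner2021extreme} to show that $q_1 \equiv 0$ on $[0,k]$ and $q_1 \preceq \hat q_2 \1_{[k,1]}$. The key structural fact, which you have no analogue of, is that every pair in $\cQ^k$ has second coordinate bounded by $1-k$ (because it is rationalized by a monotone $f$ that vanishes on $[0,k]\times[0,1]$). It is this bound that forces any perturbation $u_2$ of $\min\{q_2,1-k\}$ within $\cQ^k$ to vanish on $\{s : q_2(s) \geq 1-k\}$, which is exactly what makes the lift $q_2 \pm u_2$ remain monotone and $[0,1]$-valued. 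Without restricting to $\cQ^k$, there is no such forcing, and your lift cannot be pushed through as stated. To repair your argument you would essentially have to reinvent $\cQ^k$ and the $\leq 1-k$ bound, at which point you are back to the paper's proof; or produce a completely separate direct perturbation in $\mathcal{Q}^w$, which you have not done. As written, the proposal has a genuine gap at its central step.
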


\section{Economic Applications}\label{sec:application}

In this section, we apply our abstract results in \Cref{sec:multidimensional} and \Cref{sec:rationalizable} to a series of economic applications. 
\Cref{subsec:public} studies public good provision; \Cref{subsec:bilateral} studies bilateral trade; \Cref{subsec:anti-equivalence} studies the limits of mechanism equivalence; \Cref{subsec:auction} studies reduced form auctions; \Cref{subsec:private} studies private private information.   

\subsection{Public Good with Interdependent Values and Correlated Types}\label{subsec:public}

Consider a classic public good provision problem. There are $n$ agents. A designer wants to decide whether to implement a public project. 

We allow for interdependent values and correlated types. Each agent $i$ has a willingness to pay $v_i(s_i, s_{-i})$ for the project being implemented, where $s_i \in S_i \subset \R$ is the private signal received by agent $i$, and $s_{-i}$ are the private signals received by the other agents. We assume that $S_i$ is a compact interval. We allow the joint distribution of signals $s:= (s_1,\dots, s_n)$ to be correlated and only assume that it admits a positive continuous density $g$ on $S_1 \times \cdots \times S_n$. We assume that the agent $i$'s value function $v_i(s_i, s_{-i})$ is bounded, continuously differentiable, and increasing in the agent $i$'s signal $s_i$ for almost all $s_{-i}$. Agents have quasi-linear preferences in transfers. Thus, given a probability $\alpha$ for implementing the project, and transfer $t_i$, agent $i$'s ex-post payoff is 
\[\alpha \cdot v_i(s_i, s_{-i}) - t_i\,.\]

The designer wants to maximize the expected surplus subject to ex-ante budget balance. In particular, implementing the project would cost the designer $c \in \R$, and hence the designer wants to raise enough money to cover the cost in expectation: 
\[\E\Big[\sum_i t_i(s)\Big] \geq \E\Big[c \cdot \alpha (s)\Big]\,. \tag{Budget Balance}\]
The designer can design any mechanism subject to ex-post incentive compatibility (ex-post IC) and ex-post individual rationality (ex-post IR) constraints. By the revelation principle, it is without loss of generality to focus on \textit{\textbf{(direct, incentive-compatible) mechanisms}} $(\alpha, t): S_1 \times \cdots \times S_n \rightarrow [0, 1]\times \R^n$ that satisfy: for all $i$
\begin{align*}
\alpha(s) v_i(s) - t_i(s) &\geq \alpha(\hat{s}_i, s_{-i}) v_i(s_i, s_{-i}) - t_i(\hat{s}_i, s_{-i}) &&\text{ for all $s_i, \hat{s}_i$ and $s_{-i}$\,;} \tag{ex-post IC}\\
\alpha(s) v_i(s) - t_i(s) &\geq 0 &&\text{ for all $s$\,. \tag{ex-post IR}}
\end{align*}
Therefore, the designer's problem is given by 
\[\max_{(\alpha, t)} \E\Bigg[ \alpha(s) \Big(\sum_i v_i(s) - c \Big)\Bigg]\]
subject to the ex-ante budget balance, and ex-post IC and IR constraints. By the logic of \citet{cremer1985}, it is well-known that if we relax the ex-post incentive constraints to be interim incentive constraints, then the first best is generically implementable via full surplus extraction.\footnote{In fact, \citet{cremer1985} show that under stronger conditions, full surplus extraction is possible even with ex-post IC and interim IR mechanisms.} However, such mechanisms are generally fragile and depend on the exact knowledge of the designer (\citealt{brooks2023robust}). 

By the standard argument, ex-post IC implies that the allocation rule $\alpha(s)$ must be a multidimensional monotone function, and transfers can be written as a linear function of the allocation rule $\alpha$, up to a constant. However, unlike standard Myersonian problems where the ironing approach can be applied, the designer needs to choose a \emph{multidimensional} monotone function $\alpha:[0,1]^n \to [0,1]$ to maximize the expected surplus subject to the budget balance constraint, and thus it is not immediately clear how the optimal mechanism can be characterized.  

Nonetheless, a direct application of our main results shows that with ex-post IC and IR constraints, the optimal mechanism in fact has a simple structure. We say that a mechanism $(\alpha, t)$ is a \textit{\textbf{two-threshold policy}} if there exist a monotone \textit{\textbf{aggregation function}} $\phi:[0, 1]^n \rightarrow [0, 1]$ and three constants $p, k_L, k_H \in [0, 1]$ where $k_L \leq k_H$ such that 
\[\alpha(s) = \begin{cases} 1 & \text{ if $\phi(s) \geq k_H$}\;, \\
p & \text{ if $k_L \leq \phi(s) < k_H$}\;, \\
0    & \text{ otherwise}\,.
\end{cases}\]
Such a mechanism can be implemented by aggregating the signals of each agent into a score $\phi(s)$ and comparing the score to two thresholds $k_L$ and $k_H$: if the score is higher than the high threshold, then the project is implemented for sure; if it is lower than the high threshold but higher than the low threshold, the project is implemented with some probability $p$; otherwise, the project is abandoned.

We say that the agents are \textit{\textbf{symmetric}} if for any permutation $\pi$ of $\{1,\dots, n\}$ and any agent $i$, we have $v_i(s_{\pi(1)}, \dots, s_{\pi(n)}) = v_{\pi(i)}(s_1, \dots, s_n)$.\footnote{This condition means that the agents' ``names'' $i \in \{1,\ldots,n\}$ are irrelevant in terms of their value functions, and is the same as the symmetry condition for mechanisms in the reduced-form implementation literature (see, e.g., \citealt{hart2015implementation}).}

\begin{prop}\label{prop:public}
There exists an optimal mechanism that is a two-threshold policy. Moreover, if the agents are symmetric and the signals follow an exchangeable distribution, then there exists a symmetric optimal two-threshold policy. 
\end{prop}
\begin{proof}[Proof of \Cref{prop:public}]
For any agent $i$, since $v_i(s_i, s_{-i})$ is increasing in $s_i$, by a standard argument, we know that $\alpha(s_i, s_{-i})$ is ex-post IC implementable for agent $i$ if and only if $\alpha(\,\cdot\,, s_{-i})$ is nondecreasing in $s_i$. Moreover, it is without loss of generality to normalize $S_i = [0, 1]$. Therefore, $\alpha: [0, 1]^n \rightarrow [0, 1]$ is ex-post IC implementable if and only if it is a multidimensional monotone function. Moreover, by ex-post IC and the Envelope theorem, we also have that the indirect utility functions $U_i(s) := \alpha(s) v_i(s) - t_i(s)$ must satisfy: 
\[U_i(s) = U_i(0, s_{-i}) + \int^{s_i}_0 \alpha(z, s_{-i}) \frac{\partial}{\partial s_i} v_i(z, s_{-i}) \d z\,. \]
Thus, 
\[t_i(s) = \alpha(s) v_i(s)  - \int^{s_i}_0 \alpha(z, s_{-i}) \frac{\partial}{\partial s_i} v_i(z, s_{-i}) \d z - U_i(0, s_{-i})\,.\]
By ex-post IR, we have $U_i(0, s_{-i})\geq 0$. We may assume without loss of generality that $U_i(0, s_{-i}) = 0$, since we can always define $\tilde{t}_i(s) = t_i (s) + U_i(0, s_{-i})$ and the resulting mechanism $(\alpha, \tilde{t})$ would continue to satisfy budget balance, ex-post IR and IC. Therefore, the designer's problem is equivalent to 
\[\max_{\alpha \in \overline{\mathcal{F}}} \E\Bigg[ \alpha(s) \Big(\sum_i v_i(s) - c \Big)\Bigg]\]
where $\overline{\mathcal{F}}$ is the set of monotone functions $\alpha: [0, 1]^n \rightarrow[0, 1]$ satisfying the constraint
\[
\mathbb{E}\Bigg[\sum_i \left(\alpha(s) v_i(s)  - \int^{s_i}_0 \alpha(z, s_{-i}) \frac{\partial}{\partial s_i} v_i(z, s_{-i}) \d z\right) - c \cdot \alpha (s)\Bigg]\geq 0 \,.\]
By Bauer's maximum principle, we immediately know that the above problem admits a solution that is an extreme point of $\overline{\mathcal{F}}$. Let $\alpha^\star$ be such a solution. By \Cref{thm:nested-up-sets}, we know that $\alpha^\star = \1_{A_1} + p \1_{A_2\backslash A_1}$ for two nested up-sets $A_1 \subseteq A_2$. Clearly, this mechanism can be represented via a two-threshold policy (e.g. set $\phi = \alpha^\star$, $k_L = p$, and $k_H = 1$). 

If, in addition, the agents are symmetric and $(s_1,\ldots,s_n)$ is exchangeable, then whenever $\alpha \in \ub{\F}$ is a solution to the designer's problem, $\alpha_\pi(s_1,\dots,s_n):=\alpha(s_{\pi(1)},\dots, s_{\pi(n)})$, where $\pi$ is any permutation, must also be a solution by symmetry. Thus, $\tilde{\alpha}$ defined by averaging $\alpha_\pi$ over all permutations $\pi$ is also a solution. Therefore, there must be a symmetric optimal mechanism $\alpha(\,\cdot\,)$. By \Cref{prop:sym-nested-up-sets}, every extreme point of the set of symmetric monotone functions $\alpha:[0,1]^n \to [0,1]$ satisfying a linear constraint must be a mixture of indicator functions defined on two symmetric nested up-sets. As a result, there must be a symmetric optimal two-threshold policy.  
\end{proof}

The specifics of the two-threshold policy in \Cref{prop:public}, including the aggregation function $\phi$, the cutoffs $0 \leq k_L \leq k_H \leq 1$, and the randomization weight $p \in [0,1]$, of course depend on the details of the environment, but \Cref{prop:public} shows the optimal mechanism always has this simple feature regardless of how the signals are correlated and how the preferences are interdependent. 

To illustrate \Cref{prop:public}, consider the following example of positive externalities:

\begin{figure}[t]
    \centering
    \includegraphics[scale=0.45]{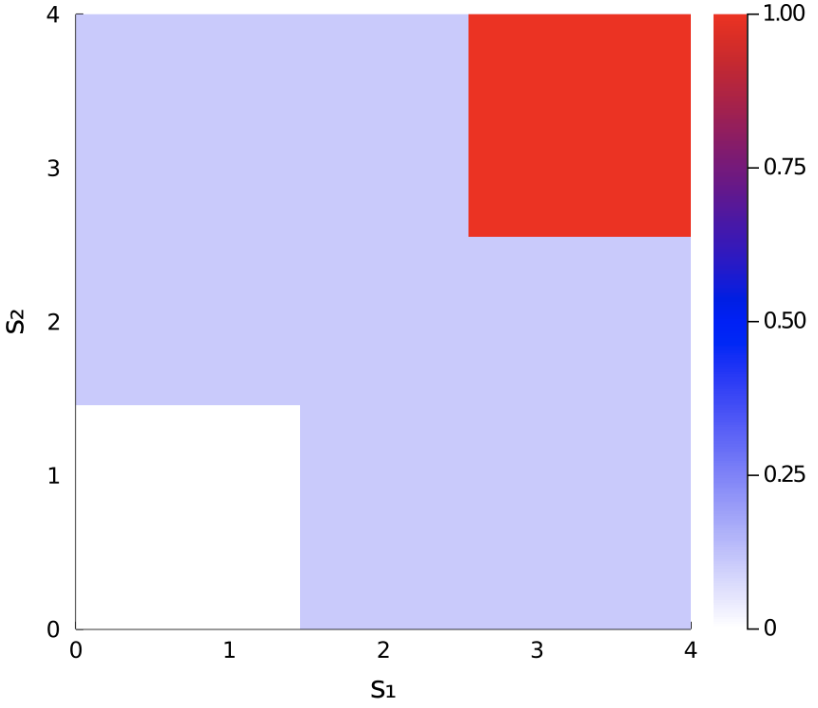}
    \caption{Illustration of optimal allocation rule $\alpha(s_1, s_2)$ for \Cref{ex:public}}
   \label{fig:public}
\end{figure}

\begin{ex}[Positive externalities]\label{ex:public}
Suppose that $n=2$, and that 
\[v_1(s_1, s_{2}) = s_1 + w \cdot s_{2} \quad \mbox{ and } \quad  v_2(s_1,s_2)=s_2+w\cdot s_1\,.\]
where $w > 0$. Suppose that the signals $(s_1, s_2)$ are also positively correlated and distributed according to a truncated, symmetric, joint log-normal distribution. \Cref{prop:public} immediately applies and shows that a symmetric two-threshold policy would be optimal. To illustrate, suppose the parameters are such that $w = 0.1$, and the log-normal distribution is truncated on $[0, 4]^2$, centered at $(2, 2)$, with $(\log(s_1), \log(s_2))$ having a standard deviation $0.4$ and correlation $0.5$. The cost $c = 3$. \Cref{fig:public} illustrates the optimal mechanism computed numerically (by discretizing the type space into a $30$-by-$30$ grid). In this example, the optimal mechanism asks each agent to submit their signals $s_i$, implements the project if both signals are very high, abandons the project if both signals are very low, and otherwise implements the project with a constant, interior probability. \qed
\end{ex}

It is noteworthy that even though the ironing technique \`{a} la \citet{Myerson1981} does not apply for our problem due to its nature of multidimensional monotonicity, \Cref{prop:public} can be applied to fully characterize the optimal mechanism in specific settings.\footnote{Indeed, \citet*{bedard2023multivariate} establish a version of the sweeping operator \`{a} la \citet{Rochet2003} for solving concave maximization problems under multidimensional monotonicity constraints---however, the sweeping operator is considerably less analytically tractable than the ironing procedure of \citet{Myerson1981}.} 

To illustrate, we consider a model of \textit{\textbf{limited negative externalities}}. There are two agents. They cause negative externalities to each other from their consumption of the public good, but they can take actions to secure themselves a payoff lower bound. Specifically, the value functions are given by 
\[v_1(s_1,s_2) = \max\{s_1 - s_2, 0\},\quad  v_2(s_1,s_2) = \max\{s_2 - s_1, 0\}\,.\]
Unlike the positive externalities example (\Cref{ex:public}), the ex-post efficient allocation rule here (implementing the project if and only if $|s_1-s_2|$ is above $c$) is \textit{not} monotone. Thus, the first-best allocation is not implementable even if we ignore the budget balance constraint. Nevertheless, \Cref{prop:public} can be used to derive the second-best mechanism. Consider the following \textit{\textbf{externality-refund mechanism}}: The designer posts a single price $k$ and implements the project if and only if at least one of the agents pays the price, in which case any agent $i$ who pays $k$ can get a refund of the negative externalities caused by the other agent up to the price (i.e., $\min\{k, s_{-i}\}$).

\begin{prop}\label{prop:example}
Suppose that there are two agents and limited negative externalities. If the joint distribution of $(s_1,s_2)$ is exchangeable and has a conditional hazard rate function $h(s_1|s_2):=g(s_1|s_2)/(1-G(s_1|s_2))$ that is strictly increasing in $s_1$ and decreasing in $s_2$, then an optimal mechanism can be implemented by randomizing over two externality-refund mechanisms. 
\end{prop}

\Cref{prop:example} says that with limited negative externalities, under the hazard-rate assumption, the optimal mechanism simply randomizes over two possible posted prices and offers a refund for the negative externalities up to the paid price. The proof of \Cref{prop:example} is in the appendix. It leverages \Cref{prop:public} and shows that the optimal two-threshold policy must have an aggregation function $\max\{s_1, s_2\}$. Indeed, the externality-refund mechanism implements the allocation rule $\1\{\max\{s_1, s_2\} \geq k\}$ in an ex-post Nash equilibrium---each agent $i$ pays if and only if $s_i \geq k$.\footnote{To see that this is an ex-post equilibrium, consider agent $i$'s problem, assuming that the other agent plays the proposed strategy. If $s_{-i} < k$, then agent $i$ faces an effective price of $k - s_{-i}$ given the refund against a benefit of $\max\{s_i - s_{-i}, 0\}$---the net is positive if and only if $s_i \geq k$; if the other agent has $s_{-i} \geq k$, then paying or not paying gives the same payoff to agent $i$ (the project always gets implemented and agent $i$ always gets the payment $k$ back if paying).} Intuitively, the mechanism tries to raise the chance of implementing the project when the surplus is high (i.e., when $s_1$ and $s_2$ differ a lot) while at the same time keeping it incentive compatible. 

As an example, suppose that $(s_1,s_2)$ are drawn from a symmetric, truncated normal distribution on $[0,1]^2$, with means $\mu \in [0,1]$,  variance $\sigma^2>0$, and correlation $\rho \in (0,1)$. By the properties of (truncated) normal distributions, $h(s_1|s_2)$ is indeed strictly increasing in $s_1$ and decreasing in $s_2$. Thus, \Cref{prop:example} applies and shows that a randomized externality-refund mechanism is optimal. In fact, holding other parameters fixed, one can show that there exists $\Bar{\rho} > 0$ such that for all $\rho \in (0, \Bar{\rho})$, a single threshold suffices in the two-threshold policy---the optimal mechanism posts a deterministic price and offers a refund for the negative externalities up to this price.

\subsection{Interim Efficient Frontier of Bilateral Trade}\label{subsec:bilateral}

Consider a classic bilateral trade problem as in \citet{myerson1983efficient}. There is a single good. A buyer privately observes their value $v \in [\underline{v}, \overline{v}]$, drawn from a continuous distribution $G_B$. A seller privately observes their cost $c \in [\underline{c}, \overline{c}]$, drawn from a continuous distribution $G_S$. For this application, we assume that $v$ and $c$ are independent. 

By the revelation principle, it is without loss of generality to focus on \textit{\textbf{(direct, incentive-compatible) mechanisms}} $(p, t):[\underline{v},\overline{v}]\times[\underline{c},\overline{c}] \rightarrow [0, 1] \times \R$, specifying the probabilities of trade and the expected payments made from buyer to seller, satisfying the Bayesian incentive compatibility and Bayesian individual rationality constraints: 
\begin{align*}
\E_c\Big[p(v, c) v  - t(v, c)\Big] &\geq  \E_c\Big[p(\hat{v}, c) v  - t(\hat{v}, c)\Big]   &&\text{ for all $v, \hat{v}$\,;} \tag{Buyer IC}\\
\E_v\Big[t(v, c) -  p(v, c)c   \Big] &\geq  \E_v\Big[t(v, \hat{c}) - p(v, \hat{c})c  \Big]   &&\text{ for all $c, \hat{c}$\,;} \tag{Seller IC}\\
\E_c\Big[p(v, c) v  - t(v, c)\Big] &\geq 0  &&\text{ for all $v$\,;} \tag{Buyer IR}\\
\E_v\Big[t(v, c) -  p(v, c)c   \Big] &\geq 0  &&\text{ for all $c$\,.} \tag{Seller IR}
\end{align*}

For a mechanism $M$, let $U_B(v)$ and $U_S(c)$ be the interim utility functions for the buyer and the seller. We are interested in characterizing the \textit{\textbf{interim efficient mechanisms}}. These mechanisms can be identified as solutions to weighted welfare maximization
\[
\max_{(p, t)} \left[\int_0^1 U_B(v) \Lambda_B(\d v)+\int_0^1 U_S(c) \Lambda_S(\d c)\right]
\]
under some \textit{\textbf{welfare weights}} $\Lambda_B$ and $\Lambda_S$, which are Borel measures on $[0,1]$, subject to the IC and IR constraints (\citealt{holmstrom1983efficient}).

We say that a mechanism $(p, t)$ is a \textit{\textbf{markup-pooling}} mechanism if there exists a nondecreasing function $\phi$, an interval $I=[c_L, c_H]$, and a constant $k \in [0, 1]$ such that 
\begin{itemize}
    \item If $c \not\in I$, then trade happens if and only if $v \geq \phi(c)$. 
    \item Otherwise, let $\tilde{c} = c_L$ with probability $k$ and $\tilde{c} = c_H$ with probability $1-k$. Trade happens if and only if $v \geq \phi(\tilde{c})$.  
\end{itemize}
A markup-pooling mechanism is illustrated by \Cref{fig:MPM}. In such a mechanism, trade is implemented if and only if the value $v$ is above a monotone transformation of the cost $\phi(c)$ (the \textit{\textbf{markup function}}), with the exception that when $c$ falls into a specific interval $I$ (the \textit{\textbf{pooling interval}}), we simply \textit{resample} the cost from the two ends of the interval and execute the trade if the buyer's value $v$ is above the marked up, resampled cost $\phi(\tilde{c})$.

\begin{figure}[t]
    \centering
    \includegraphics[scale=0.3]{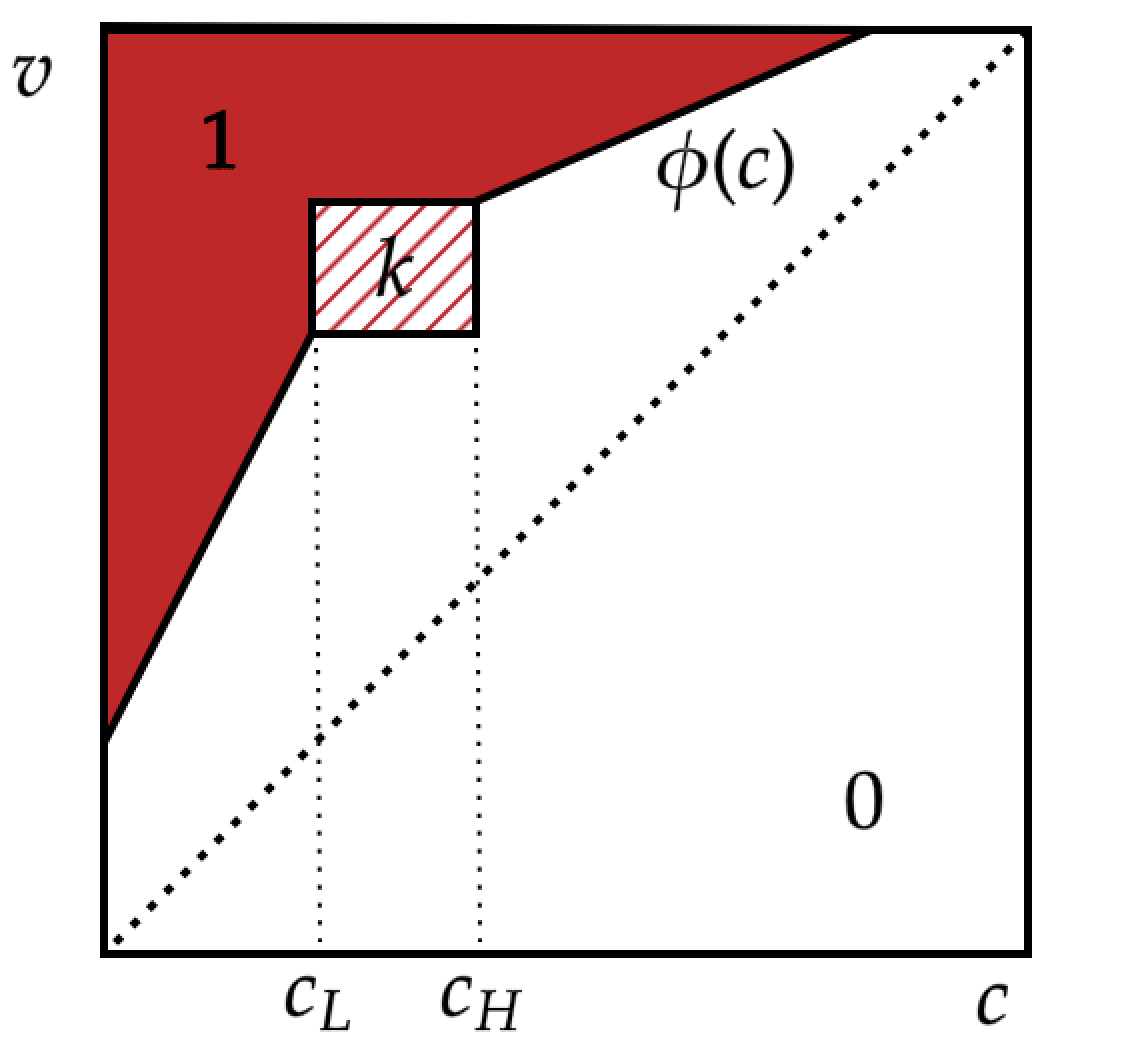}
    \caption{A Markup-Pooling Mechanism}
   \label{fig:MPM}
\end{figure}

\begin{prop}\label{prop:bilateral}
For any welfare weights $\Lambda_B$ and $\Lambda_S$, there exists a markup-pooling mechanism that maximizes the expected welfare. 
\end{prop}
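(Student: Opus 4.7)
The plan is to reduce the design problem, via standard envelope bookkeeping, to maximizing a linear functional on (a subset of) $\overline{\cQ}^{(G_B,\tilde G_S)}$ for a suitable cost-flipped distribution $\tilde G_S$, and then invoke \Cref{cor:rectangle} and translate the resulting extreme-point description into the markup-pooling form.

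First, I would let $q_B(v) := \E_c[p(v,c)]$ and $q_S(c) := \E_v[p(v,c)]$ denote the interim trade probabilities. Bayesian IC is equivalent to $q_B$ being nondecreasing and $q_S$ being nonincreasing, and the envelope theorem gives $U_B(v) = U_B(\underline{v}) + \int_{\underline{v}}^v q_B(z)\d z$ and $U_S(c) = U_S(\overline{c}) + \int_c^{\overline{c}} q_S(z)\d z$, with IR equivalent to $U_B(\underline{v}),U_S(\overline{c})\ge 0$. Reversing the cost axis via $c \mapsto \overline{c}+\underline{c}-c$ makes both marginals nondecreasing, and feasibility of some allocation rule $p:[\underline{v},\overline{v}]\times[\underline{c},\overline{c}]\to[0,1]$ with these marginals is exactly $(G_B,\tilde G_S)$-rationalizability in the sense of \Cref{sec:rationalizable}; by \Cref{lem:gutmann-monotone} the rationalizer can be chosen monotone. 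Writing $T_B(v):=\E_c[t(v,c)]$ and $T_S(c):=\E_v[t(v,c)]$, ex-post budget balance holds iff $\E[T_B]=\E[T_S]$ (one can take $t(v,c):=T_B(v)+T_S(c)-\E[T_B]$), which after integration by parts becomes $U_B(\underline{v})+U_S(\overline{c}) = \int q_B(v)\psi_B(v)G_B(\d v) - \int q_S(c)\psi_S(c)G_S(\d c) =: M(q_B,q_S)$, with $\psi_B,\psi_S$ the usual virtual value and virtual cost. The welfare objective is likewise linear in $(q_B,q_S,U_B(\underline{v}),U_S(\overline{c}))$; maximizing out the nonnegative scalars $U_B(\underline{v}),U_S(\overline{c})$ against this balance equation (always allocating the residual to whichever side carries the higher total $\Lambda$-mass) produces a single new linear objective in $(q_B,q_S)$ and replaces the balance constraint with the inequality $M(q_B,q_S)\ge 0$.

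This reduces the problem to maximizing a linear functional of $(q_B,q_S)$ on the compact convex set $\{(q_B,q_S)\in\cQ^{(G_B,\tilde G_S)}: M(q_B,q_S)\ge 0\}$. By Bauer's maximum principle, a maximum is attained at an extreme point. If the $M$ constraint is slack at the optimum, the extreme point is an extreme point of $\cQ^{(G_B,\tilde G_S)}$ and \Cref{cor:rationalized-upsets} gives a $(G_B,\tilde G_S)$-rationalization by a single $\1_A$; otherwise $M=0$ binds, the extreme point lies in $\overline{\cQ}^{(G_B,\tilde G_S)}$, and \Cref{cor:rectangle} gives a $(G_B,\tilde G_S)$-rationalization $\lambda\1_A + (1-\lambda)\1_{A'}$ with $A'\subseteq A$ nested up-sets differing by at most a rectangle. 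Pulling back to the original $(v,c)$ coordinates, this rectangle takes the form $R=[v_L,v_H]\times[c_L,c_H]$; the degenerate first case corresponds to taking $R$ to be empty.

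It remains to package this extreme-point allocation into markup-pooling form. Outside $R$ the common boundary of $A$ and $A'$ defines a nondecreasing markup function $\phi(c)$, extended continuously so that $\phi(c_L)=v_L$ and $\phi(c_H)=v_H$, and trade occurs iff $v \ge \phi(c)$. On $R$ the extreme-point allocation equals the constant $\lambda$; setting $I=[c_L,c_H]$, $k:=\lambda$, and resampling $\tilde c = c_L$ with probability $k$ and $\tilde c=c_H$ otherwise, the resulting trade probability $k\,\1\{v\ge v_L\}+(1-k)\,\1\{v\ge v_H\}$ equals $\lambda$ on $R$, equals $1$ for $v\ge v_H$, and equals $0$ for $v<v_L$, thereby reproducing the extreme-point allocation. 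A transfer rule closing the balance equation and delivering the required interim payments is then produced in the standard way. The main obstacle is the budget-balance reduction — maximizing out $U_B(\underline{v}),U_S(\overline{c})$ to yield a single linear inequality on $(q_B,q_S)$ — together with the verification that the abstract rectangle of \Cref{cor:rectangle} corresponds precisely to a single resampling interval; with those in hand the extreme-point content is delivered entirely by \Cref{cor:rectangle}.
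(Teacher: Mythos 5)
Your proof is correct and reaches the same conclusion, but arrives there via a genuinely different decomposition. The paper's proof conditions on $U_B(\underline{v}) = z$, showing for each $z\ge 0$ that $\mathcal{U}_z$ is the image of a constrained rationalizability set $\overline{\cQ}_z$ under an explicit continuous affine map, and then uses the affine mapping lemma (\Cref{lem:extreme-projection}) to transfer the extreme-point characterization before invoking \Cref{cor:rectangle}. You instead maximize out the two scalars $U_B(\underline{v})$ and $U_S(\overline{c})$ against the ex-ante budget identity at the outset (allocating the residual to whichever side carries larger $\Lambda$-mass), collapsing the problem into a single linear program over $(q_B,q_S)$ subject to one inequality $M\ge 0$, after which Bauer's maximum principle and \Cref{cor:rectangle} finish. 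Your route avoids the $z$-indexed family and the affine mapping lemma altogether, at the cost of proving the slightly weaker statement (existence of an optimal markup-pooling mechanism) rather than the paper's stronger claim that \emph{every} extreme point of $\mathcal{U}$ is markup-pooling, which the paper later leans on in \Cref{prop:payoff-equivalent}. Two minor remarks: invoking $\overline{\cQ}^{(G_B,\tilde G_S)}$ for the binding case requires flipping the sign of $\psi_1,\psi_2$ relative to the paper's $\le \eta$ convention, which is harmless but worth saying; and your identification $\phi(c_L)=v_L$, $\phi(c_H)=v_H$ is the right way to read the rectangle against the paper's definition of a markup-pooling mechanism, which only references the markup function at the two endpoints of $I$ — the paper's own proof glosses over this same step just as briefly.
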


\begin{proof}[Proof of \Cref{prop:bilateral}]
Let $\mathcal{M}$ be the set of IC and IR mechanisms. Let 
\[\mathcal{U}:= \Big\{(U^M_B, U^M_S): M \in \mathcal{M} \Big\}\]
be the set of implementable interim utility functions. We claim that any extreme point of $\mathcal{U}$ must be implementable by a markup-pooling mechanism. The result follows immediately from this claim and Bauer's maximum principle, since the expected welfare is a continuous linear functional on $\mathcal{U}$ by definition.

Clearly, any extreme point in $\mathcal{U}$ must be an extreme point in $\mathcal{U}_z:= \big\{(U^M_B, U^M_S): M \in \mathcal{M} \text{ and $U^M_B(\underline{v}) = z$} \big\}$ for some $z \in \R_+$. We show that for any $z \in \R_+$, the extreme points of $\mathcal{U}_z$ can be implemented with a markup-pooling mechanism. By \citet{myerson1983efficient}, we may relax the ex-post budget constraint to be ex-ante budget constraint without affecting $\mathcal{U}_z$. In particular, let
\[q_1(v) = \int p(v, c) G_S(\d c), \quad \mbox{ and } \quad q_2 (c) = \int p(v, c) G_B(\d v)\,.\]
The ex-ante budget constraint is simply that 
\[\Pi(q_1, q_2) := \int \text{MR}(v) q_1(v) G_B(\d v) -  \int \text{MC}(c) q_2 (c) G_S(\d c) = U_B(\underline{v}) +  U_S(\overline{c})\,,\]
where $\text{MR}(v):= v - \frac{1-G_B(v)}{g_B(v)}$ and $\text{MC}(c):= c + \frac{G_S(c)}{g_S(c)}$, with $g_i$ being the density of $G_i$. By \citet{myerson1983efficient}, we can write 
\[U_B(v) = \int_{\underline{v}}^{v} q_1(s) \d s + \underbrace{U_B(\underline{v})}_{=z} \,\qquad U_S(c) = \int^{\overline{c}}_c q_2(s) \d s +\underbrace{U_S(\overline{c})}_{= \Pi(q_1, q_2) - z}\,.\]
Now consider 
\[\mathcal{Q}:= \Big\{(q_1, q_2): (G_B,G_S)\text{-rationalizable; } \text{$q_1$ is nondecreasing and $q_2$ is nonincreasing}  \Big\}\]
and 
\[\overline{\mathcal{Q}}_z := \big\{(q_1, q_2)\in \mathcal{Q}: \Pi(q_1, q_2) \geq z\big\}\,.\]
Note that $\Pi(q_1, q_2)$ is a continuous linear functional of $(q_1, q_2)$. Therefore, for any $z \in \R_+$, we have that the map between $\overline{\mathcal{Q}}_z$ and $\mathcal{U}_z$ defined by 
\[L_z[q_1, q_2](v,c) := \Big(\int_{\underline{v}}^{v} q_1(s) \d s + z, \,\, \int^{\overline{c}}_c q_2(s) \d s + \Pi(q_1, q_2) - z\Big)\]
is a continuous linear map. Moreover, $\mathcal{U}_z = L_z(\overline{\cQ}_z)$. It follows from \Cref{lem:extreme-projection} that any extreme point of $\mathcal{U}_z$ must be implementable by an extreme point of $\overline{\mathcal{Q}}_z$. By \Cref{thm:rectangle} (and \Cref{cor:rectangle}), the extreme points must be $\1_{A_1} + \lambda \1_{A_2 \backslash A_1}$, where $A_1\subseteq A_2$ are two nested up-sets in the space of $(v, -c)$, and $A_2 \backslash A_1$ is a rectangle. Note that any such mechanism can be implemented by a markup-pooling mechanism, concluding the proof. 
\end{proof}

As the proof of \Cref{prop:bilateral} shows, the class of markup-pooling mechanisms can in fact attain any extreme point of the set of interim utility functions. An immediate consequence is that \textit{any} trading mechanism is payoff-equivalent to a randomization over markup-pooling mechanisms. Moreover, by our unique rationalizability result (see \Cref{thm:rectangle}), it also follows that if a \textit{\textbf{dominant-strategy incentive compatible (DIC)}} mechanism can attain any such extreme point, then it must be a markup-pooling mechanism. 

\begin{prop}\label{prop:payoff-equivalent}
Any mechanism is payoff-equivalent to a randomization over markup-pooling mechanisms. Moreover, if a DIC mechanism attains an extreme point of the set of feasible interim payoffs, then it must be a markup-pooling mechanism.     
\end{prop}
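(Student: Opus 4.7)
The plan is to combine a Choquet integral representation on the compact convex set $\mathcal{U}$ of feasible interim utility profiles with the unique rationalizability in \Cref{cor:rectangle}. For the first claim, I would first observe that $\mathcal{U}$ is compact and convex, being the continuous affine image of the compact convex set $\bigcup_{z \geq 0} \overline{\mathcal{Q}}_z$ of implementable interim allocations (with $z = U_B(\underline{v})$). By Choquet's integral representation theorem, any $(U^M_B, U^M_S) \in \mathcal{U}$ can be written as the barycenter of a probability measure $\mu$ supported on $\mathrm{ext}(\mathcal{U})$. The proof of \Cref{prop:bilateral} shows that each such extreme point is implementable by a markup-pooling mechanism $M^e$, so the randomized mechanism that draws $e \sim \mu$ and runs $M^e$ induces the same interim utilities as $M$, which is precisely payoff-equivalence.

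For the second claim, suppose a DIC mechanism $(p,t)$ attains an extreme point $(U_B, U_S)$ of $\mathcal{U}$. Setting $z := U_B(\underline{v})$, the pair $(U_B, U_S)$ is an extreme point of $\mathcal{U}_z$. The envelope formula $U_B'(v) = q_1(v)$ a.e.\ and $-U_S'(c) = q_2(c)$ a.e., combined with the binding ex-ante budget identity $\Pi(q_1, q_2) = z + U_S(\overline{c})$, shows that the affine map $L_z$ defined in the proof of \Cref{prop:bilateral} is in fact a bijection between $\overline{\mathcal{Q}}_z$ and $\mathcal{U}_z$ whose inverse is also continuous and affine. Hence the interim allocation pair $(q_1, q_2)$ induced by $(p,t)$ is an extreme point of $\overline{\mathcal{Q}}_z$. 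By \Cref{cor:rectangle}, such a pair is $(G_B, G_S)$-rationalized uniquely among monotone functions by an allocation of the form $\lambda \1_{A_1} + (1-\lambda) \1_{A_2}$ where $A_1 \subseteq A_2$ are nested up-sets (in the reoriented space with coordinate $-c$ in the seller dimension) differing by at most a rectangle, which is exactly the allocation of a markup-pooling mechanism.

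Finally, since $(p,t)$ is DIC, a standard single-crossing argument gives that $p(v,c)$ is nondecreasing in $v$ and nonincreasing in $c$, so after the relabelling $c \mapsto -c$ the allocation $p$ is a multidimensional monotone function in $\mathcal{F}$ that rationalizes $(q_1, q_2)$. The uniqueness clause of \Cref{cor:rectangle} then forces $p$ to coincide almost everywhere with the markup-pooling allocation identified above, and the transfers $t$ are pinned down up to the usual interim constants by the envelope representation together with the binding budget identity, so $(p,t)$ itself is a markup-pooling mechanism.

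The step I expect to be the main obstacle is verifying that $L_z$ is a genuine bijection (not merely a surjection) so that extreme points transfer back from $\mathcal{U}_z$ to $\overline{\mathcal{Q}}_z$; everything else reduces to invoking Choquet's theorem and the rectangle uniqueness already proved. Once the bijection is in hand, the DIC hypothesis is precisely what allows \Cref{cor:rectangle} to bite, since its uniqueness conclusion is stated within the class of monotone rationalizers.
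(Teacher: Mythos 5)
Your proposal matches the paper's proof in both structure and substance: the first part is Choquet's integral representation over $\mathrm{ext}(\mathcal{U})$ combined with the fact that every extreme point is attainable by a markup-pooling mechanism, and the second part restricts to $\mathcal{U}_z$, recovers $(q_1,q_2)$ from $(U_B,U_S)$ by differentiation, and then applies the unique $(G_B,G_S)$-rationalizability clause of \Cref{cor:rectangle} to the monotone allocation $p(v,-c)$ of the DIC mechanism. The injectivity of $L_z$ that you flag as the possible obstacle is not an issue at all---the paper handles it in one line precisely because differentiating the interim utilities recovers the interim allocations, which is the inverse map you were looking for.
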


\begin{proof}[Proof of \Cref{prop:payoff-equivalent}]
By the proof of \Cref{prop:bilateral}, we know that any extreme point of the set of interim utility functions $\mathcal{U}$ must be implementable with a markup-pooling mechanism. Now fix any mechanism $M$. Let $V \in \mathcal{U}$ be the interim utility pairs generated by $M$. By Choquet's theorem, there exists $\mu \in \Delta(\text{ext}(\mathcal{U}))$ such that 
\[V = \int_{\text{ext}(\mathcal{U})} U \mu (\d U)\,.\]
Thus, randomizing over the markup-pooling mechanisms that implement $\text{ext}(\mathcal{U})$ according to $\mu$ generates a payoff-equivalent mechanism. 

Now we prove the second part. Fix any DIC mechanism $(p, t)$ that attains any extreme point of $\mathcal{U}$. Let $z = U_B(\underline{v})$.  Clearly, the mechanism $(p, t)$ must also attain an extreme point of $\mathcal{U}_z$ where $\mathcal{U}_z$ is defined as the set of interim payoffs where the buyer of type  $\underline{v}$ obtains payoff $z$, as in the proof of \Cref{prop:bilateral}. Now, by the proof of \Cref{prop:bilateral}, any extreme point of $\mathcal{U}_z$ must be implementable by some extreme point $(q_1, q_2)$ of the set $\overline{\mathcal{Q}}_z$ defined there. Since $(q_1, q_2)$ are pinned down by $(U_B, U_S)$ as their derivatives, the interim allocation rules of the mechanism $(p,t)$ must be an extreme point of $\overline{\mathcal{Q}}_z$. Moreover, we know that they must be $(G_B, G_S)$-rationalized by $p(v, c)$ which is monotone in $(v, -c)$, since $(p, t)$ is DIC. By \Cref{thm:rectangle} (and \Cref{cor:rectangle}), any extreme point of $\overline{\mathcal{Q}}_z$ must be uniquely $(G_B, G_S)$-rationalized among all monotone functions, and hence $p(v, c)$ must be identical to a mixture of two indicator functions defined on two nested up-sets that differ by a rectangle. Thus, $(p, t)$ is a markup-pooling mechanism. 
\end{proof}

\begin{figure}[t]
    \centering
    \includegraphics[scale=0.5]{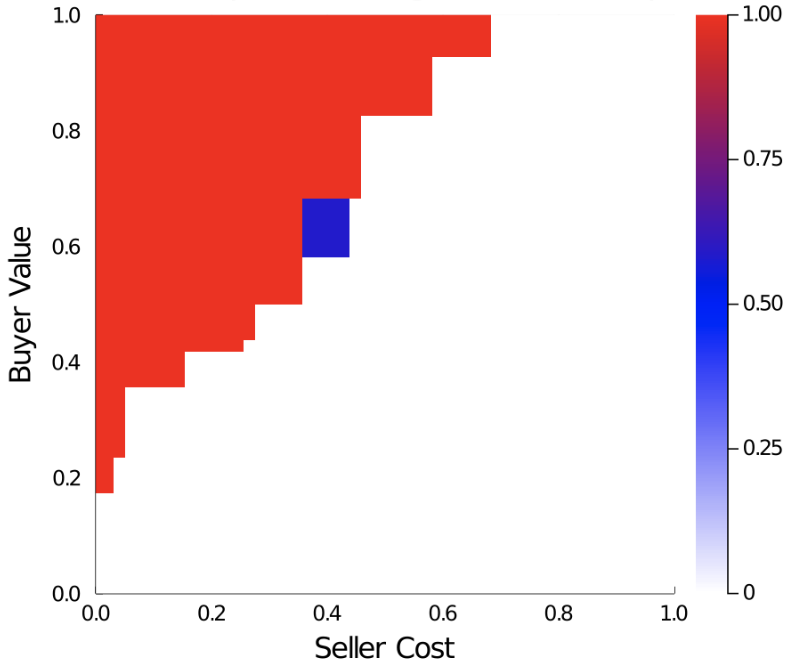}
    \caption{Illustration of interim-efficient allocation rule $p(v, c)$ for \Cref{ex:trade}}
    \label{fig:trade}
\end{figure}

\begin{ex}\label{ex:trade}
To illustrate the markup-pooling mechanisms and \Cref{prop:bilateral}, we provide a numerical example. Consider the case where $v, c \in [0, 1]$. \Cref{fig:trade} depicts a numerically computed optimal mechanism in a $50$-by-$50$ discretized grid, where we uniformly sample a generic probability density and a generic welfare weight for both the buyer and the seller. The blue rectangle in \Cref{fig:trade} is the region where the trade happens with a constant, interior probability. The allocation rule can be implemented using a markup-pooling mechanism, where the markup function is the lower boundary of the larger up-set (the union of the red up-set and the blue rectangle). The pooling interval is the interval of the seller's types that fall into the blue rectangle. \qed
\end{ex}

Unlike in \Cref{subsec:public}, the monotonicity constraint here is the standard one-dimensional monotonicity resulting from Bayesian IC. But the difficulty arises when dealing with the budget balance constraint. To the best of our knowledge, the existing literature on this problem (e.g. \citealt{myerson1983efficient}; \citealt{ledyard1999characterization}) assumes joint conditions on welfare weights and type distributions such that \textit{(i)} the monotonicity condition never binds, and \textit{(ii)} the Lagrangian relaxation has a unique solution regardless of the multiplier on the budget balance constraint. Under such assumptions, the optimal mechanism never involves randomization. The difficulty without these assumptions is that the budget balance constraint is a type-dependent linear constraint that would interact with the monotonicity constraint. In particular, without these assumptions, there would generally be many optimal solutions to the ironed Lagrangian, and one needs to carefully select an optimal solution that is feasible with respect to the budget constraint. One can use ironing to show that the optimal mechanism may assign arbitrary values to a countable collection of rectangles. Our extreme point characterization refines the ironing approach by showing that the trading probabilities are all equal to a \textit{\textbf{constant}} on these rectangles, and in fact, there is only a \textit{\textbf{single}} rectangle needed in the optimal allocation rule. Together, these two features characterize our markup-pooling mechanisms.

\subsection{Mechanism Anti-Equivalence}\label{subsec:anti-equivalence}

Consider a standard social choice environment with linear utilities and independent, one-dimensional, private types as in \citet{gershkov2013equivalence}. Suppose that we have two agents $\mathcal{I}=\{1, 2\}$ and two alternatives $\mathcal{K}= \{1,2\}$. Each agent $i$ privately observes their type $\theta_i \in \Theta_i \subset \R$, where $\Theta_i$ is a compact interval and $\theta_i$ follows some continuous distribution $G_i$. Agent $i$'s utility for alternative $k$ is given by 
\[u^k_i(\theta_i, t_i) = a^k_i \theta_i + c^k_i - t_i\]
where $a^k_i, c^k_i \in \R$ are constants with $a^1_i \neq a^2_i$, and $t_i \in \R$ is a monetary transfer. 
Examples of such an environment include the classic public good provision and bilateral trade problems. By the revelation principle, it is without loss of generality to consider direct mechanisms
\[(p, t): \Theta_1 \times \Theta_2 \rightarrow \Delta(\mathcal{K}) \times \R^2\]
that specify the probabilities of implementing each alternative and the transfers from each agent for a given profile of type report.

A mechanism $(p,t)$ is \textit{\textbf{Bayesian incentive compatible (BIC)}} if truthful reporting by all agents is a Bayes-Nash equilibrium. A mechanism $(p,t)$ is \textit{\textbf{dominant strategy incentive compatible (DIC)}} if truthful reporting is a
 dominant strategy equilibrium. A mechanism $(p,t)$ is \textit{\textbf{deterministic}} if $p(\theta)$ is a deterministic allocation for all $\theta$. 

We study the question of interim equivalence between mechanisms initiated by \citet{manelli2010bayesian}. Two BIC mechanisms $(p, t)$ and $(\tilde{p}, \tilde{t})$ are \textit{\textbf{payoff equivalent}} if they yield the same ex-ante expected surplus and the same interim expected utilities for all agents.\footnote{Because we have two alternatives, this equivalence notion is the same as asking the two mechanisms to have the same interim expected allocation probabilities.} From a series of fundamental contributions (\citealt{manelli2010bayesian}; \citealt{gershkov2013equivalence}; \citealt{chen2019equivalence}), we now know that for any BIC mechanism, there exists a payoff-equivalent DIC mechanism (\textit{\textbf{BIC-DIC equivalence}}); for any stochastic mechanism, there exists a payoff-equivalent deterministic mechanism (\textit{\textbf{stochastic-deterministic equivalence}}). These two equivalence results suggest that for a wide range of objectives, it is without loss to consider DIC mechanisms which are strategically simpler and robust to changes in the agents' beliefs; similarly, it is without loss to consider deterministic mechanisms which do not require a verifiable randomization device. 

However, as an application of our abstract results, we show that there is a strong conflict between asking for strategically robust implementations and asking for verifiable deterministic mechanisms. 
Say that two payoff-equivalent mechanisms are \textit{\textbf{ex-post equivalent}} if they implement the same ex-post allocation probabilities for all type profiles (up to measure zero). 

\begin{prop}\label{prop:anti}
A BIC mechanism is payoff-equivalent to a deterministic DIC mechanism if and only if they are ex-post equivalent. 
\end{prop}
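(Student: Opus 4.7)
The plan is to dispose of the ``if'' direction trivially---ex-post equivalence yields identical interim allocations, hence identical interim payoffs---and to reduce the ``only if'' direction to the unique rationalizability clause of \Cref{thm:rationalized-upsets}, or, more precisely, to its general-marginal extension \Cref{cor:rationalized-upsets}.

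For the only-if direction, let $p_M(\theta)$ and $p_{M'}(\theta)$ denote the probability assigned to alternative~$1$ under the BIC mechanism $M$ and the deterministic DIC mechanism $M'$, respectively, and let $q_i^M(\theta_i) := \E_{\theta_{-i}}[p_M(\theta_i,\theta_{-i})]$ (and analogously for $M'$). After relabeling alternatives for each agent if needed, I may assume $a_i^1 > a_i^2$, so that BIC requires each $q_i^M$ to be nondecreasing and DIC requires $p_{M'}(\cdot,\theta_{-i})$ to be nondecreasing in $\theta_i$ for every $\theta_{-i}$. The first step is to show that payoff equivalence forces the interim allocations to coincide. This uses the standard envelope representation
\[
U_i(\theta_i) = U_i(\underline{\theta}_i) + (a_i^1 - a_i^2)\int_{\underline{\theta}_i}^{\theta_i} q_i(s)\d s + a_i^2(\theta_i - \underline{\theta}_i),
\]
valid for any BIC mechanism: identical interim utilities $U_i$ across $M$ and $M'$, combined with $a_i^1 \neq a_i^2$, yield $q_i^M = q_i^{M'}$ almost everywhere for $i=1,2$. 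Write $(q_1,q_2)$ for this common pair.

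The second step exploits the structure of $M'$. Since $p_{M'}$ takes values in $\{0,1\}$ and is nondecreasing in each coordinate, $p_{M'} = \1_A$ for some up-set $A \subseteq \Theta_1\times\Theta_2$. Hence $(q_1,q_2)$ is $(G_1,G_2)$-rationalized by the indicator of an up-set, and \Cref{cor:rationalized-upsets} asserts that such a pair is \emph{uniquely} $(G_1,G_2)$-rationalized---crucially, among \emph{all} functions $\Theta_1\times\Theta_2 \to [0,1]$, not merely among monotone ones. Since $p_M$ also rationalizes $(q_1,q_2)$, uniqueness forces $p_M = \1_A = p_{M'}$ almost everywhere, which is exactly ex-post equivalence.

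I do not expect any genuine obstacle beyond bookkeeping; the whole substance of the argument is carried by the uniqueness clause of \Cref{thm:rationalized-upsets}. That clause is precisely what is needed, because BIC alone imposes no monotonicity requirement on the joint allocation $p_M$, so a monotone-only uniqueness statement would be insufficient to force $p_M = \1_A$. The remaining steps---aligning monotonicity directions by relabeling alternatives, running the envelope computation to convert payoff equivalence into equality of interim allocations, and invoking \Cref{cor:rationalized-upsets} in the general-marginal setting rather than its uniform-marginal prototype---are routine.
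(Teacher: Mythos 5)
Your proposal matches the paper's argument in substance: reduce to equality of interim allocations via the envelope formula, observe that a deterministic DIC mechanism yields an up-set indicator, and then invoke unique rationalizability (here \Cref{cor:rationalized-upsets}, the $(G_1,G_2)$-version of \Cref{thm:rationalized-upsets}) to force the ex-post allocations to coincide. The one slip is the normalization: you cannot ``relabel alternatives for each agent'' independently, since the two alternatives are common to both agents---if $a^1_1>a^2_1$ but $a^1_2<a^2_2$, a global swap of the alternative labels merely exchanges which agent has $a^1_i>a^2_i$, so WLOG $a^1_i>a^2_i$ for all $i$ is not achievable this way. The paper instead flips the \emph{type coordinate} for each agent with $a^1_i<a^2_i$ (replacing $G^{-1}_i(\cdot)$ by $G^{-1}_i(1-\cdot)$ in the quantile transformation), which turns a nonincreasing interim allocation and an antitone direction of $p$ into the required monotone-up-set structure. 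With that substitution your argument goes through verbatim; everything else---the envelope step, the two-alternative algebra that pins down $q^k_i$, and the appeal to the uniqueness clause over \emph{all} functions rather than just monotone ones---is exactly what the paper does.
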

\begin{proof}[Proof of \Cref{prop:anti}]For any BIC mechanism $(p, t)$, by the Envelope theorem, 
\[U_i(\theta_i) = U_i(\underline{\theta}_i) + \int_{\underline{\theta}_i}^{\theta_i} \sum_k a^k_i q^k_i(s_i) \d s_i\,, \]
where $U_i(\theta_i)$ denotes the interim payoff for type $\theta_i$, and $q^k_i(\theta_i)$ denotes the interim allocation probability of alternative $k$ for type $\theta_i$. Thus, the payoff equivalence between two mechanisms $(q, t)$ and $(\tilde{q}, \tilde{t})$ implies that for all $i$ and $\theta_i$
\[ \sum_k a^k_i q^k_i(\theta_i) = \sum_k a^k_i \tilde{q}^k_i(\theta_i)\,.\]
Since $\sum_k q^k_i(\theta_i) = 1$, and since we have two alternatives, the above implies that $q^k_i(\theta_i) = \tilde{q}^k_i(\theta_i)$ for all $k$, $i$, and $\theta_i$. 

Now, note that for any BIC mechanism $(p, t)$, by a standard argument, we also have that $\sum_k a^k_i q^k_i(\theta_i)$ is nondecreasing in $\theta_i$. Because we have two alternatives, this is equivalent to $a^2_i + (a^1_i - a^2_i) q^1_i(\theta_i)$ being nondecreasing in $\theta_i$. Thus, for $a^1_i > a^2_i$, this is equivalent to $q^1_i(\theta_i)$ being nondecreasing, and for $a^1_i<a^2_i$, this is equivalent to $q^1_i(\theta_i)$ being nonincreasing. If $a^1_i > a^2_i$, let 
\[\widehat{q}^1_i(s_i) := q^1_i(G^{-1}_i(s_i))\,,\]
and otherwise let 
\[\widehat{q}^1_i(s_i) := q^1_i(G^{-1}_i(1-s_i))\,.\]
Then $\widehat{q}^1_i$ must be nondecreasing in $s_i$. Define analogously $\widehat{p}^1(s_1, s_2)$, where we apply the transformation $G^{-1}_i(\,\cdot\,)$ or $G^{-1}_i(1-\,\cdot\,)$ to each argument of $p^1(\theta_1, \theta_2)$ depending on whether $a^1_i > a^2_i$. For any BIC mechanism $(p,t)$, clearly the monotone pair $(\widehat{q}^1_1, \widehat{q}^1_2)$ must be rationalized by  $\widehat{p}^1$. 
Moreover, for any DIC mechanism $(p,t)$, by a similar argument, we also have that $\widehat{p}^1$ must be monotone. 

Fix two payoff-equivalent mechanisms $(p, t)$ and $(\tilde{p}, \tilde{t})$. Their corresponding transformed interim allocation rules $(\widehat{q}^1_1, \widehat{q}^1_2)$ and $(\widehat{\tilde{q}}^1_1, \widehat{\tilde{q}}^1_2)$ must coincide. Now, if $(p, t)$ is deterministic \textit{and} DIC, then by the above argument $\widehat{p}^1$ must be an indicator function defined on an up-set. Thus, by \Cref{thm:rationalized-upsets},  $(\widehat{q}^1_1, \widehat{q}^1_2)$ is an extreme point of the set of monotone rationalizable pairs $\mathcal{Q}$, and hence must be uniquely rationalized, which implies $\widehat{\tilde{p}}^1 \equiv \widehat{p}^1$. Therefore, the two mechanisms are ex-post equivalent. 
\end{proof}

Another way to state \Cref{prop:anti} is that every deterministic DIC mechanism must yield interim payoffs that differ from those of every mechanism that is non-deterministic or non-DIC (in the sense of ex-post allocations). \Cref{prop:anti} shows that, in a strong sense, the BIC-DIC equivalence results of \citet{manelli2010bayesian} and \citet{gershkov2013equivalence} necessarily rely on non-deterministic mechanisms. Similarly, the stochastic-deterministic equivalence result of \citet{chen2019equivalence} necessarily relies on non-DIC mechanisms.\footnote{\citet{chen2019equivalence} provide an example to show that not every stochastic BIC mechanism is payoff-equivalent to a deterministic DIC mechanism. \Cref{prop:anti} shows that this is in fact never possible in the two-agent, two-alternative setting.}

\subsubsection{Beyond two agents: Exposed mechanisms}

As the proof of \Cref{prop:anti} shows, the key intuition behind the mechanism anti-equivalence theorem is that the interim allocation rules of the deterministic DIC mechanisms are extreme points of the set of feasible interim allocation rules, and that these extreme points must be uniquely rationalized, by \Cref{thm:rationalized-upsets}, and hence pin down the ex-post allocation rules. While the fact that deterministic DIC mechanisms induce extremal interim allocation rules relies on the two-agent assumption, the property that certain ``optimal mechanisms'' must be uniquely rationalized turns out to generalize to an arbitrary number of agents. We say that a mechanism $(p, t)$ is \textit{\textbf{exposed}} if it is uniquely optimal for some linear objective of the interim allocation rules (e.g., efficiency or revenue).

\begin{prop}\label{prop:exposed}
Under $n$ agents and two alternatives, a mechanism is payoff-equivalent to an exposed mechanism if and only if they are ex-post equivalent. 
\end{prop}

\begin{proof}[Proof of \Cref{prop:exposed}]
By \Cref{rmk:exposed} and \Cref{thm:exposedQ}, if $(p, t)$ is an exposed mechanism, then its induced interim allocation rule $(q_1,\dots, q_n)$, when appropriately defined as in the proof of \Cref{prop:anti}, must be the marginals of $\1_A$ where $A$ is an additive set. By the proof of \Cref{thm:rationalized-upsets} (or by \citealt{fishburn1990sets}), we have that $\1_A$ is uniquely rationalizable. Thus, the result follows by the proof of \Cref{prop:anti}.
\end{proof}

\Cref{prop:exposed} shows that for any such optimal mechanism, there is no non-trivial mechanism equivalence theorem since there exists a unique implementation in terms of the ex-post allocation rule. Of course, this is not in contradiction to \citet{manelli2010bayesian}, \citet{gershkov2013equivalence} and \citet{chen2019equivalence}---the result shows that the unique ex-post allocation rule corresponding to any such optimal mechanism must be implementable via a deterministic DIC mechanism.

\subsection{Asymmetric Reduced Form Auctions}\label{subsec:auction}

Consider a classic auction setting \'{a} la \citet{Myerson1981} with two bidders. There is one seller and one good. Each bidder $i$ privately observes their value $v_i$ for the object, drawn from a distribution $G_i$. We assume that $v_1$ and $v_2$ are independent, but can be drawn from two different distributions. 

By the revelation principle, it is without loss of generality to focus on \textit{\textbf{(direct, incentive-compatible) mechanisms}} $(p_1,p_2, t_1, t_2):[\underline{v}_1,\overline{v}_1]\times[\underline{v}_2, \overline{v}_2] \rightarrow [0, 1]^2 \times \R^2$, which specifies the probabilities of allocating the good to bidder $1$ and bidder $2$ as well as the monetary transfers they make to the seller, satisfying the Bayesian incentive compatibility, Bayesian individual rationality constraints, and feasibility constraint: 
\begin{align*}
\E_{v_{-i}}\Big[p_i(v) v_i  - t_i(v)\Big] &\geq  \E_{v_{-i}}\Big[p_i(
\hat{v}_i, v_{-i}) v_i  - t_i(\hat{v}_i, v_{-i})\Big]  &&\text{ for all $i$, $v_i$, $\hat{v}_i$\,;} \tag{BIC} \\
\E_{v_{-i}}\Big[p_i(v) v_i  - t_i(v)\Big] &\geq 0 &&\text{ for all $i$, $v_i$\,;} \tag{BIR}\\
\sum_i p_i(v)  &\leq 1 &&\text{ for all $v$\,.} \tag{Feasibility} 
\end{align*}

A pair of monotone functions $(q_1,q_2)$, where $q_i: [\underline{v}_i, \overline{v}_i] \rightarrow [0, 1]$, is a \textit{\textbf{reduced-form auction}} if there exists a mechanism $(p, t)$ such that for all $i$
\[q_i(v_i) = \int p_i(v_i, v_{-i}) G_{-i}(\d v_{-i})\,.\]

An immediate consequence of our main result is a characterization of the reduced-form auctions with two bidders and their extreme points: 
\begin{prop}\label{prop:reduced}
$(q_1, q_2)$ is a reduced-form auction if and only if the associated quantile allocations $\tilde{q}_i(s):= q_i(G^{-1}_i(s))$ satisfy 
\[\tilde{q}_1 \preceq_w\tilde{q}^{-1}_2\,,\]
with every extreme point satisfying 
\[\tilde{q}_1 \equiv \tilde{q}^{-1}_2 \1_{[k_1,1]} \quad\text{ and }\quad \tilde{q}_2 \equiv \tilde{q}^{-1}_1 \1_{[k_2,1]}  \]
for some $k_1, k_2 \in [0, 1]$. 
\end{prop}
\begin{proof}[Proof of \Cref{prop:reduced}]
We start by showing that any reduced form auction $(q_1, q_2)$ must satisfy $\tilde{q}_1 \preceq_w \tilde{q}^{-1}_2$. Let 
\[\overline{q}_1(x_1) := \int_0^1 [1 - p_1(G^{-1}_1(x_1), G^{-1}_2(x_2))] \d x_2\,, \qquad \overline{q}_2(x_2) := \int_0^1 [1 - p_1(G^{-1}_1(x_1), G^{-1}_2(x_2))] \d x_1\,.\]
Let $\overline{q}^\uparrow_i$ be the monotone rearrangement of $\overline{q}_i$. By construction, $(\overline{q}^\uparrow_1, \overline{q}^\uparrow_2)$ is rationalizable and hence by \Cref{lem:gutmann}, $\overline{q}^\uparrow_1 \preceq (\overline{q}_2^\uparrow)^{-1}$. Thus,
\[ \int^s_0 \overline{q}^\uparrow_1(z) \d z \geq \int^s_0 1-(\overline{q}^\uparrow_2)^{-1}(1-z) \d z\,. \]
By definition of $\overline{q}^\uparrow_1$, and the monotonicity of $\tilde{q}_1$, we have $\overline{q}^\uparrow_1(z) = 1 - \tilde{q}_1(1-z)$ and hence 
\[ \int^s_0  [1 - \tilde{q}_1(1-z)] \d z \geq \int^s_0 [1-(\overline{q}^\uparrow_2)^{-1}(1-z)] \d z\,, \]
which is equivalent to 
\[ \int^1_{1-s}   \tilde{q}_1(z) \d z \leq \int^1_{1-s}(\overline{q}^\uparrow_2)^{-1}(z) \d z\,. \]
But since $p_1 + p_2 \leq 1$, we have $\overline{q}_2 \geq \tilde{q}_2$, and hence $\overline{q}^{\uparrow}_2 \geq \tilde{q}^{\uparrow}_2 = \tilde{q}_2$, where the equality is due to the monotonicity of $\tilde{q}_2$. Thus, $(\overline{q}^{\uparrow}_2)^{-1} \leq (\tilde{q}_2)^{-1}$, and hence
\[ \int^1_{1-s}   \tilde{q}_1(z) \d z \leq \int^1_{1-s}(\tilde{q}_2)^{-1}(z) \d z\,. \]
Since this holds for all $s$, we have $\tilde{q}_1 \preceq_w \tilde{q}^{-1}_2$.

Now, for the other direction, consider the set 
\[\tilde{\mathcal{Q}}:=\Big\{(\tilde{q}_1, \tilde{q}_2): \tilde{q}_1 \preceq_w \tilde{q}^{-1}_2\,; \tilde{q}_1, \tilde{q}_2 \text{ are nondecreasing and left-continuous} \Big\}\,.\]
By \Cref{prop:weak}, we know that $\tilde{\mathcal{Q}}$ is convex and every extreme point of $\tilde{\mathcal{Q}}$ satisfies  
\begin{equation}\label{eq:one-side}
\tilde{q}_1 \equiv \tilde{q}^{-1}_2  \1_{[k, 1]} 
\end{equation}
for some $k \in [0, 1]$.\footnote{Indeed, consider the convex set $\mathcal{Q}^w:= \big\{(q_1, q_2): q_1 \preceq_w \hat{q}_2\,; q_1, q_2 \text{ are nondecreasing, left-continuous} \big\}$ in \Cref{prop:weak}, and the linear map $L$ defined by $(q_1, q_2) \mapsto (q_1, 1 - q_2(1-\cdot))$. Let $\check{q}_2(x_2):= 1 - q_2(1-x_2)$. Note that $(\check{q}_2)^{-1} = \hat{q}_2$. Thus, $q_1 \preceq_w \hat{q}_2$ if and only if $q_1 \preceq_w \check{q}^{-1}_2$. Therefore, $L(\mathcal{Q}^w) = \{(q_1, q_2): q_1 \preceq_w q^{-1}_2\,;\, q_1, q_2 \text{ are nondecreasing, left-continuous}\} = \tilde{\cQ}$. By \Cref{lem:extreme-projection}, we have $\text{ext}(L(\cQ^w))\subseteq L(\text{ext}(\cQ^w))$. Thus, $\tilde{\cQ}$ is convex and every extreme point of $\tilde{\cQ}$ satisfies $q_1 \equiv q^{-1}_2\1_{[k, 1]}$ for some $k \in [0, 1]$ by \Cref{prop:weak}.}
We show that any $(q_1, q_2)$ satisfying the above is a reduced-form auction. In particular, let 
\[\qquad \qquad \tilde{p}_1(x_1, x_2) := \begin{cases}
 1 &\text{ $x_1 \geq \max\{k, \tilde{q}_2(x_2) \}$} \\
 0 & \text{ otherwise} \,;
\end{cases}\]
and 
\[\tilde{p}_2(x_1, x_2) := \begin{cases}
 1 &\text{ $x_1 < \tilde{q}_2(x_2)$} \\
 0 & \text{ otherwise} \,.
\end{cases}\]
Let $p_i(v_1, v_2) := \tilde{p}_i(G_1(v_1), G_2(v_2))$ be the allocation rules in the value space. One may verify that $(p_1, p_2)$ induces $(q_1, q_2)$, and hence $(q_1, q_2)$ is a reduced-form auction.

Now, take any $(q_1, q_2)$ such that $(\tilde{q}_1, \tilde{q}_2) \in \tilde{\mathcal{Q}}$. By Choquet's integral representation theorem, we know that there exists $\mu \in \Delta(\mathrm{ext}(\tilde{\mathcal{Q}}))$ such that 
\[\tilde{q} = \int_{\mathrm{ext}(\tilde{\mathcal{Q}})} q^\dagger \mu (\d q^\dagger)\,. \]
For any $q^\dagger \in \mathrm{ext}(\tilde{\mathcal{Q}})$, by the above construction, there exists $\tilde{p}(q^\dagger)$ that induces $q^\dagger$. It follows immediately that we can induce $(q_1, q_2)$ by setting $p(v_1, v_2) := \overline{p}(G(v_1), G(v_2))$ where 
\[ \overline{p} := \int_{\mathrm{ext}(\tilde{\mathcal{Q}})} \tilde{p}(q^\dagger) \mu (\d q^\dagger)\,, \]
proving that $(q_1, q_2)$ is a reduced-form auction. 

Lastly, we show that any extreme point of the set $\tilde{\mathcal{Q}}$ must satisfy
\[
\tilde{q}_1\equiv\tilde{q}_2^{-1}\1_{[k_1,1]} \quad \mbox{ and } \quad \tilde{q}_2 \equiv  \tilde{q}_1^{-1} \1_{[k_2,1]}
\]
for some $k_1,k_2 \in [0,1]$. As we have already shown, every extreme point of $\tilde{\cQ}$ must satisfy $\tilde{q}_1 \equiv \1_{[k_1,1]}\tilde{q}_2^{-1}$ for some $k_1 \in [0,1]$, according to \eqref{eq:one-side}. To show that every extreme point of $\tilde{\cQ}$ must also satisfy  $\tilde{q}_2 \equiv \1_{[k_2,1]}\tilde{q}_{1}^{-1}$ for some $k_2 \in [0,1]$, note that $\tilde{q}_1 \preceq_w \tilde{q}_2^{-1}$ if and only if $\tilde{q}_2 \preceq_w \tilde{q}_1^{-1}$.\footnote{To see this, by Theorem 4.A.6 of \citet{shaked2007stochastic}, $\tilde{q}_1 \preceq_w \tilde{q}_2^{-1}$ if and only if there exists a nondecreasing, left-continuous $\tilde{q}:[0,1]\to [0,1]$ such that $\tilde{q}_1 \preceq \tilde{q} \leq \tilde{q}_2^{-1}$. Therefore, $\tilde{q}_2 \leq \tilde{q}^{-1} \preceq \tilde{q}_1^{-1}$. Again, by Theorem 4.A.6 of \citet{shaked2007stochastic}, $\tilde{q}_2 \preceq_w \tilde{q}_1^{-1}$.}As a result, $\tilde{q} \in \tilde{Q}$ if and only if $\tilde{q}_2 \preceq_w \tilde{q}^{-1}$. By \Cref{prop:weak} again, it then follows that any extreme point of $\tilde{\mathcal{Q}}$ must also satisfy $\tilde{q}_2\equiv \tilde{q}_1^{-1}\1_{[k_2,1]}$ for some $k_2 \in [0,1]$.  This completes the proof. 
\end{proof}

The first part of \Cref{prop:reduced} recovers the reduced-form asymmetric auction characterization given by \citet*[][Corollary 8]{he2024private}. The proof of \citet*{he2024private} is based on their characterization of feasible posterior beliefs induced by a private private information structure. In contrast, \Cref{prop:reduced} results from the characterization of extreme points of weak joint majorization sets and Choquet's integral representation theorem. 

Moreover, \Cref{prop:reduced} effectively characterizes the extreme points of reduced-form asymmetric auctions in a two-bidder setting, which in turn pins down structures of optimal auctions for an arbitrary quasi-convex objective function that depends on the interim allocations.\footnote{This follows from a suitable version of Bauer's maximum principle; see e.g. \citet{ball2023bauer}.} Specifically, we have the following consequence: 

\begin{cor}\label{cor:opt}
Any upper-semicontinuous and quasi-convex functional of the reduced-form auctions is maximized by some $(q_1^\star,q_2^\star)$ such that
\[
\tilde{q}_1^\star \equiv \1_{[k_1,1]}(\tilde{q}_2^\star)^{-1} \quad \mbox{ and } \quad \tilde{q}_2^\star \equiv \1_{[k_2,1]}(\tilde{q}_1^\star)^{-1}\,,
\]
for some $k_1,k_2 \in [0,1]$. 
\end{cor}

\begin{figure}[t]
\centering
\begin{subfigure}[b]{0.3\linewidth}
\centering
\includegraphics[scale=0.22]{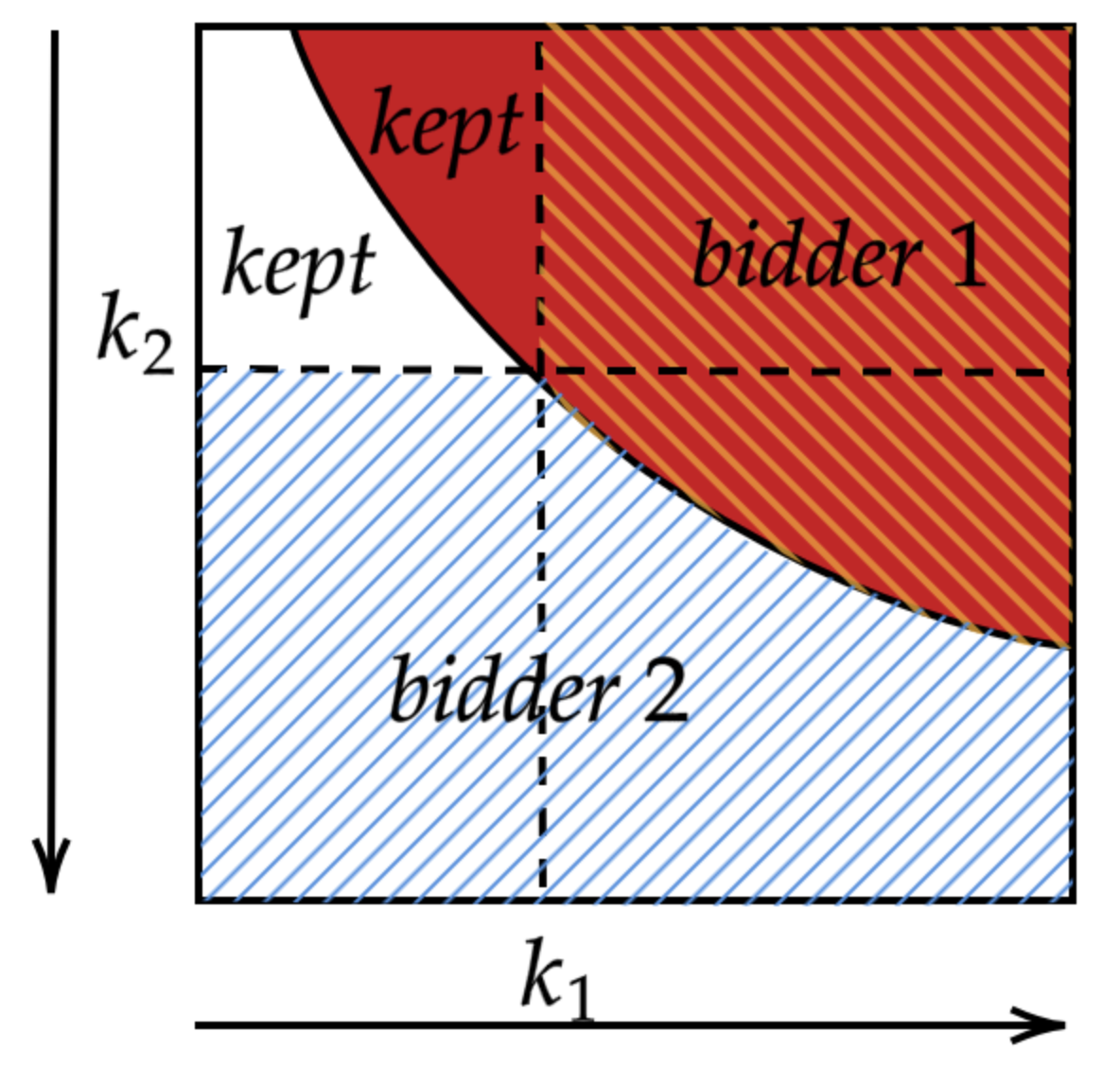}
\caption{An Extreme Point}
\label{fig:reduced-form}
\end{subfigure}%
\begin{subfigure}[b]{0.3\linewidth}
\centering
 \includegraphics[scale=0.22]{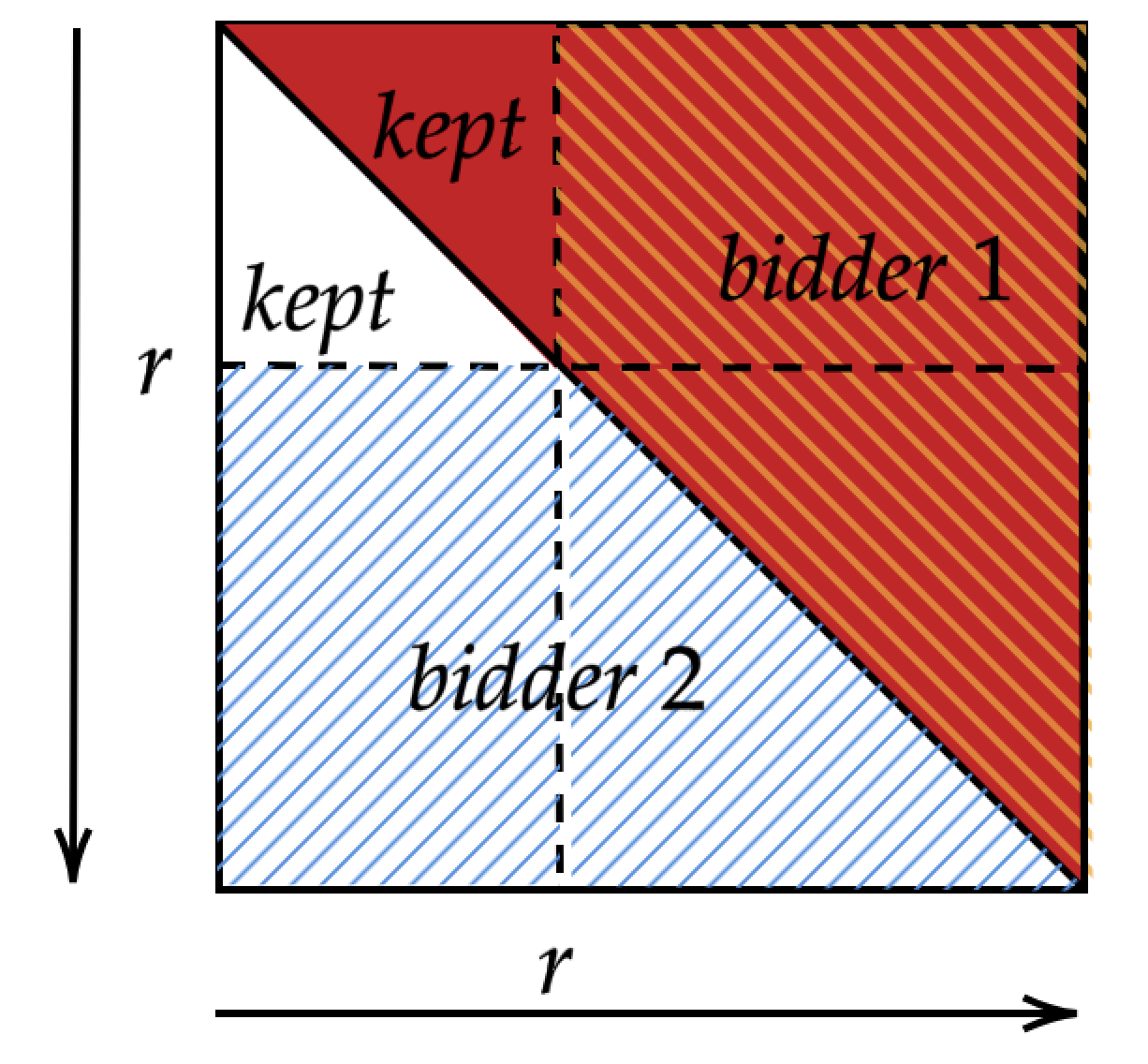}
\caption{SPA with Reserve}
\label{fig:reduced-form-b}
\end{subfigure}%
\begin{subfigure}[b]{0.3\linewidth}
\centering
\includegraphics[scale=0.22]{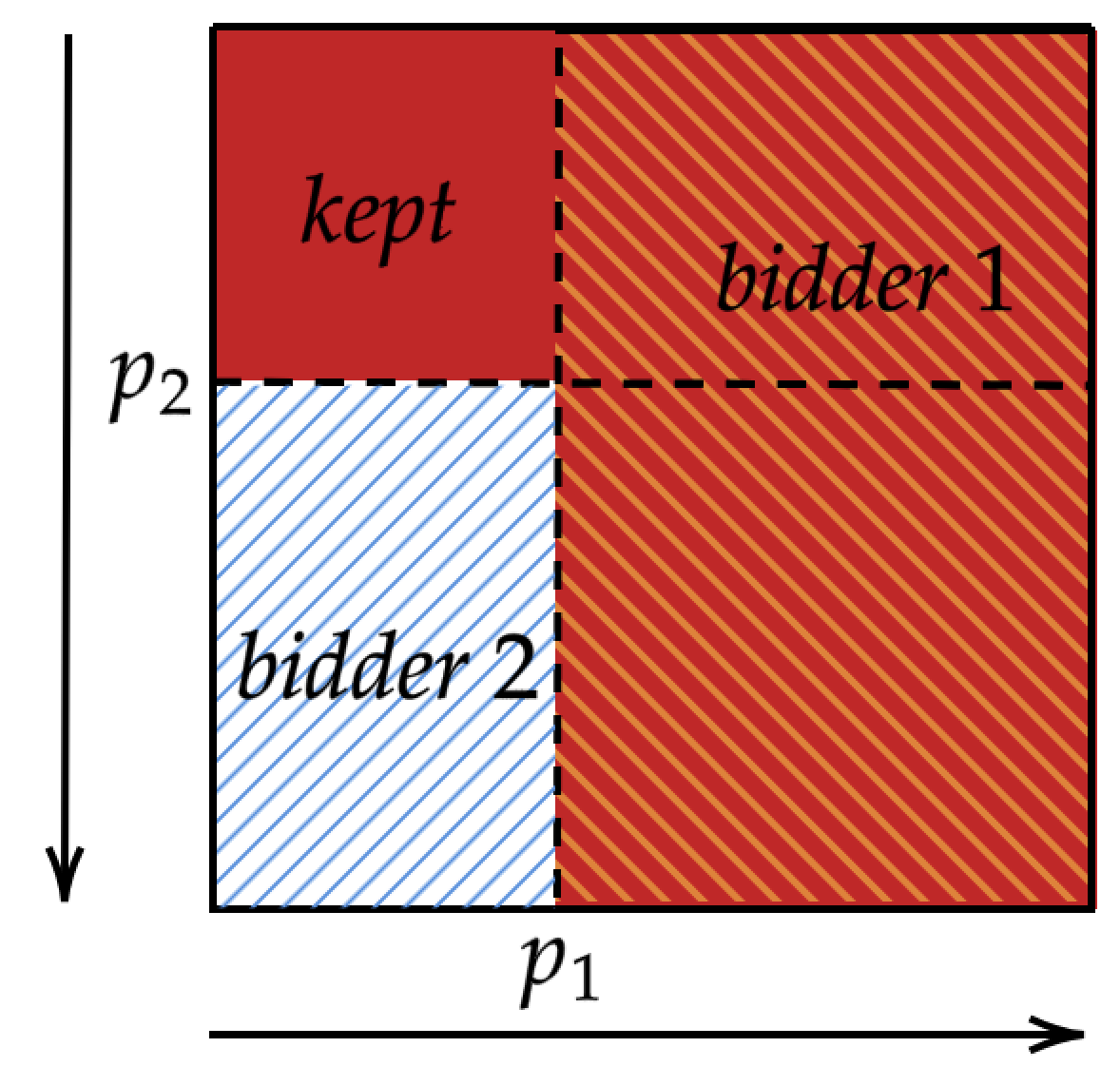}
\caption{Sequential Posted Prices}
\label{fig:reduced-form-c}
\end{subfigure}%
\caption{Extreme Points of $\tilde{\cQ}$}
\label{fig:auction}
\end{figure}

The optimal reduced-form mechanisms in \Cref{cor:opt} can be implemented by mechanisms that correspond to truncations of up-sets. \Cref{fig:reduced-form} illustrates a generic mechanism that implements an extremal reduced-form auction. In \Cref{fig:reduced-form}, the horizontal axis corresponds to bidder 1's type $\theta_1$ and the vertical axis corresponds to bidder 2's type $1-\theta_2$ (this axis is inverted so that the up-sets defined in this direction coincide with those given by the abstract results). The allocation rule is given by the following: Whenever both bidder $i$'s type is lower than $k_i$, the objective is not allocated. Otherwise, the object is allocated to bidder 1 if the type profile belongs to the up-set depicted in \Cref{fig:reduced-form}, and to bidder 2 if otherwise. For example, \Cref{fig:reduced-form-b} corresponds to a second-price auction with reserve price $r$, which is optimal in a symmetric, regular, and linear environment (\citealt{Myerson1981}). Likewise, \Cref{fig:reduced-form-c} corresponds to the \textit{\textbf{sequential posted price}} mechanism, which offers bidder 1 a take-it-or-leave-it price $p_1$, and then offers bidder 2 a take-it-or-leave-it price $p_2$ if bidder 1 decides not to buy, and is shown by \citet{gershkov2021theory} to be a solution of some optimal auction problems with endogenous values. 

\Cref{cor:opt} applies to any standard objectives such as weighted welfare and revenue maximization, as they are all linear in $(q_1, q_2)$. Moreover, for optimal auctions with endogenous valuations such as in \citet{gershkov2021theory}, the objective is a convex functional because of the investment decisions by the bidders, and hence \Cref{cor:opt} also applies. To illustrate, consider the following example from \citet{gershkov2021theory}:
\begin{ex}[Auction with investments]\label{ex:auction}
There are two symmetric bidders. Each bidder $i$ can take an investment decision $a_i \in \R_+$ at a cost $c(a_i)$ prior to participating in the mechanism. After making investment $a_i$, a type-$\theta_i$ bidder would have value $v(a_i, \theta_i)= a_i + \theta_i $ for the object. As shown in  \citet{gershkov2021theory}, we may treat each bidder as having a convex utility in reduced-form allocation probabilities: 
$h(q, \theta_i) = q\theta_i + w(q)$, where $w(q):= \max_{a}\{qa - c(a)\}$ is a convex function. The seller's expected revenue is then 
\[\sum_i  \int q_i(\theta_i)\psi(\theta_i) + w(q_i(\theta_i)) G(\d \theta_i)\,,\]
where $\psi(\theta_i)$ is the usual virtual value function. The objective is a convex functional in the reduced-form allocation rules $(q_1, q_2)$, and hence there always exists an optimal mechanism taking the form identified in \Cref{cor:opt}. \qed
\end{ex}

To derive optimal mechanisms, \citet{gershkov2021theory} restrict attention to symmetric mechanisms, which, as they show, can be with loss even in a completely symmetric environment such as \Cref{ex:auction}.\footnote{Indeed, for $c(a) = b \cdot \frac{a^2}{2}$ where $b < 0.5$, and uniform type distribution on $[0, 1]$, \citet{gershkov2021theory} show that the optimal mechanism for \Cref{ex:auction} must be \textit{asymmetric}.}  \citet{gershkov2021theory} then provide additional conditions in the case of two bidders to ensure the optimal mechanism is symmetric. Here, we make no symmetry restriction and show that the optimum can always be achieved via a mechanism that is described by an up-set and two truncation thresholds $k_1$ and $k_2$.  Indeed, both the mechanism given by \Cref{fig:reduced-form-b} (i.e., the efficient allocation above a symmetric threshold) and the mechanism given by \Cref{fig:reduced-form-c} (i.e., the sequential posted price mechanism) can be optimal in the setting of \citet{gershkov2021theory}. In fact, all examples of optimal mechanisms in \citet{gershkov2021theory}, symmetric or not, have this structure, which as \Cref{prop:reduced} and \Cref{cor:opt} show is not a coincidence.

\subsection{Optimal Private Private Information}\label{subsec:private}

Consider a binary state $\omega \in \Omega:=\{0, 1\}$. Let $p := \P(\omega = 1) \in (0,1)$ be the prior of the state being $1$. Consider a collection of $n$ signals $(s_1,\dots, s_n) \in S_1\times \cdots \times S_n$ about the state $\omega$, where $S_i$'s are some measurable spaces. Formally, $(\omega, s_1, \dots, s_n)$ are random variables defined on a common probability space. Following \citet*{he2024private}, we call $\mathcal{I}:= (\omega, s_1, \dots, s_n)$ an \textit{\textbf{information structure}}, and say that $\mathcal{I}$ is a \textit{\textbf{private private information structure}} if $(s_1, \dots, s_n)$ are independent random variables. 

We are interested in providing a convex hull characterization of private private information structures. As shown in \citet*{he2024private}, it is without loss of generality to assume that each $s_i$ is distributed uniformly on $[0, 1]$ and represent a private private information structure by a function $f:[0, 1]^n \rightarrow [0, 1]$ such that $\int f(s) \d s = p$. In particular, $f(s)$ represents $\P(\omega = 1 \mid s)$.

Let $p_i(s_i) := \P(\omega = 1 \mid s_i)$ be the belief of the state being $1$ given signal $s_i$. A \textit{\textbf{belief distribution}} $(\mu_1, \dots, \mu_n) \in \Delta([0, 1])\times \cdots \times \Delta([0, 1])$ is \textit{\textbf{feasible under the constraint of privacy}} if there exists a private private information structure such that $\mu_i$ is the distribution of $p_i(s_i)$ for all $i$.

Let $\mathcal{F}^p$ be the set of functions $f: [0, 1]^n \rightarrow [0, 1]$  such that $\int f(s) \d s= p$. We say that a signal structure $f\in \mathcal{F}^p$ is a \textit{\textbf{nested bi-upset}} signal if 
\[f=  \1_{A_1} + \lambda \1_{A_2\backslash A_1} \text{ for two nested up-sets $A_1 \subset A_2$ and some $\lambda \in [0, 1]$}\,.\]
Note that for any $A_1 \subset A_2$, there exists a unique $\lambda^\star$ such that $\int \1_{A_1}(s) + \lambda^\star \1_{A_2\backslash A_1}(s) \d s = p$. Thus, a nested bi-upset signal structure is pinned down by the two nested upsets $A_1 \subset A_2$. Any such signal structure can be implemented by \textit{(i)} randomizing between $A_1$ and $A_2$ with probabilities $1-\lambda^\star$ and $\lambda^\star$, and \textit{(ii)} drawing signals $s$ uniformly from the up-set $A_i$ if the state is $1$, and drawing signals $s$ uniformly from the complement $A^c_i$ if the state is $0$.

Let $\mathcal{F}^\star \subset \mathcal{F}^p$ be the set of nested bi-upset signals. Let 
\[\mathcal{Q}^\star:= \Big\{(q_1,\dots,q_n): \text{ there exists $f \in \mathcal{F}^\star$ such that } q_i(s_i) =  \int f(s_i, s_{-i}) \d s_{-i} \text{ for all $i$ } \Big\}\]
be the set of one-dimensional marginals generated by such signals. 

An immediate consequence of our results is the following  convex hull characterization of the feasible belief quantile functions under private private information structures: 
\begin{prop}\label{prop:convex}
A belief distribution $(\mu_1, \dots, \mu_n)$ is feasible under the constraint of privacy if and only if its associated quantile functions are in the closed convex hull of $\mathcal{Q}^\star$, i.e., 
$(q_1, \dots, q_n) \in \mathrm{cl}({\mathrm{co}}(\mathcal{Q}^\star))$, 
with the set of extreme points in $\mathcal{Q}^\star$.
\end{prop}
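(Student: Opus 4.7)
The plan is to reduce the feasibility question to a rationalizability statement about quantile functions via monotone rearrangement, and then apply \Cref{thm:extreme-are-up-sets}(ii) together with the Krein-Milman theorem.

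First, I would set up the reduction. Given any $f:[0,1]^n \to [0,1]$ with $\int f = p$, the $i$-th signal's posterior is the marginal $\phi_i(s_i) := \int f(s_i, s_{-i})\,\d s_{-i}$, and the induced belief distribution $\mu_i$ is the law of $\phi_i(s_i)$ under $s_i \sim U[0,1]$; let $q_i$ denote the quantile function of $\mu_i$. By classical monotone rearrangement, there exists a measure-preserving transformation $T_i:[0,1]\to[0,1]$ such that $\phi_i \circ T_i = q_i$ almost everywhere. Applying $T_i$ only to the $i$-th coordinate of $f$ leaves $\int f = p$ and all marginals $\phi_j$ for $j \neq i$ unchanged (by Fubini and measure-preservation), while transforming the $i$-th marginal into $q_i$. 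Iterating coordinate by coordinate shows that feasibility of $(\mu_1,\dots,\mu_n)$ is equivalent to $(q_1,\dots,q_n)$ being rationalizable by some $f$ with $\int f = p$. Denote the set of such tuples by $\overline{\mathcal{Q}}^p$.

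Second, by \Cref{lem:gutmann-monotone}, $\overline{\mathcal{Q}}^p$ equals the set of rationalizable monotone tuples subject to the single linear constraint $\int f = p$. The constraint set $\mathcal{F}^p := \{f \in \mathcal{F} : \int f = p\}$ is convex and compact in $L^1([0,1]^n)$, and $\overline{\mathcal{Q}}^p$ is its image under the continuous linear projection to marginals, hence convex and compact in $(L^1([0,1]))^n$. Apply \Cref{thm:extreme-are-up-sets}(ii) with $m=1$: every extreme point of $\overline{\mathcal{Q}}^p$ is rationalized by a mixture of at most two indicator functions on nested up-sets, which is exactly a nested bi-upset signal (the normalization $\lambda^\star$ is forced by $\int f = p$). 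Hence $\mathrm{ext}(\overline{\mathcal{Q}}^p) \subseteq \mathcal{Q}^\star$. Conversely, $\mathcal{Q}^\star \subseteq \overline{\mathcal{Q}}^p$ by construction.

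Third, apply Krein-Milman to the compact convex set $\overline{\mathcal{Q}}^p$:
\[
\overline{\mathcal{Q}}^p = \mathrm{cl}(\mathrm{co}(\mathrm{ext}(\overline{\mathcal{Q}}^p))) \subseteq \mathrm{cl}(\mathrm{co}(\mathcal{Q}^\star)) \subseteq \overline{\mathcal{Q}}^p,
\]
so equality holds throughout, yielding the convex hull characterization. The extreme point statement then follows from $\mathrm{ext}(\mathrm{cl}(\mathrm{co}(\mathcal{Q}^\star))) = \mathrm{ext}(\overline{\mathcal{Q}}^p) \subseteq \mathcal{Q}^\star$. The main obstacle is verifying the rearrangement step cleanly: one must check that coordinate-wise measure-preserving transformations simultaneously deliver the target nondecreasing marginals while preserving Bayes plausibility, which hinges on the product structure of Lebesgue measure on $[0,1]^n$. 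Once this reduction is in place, the rest is an immediate application of the extreme point characterization from \Cref{sec:rationalizable} and Krein-Milman.
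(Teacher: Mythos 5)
Your proof is correct and follows essentially the same route as the paper's: identify the set of feasible belief quantiles with the set of rationalizable monotone tuples satisfying the one Bayes-plausibility constraint, apply \Cref{thm:extreme-are-up-sets}(ii) with $m=1$ to place the extreme points inside $\mathcal{Q}^*$, and finish with Krein--Milman. The only difference is that you spell out the monotone rearrangement step in detail (iterated coordinate-wise measure-preserving transformations), which the paper relegates to a footnote.
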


\begin{proof}[Proof of \Cref{prop:convex}]
Note that the set of feasible quantile functions $(q_1,\dots, q_n)$ are exactly the following ones:\footnote{To see this, note that the belief quantile function $q_i(s_i)$ is simply the monotone rearrangement of $p_i(s_i)$.} 
\[\widehat{\mathcal{Q}}:=\Big\{(q_1,\dots,q_n): \text{$q_i$'s are monotone and rationalizable} \text{, and } \int q_1(s_1)\d s_1 = p\Big\}\,,\]
where the additional constraint $\int q_1(s_1)\d s_1 = p$ coupled with rationalizability ensures that any joint function $f$ rationalizing $q$ satisfies $\int f(s)\d s= p$. By \Cref{thm:extreme-are-up-sets}, it follows immediately that $\mathrm{ext}(\widehat{\mathcal{Q}}) \subseteq \mathcal{Q}^\star$. It remains to show $\widehat{\mathcal{Q}} = \mathrm{cl}(\mathrm{co}(\mathcal{Q}^\star))$. By the Krein-Milman theorem, we immediately have  
$\widehat{\mathcal{Q}} =  \mathrm{cl}(\mathrm{co}(\mathrm{ext}(\widehat{\mathcal{Q}}))) \subseteq  \mathrm{cl}(\mathrm{co}(\mathcal{Q}^\star)) \subseteq \widehat{Q}$, and hence they are all equal, concluding the proof. 
\end{proof}

Complementary to \citet*{he2024private} who provide a garbling characterization of the feasible belief \textit{distributions} under private private information, \Cref{prop:convex} provides a convex hull characterization of the feasible belief \textit{quantiles}. Unlike the feasible belief distributions, the set of feasible belief quantiles is convex and admits extreme points, which \Cref{prop:convex} characterizes.  An immediate consequence of our convex hull characterization is that the optimal private private information structure can be attained using nested bi-upset signals, whenever the objective is a quasiconvex functional of the feasible belief quantiles:  
\begin{cor}\label{cor:bi-upset}
 For any objective $\Phi(q)$ where $\Phi$ is an upper semi-continuous and quasiconvex functional, an optimal private private information structure is given by a nested bi-upset signal. 
\end{cor}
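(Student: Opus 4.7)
The plan is to combine the convex-hull characterization in \Cref{prop:convex} with a Bauer-type maximum principle for quasiconvex upper semi-continuous functionals.

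First, I would set up the underlying topology. The set $\widehat{\cQ}$ of feasible belief quantile vectors introduced in the proof of \Cref{prop:convex} is a subset of the product of one-dimensional monotone functions from $[0,1]$ to $[0,1]$, which is compact in the $L^1$ topology by (a marginal version of) Helly's selection theorem together with dominated convergence. The projection map $f \mapsto (f_1,\dots,f_n)$ that sends a joint function to its tuple of marginals is continuous and affine from $\F^p$ (compact, convex, by the same compactness argument used for $\F$ in \Cref{sec:muldimensional}) to this product space. Since $\widehat{\cQ}$ is the image of $\F^p$ under this continuous affine map, it is compact and convex.

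Second, I would invoke Bauer's maximum principle in its quasiconvex form: on a nonempty compact convex subset of a locally convex space, every upper semi-continuous quasiconvex functional attains its supremum at an extreme point of the set (see, e.g., \citet{ball2023bauer} as cited for \Cref{cor:opt}). Applied to $\Phi$ on $\widehat{\cQ}$, this yields an optimizer $(q_1^\star,\dots,q_n^\star) \in \mathrm{ext}(\widehat{\cQ})$. By \Cref{prop:convex}, $\mathrm{ext}(\widehat{\cQ}) \subseteq \cQ^*$, so $(q_1^\star,\dots,q_n^\star)$ is the tuple of marginals of some nested bi-upset signal structure $f^\star = \1_{A_1} + \lambda \1_{A_2 \setminus A_1}$. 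Since the value of $\Phi$ only depends on the belief quantile vector, $f^\star$ is itself an optimal private private information structure.

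The only mildly subtle step is the compactness/closure argument used when passing from \Cref{prop:convex} (which gives only the closed convex hull) to the claim that the extreme points of $\widehat{\cQ}$ lie in $\cQ^*$ rather than in its closure; this follows because $\cQ^*$ is itself closed in the $L^1$ topology (the set of pairs $(A_1,A_2)$ of nested up-sets with a given measure, viewed through their indicator marginals, is compact, and the parametrization $\1_{A_1} + \lambda\1_{A_2 \setminus A_1}$ is continuous in $\lambda$). Everything else reduces to an application of \Cref{prop:convex} together with the Bauer-type principle, so I do not expect a genuine obstacle beyond checking these topological hypotheses.
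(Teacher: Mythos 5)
Your proposal is correct and follows the intended argument: the paper treats \Cref{cor:bi-upset} as an immediate consequence of \Cref{prop:convex} together with the quasiconvex version of Bauer's maximum principle (the same tool cited for \Cref{cor:opt}), which is exactly what you do. One small remark: the "mildly subtle step" you flag at the end is not actually needed, since \Cref{prop:convex} already asserts directly that $\mathrm{ext}(\widehat{\cQ}) \subseteq \cQ^*$ (its proof derives this from \Cref{thm:extreme-are-up-sets}), so you do not have to argue separately that $\cQ^*$ is closed to pass from the closed convex hull to the extreme points.
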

An example of a convex objective in the quantile space is maximizing the weighted welfare of multiple receivers, each of whom has a decision problem to solve. Indeed, \citet*{he2024private} show that in such cases, the optimal private private information would be on the Blackwell-Pareto frontier and can be implemented using a single up-set. \Cref{cor:bi-upset} shows that more generally, for non-welfare-maximizing objectives, even though the optimal signal structure may not be on the Blackwell-Pareto frontier, they can still be implemented by randomizing over two nested up-sets.

In the special case of $n=2$, recall that our extreme point characterization also shows that the two nested up-sets can only differ by a \textit{rectangle} $[\underline{s}_1,\overline{s}_1]\times [\underline{s}_2,\overline{s}_2]$ (\Cref{thm:rectangle}). This characterization turns out to provide an even simpler implementation of the optimal information structure. In particular, any extreme point of the feasible belief quantiles can be implemented by \textit{(i)} drawing signals uniformly from some up-set $A^\star$ if the state is $1$ and from its complement $[0,1]^2 \backslash A^\star$ if the state is $0$, and \textit{(ii)} for one of the two dimensions, pooling the signals in an interval into a single signal. For example, we may take $A^\star$ to be the ``horizontal average'' of the two nested up-sets, and pool the signals $s_2 \in [\underline{s}_2,\overline{s}_2]$ into a single signal. 

Combining with the characterization of the Blackwell-Pareto frontier from \citet*{he2024private}, our result shows that the extreme belief quantiles differ from the ones resulting from the Blackwell-Pareto undominated information structure only by a single pooling interval: 
\begin{prop}\label{prop:two-info}
 Every extreme belief quantile pair $(q_1, q_2)$ under private private information can be implemented by \textit{(i)} selecting a Blackwell-Pareto undominated information structure and \textit{(ii)} for either $i = 1$ or $i=2$, pooling the signals $s_i$ in an interval $I_i$ into a single signal. Thus, for any upper semi-continuous and quasiconvex objective $\Phi(q_1, q_2)$, it is optimal to use a Blackwell-Pareto undominated information structure with a single pooling interval. 
 \end{prop}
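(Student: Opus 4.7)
The plan is to combine the rectangular-difference refinement in \Cref{thm:rectangle} with the characterization of Blackwell-Pareto undominated private private information structures from \citet*{he2024private} (single up-set signals). The proof of \Cref{prop:convex} shows that the set of feasible belief quantiles is exactly $\widehat{\cQ} := \{q \in \cQ: \int_0^1 q_1(s_1)\,\d s_1 = p\}$, which fits the framework of \Cref{thm:rectangle} (with $n=2$ and $m=1$; the independence condition of \Cref{prop:marginal-sufficiency} is satisfied by the Bayes-plausibility constraint since $\int_D 1\,\d x > 0$ for every rectangle $D$ of positive measure). Hence every extreme point $(q_1, q_2) \in \widehat{\cQ}$ is uniquely rationalized among monotone functions by some $f = \1_{A'} + \lambda \1_R$ with $A' \subseteq A$ nested up-sets that differ by at most a rectangle $R = [\underline{s}_1,\overline{s}_1] \times [\underline{s}_2,\overline{s}_2]$ and $\lambda \in [0,1]$.

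Given this, I will construct a single up-set $A^*$ with $|A^*| = p$ and a pooling interval $I_1$ such that pooling the signal $s_1$ over $I_1$ in $\1_{A^*}$ recovers $f$. Set $I_1 := [\underline{s}_1, \overline{s}_1]$, $s^*_1 := \overline{s}_1 - \lambda(\overline{s}_1 - \underline{s}_1)$, and
\[
A^* \;:=\; A' \,\cup\, \big([s^*_1, \overline{s}_1] \times [\underline{s}_2, \overline{s}_2]\big).
\]
The hard part will be verifying that $A^*$ is itself an up-set, which I expect to be the main technical step. The key geometric observation is that because $A = A' \cup R$ with both $A, A'$ up-sets and $R$ a removable rectangle, the lower frontier of $A'$ coincides with the top edge of $R$ on $s_1 \in [\underline{s}_1, \overline{s}_1]$ and drops to height at most $\underline{s}_2$ for $s_1 > \overline{s}_1$ (otherwise $A' \cup R$ would fail to be an up-set). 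This pins down the frontier of $A^*$: it coincides with that of $A'$ outside $[s^*_1, \overline{s}_1]$ and drops from $\overline{s}_2$ to $\underline{s}_2$ on $[s^*_1, \overline{s}_1]$, and so it is nonincreasing, making $A^*$ an up-set. Moreover, $|A^*| = |A'| + \lambda|R| = \int f\,\d s = p$, so $\1_{A^*}$ induces a valid private private information structure; by \citet*{he2024private}, any single-up-set signal of mass $p$ is Blackwell-Pareto undominated.

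Next I will verify that pooling $s_1$ over $I_1$ in $\1_{A^*}$ returns $f$. For $s_1 \notin I_1$, pooling leaves the signal unchanged and $\1_{A^*}(s_1, s_2) = \1_{A'}(s_1, s_2) = f(s_1, s_2)$ since $A^* \setminus A' \subseteq I_1 \times [0,1]$ and $R \subseteq I_1 \times [0,1]$. For $s_1 \in I_1$, the pooled posterior at $(s_1, s_2)$ is
\[
\frac{1}{\overline{s}_1 - \underline{s}_1}\int_{\underline{s}_1}^{\overline{s}_1} \1_{A^*}(t, s_2)\,\d t,
\]
which evaluates to $1$ when $s_2 > \overline{s}_2$, to $0$ when $s_2 < \underline{s}_2$, and to $(\overline{s}_1 - s^*_1)/(\overline{s}_1 - \underline{s}_1) = \lambda$ when $s_2 \in [\underline{s}_2, \overline{s}_2]$---matching $f(s_1, s_2)$ in all three regions.

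Finally, the second claim follows from a version of Bauer's maximum principle for upper semi-continuous quasiconvex functionals on a compact convex set (\citealt{ball2023bauer}): any such $\Phi$ attains its supremum at an extreme point of $\widehat{\cQ}$, which by the first claim is implementable by a Blackwell-Pareto undominated information structure with a single pooling interval.
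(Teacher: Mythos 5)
Your proof is correct and follows essentially the same strategy as the paper: reduce to \Cref{thm:rectangle} to get a rationalizing $f=\1_{A'}+\lambda\1_R$ with $R$ a rectangle, construct a single up-set $A^*$ of mass $p$ by averaging between $A'$ and $A'\cup R$, verify that pooling one coordinate on the corresponding side of $R$ recovers $f$, and invoke He et al.'s Theorem~1 plus Bauer's principle. The only cosmetic difference is that you cut the rectangle vertically at $s^*_1=(1-\lambda)\overline{s}_1+\lambda\underline{s}_1$ and pool in $s_1$, whereas the paper's displayed construction averages the lower-boundary functions $g_i$ to get a horizontal cut and pools in $s_2$ --- but the paper explicitly notes your version (``averaging $g_i^{-1}$ and pooling in dimension~1'') as the symmetric alternative, and your geometric check that $A^*$ is an up-set is the correct verification of that alternative.
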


 \begin{proof}[Proof of \Cref{prop:two-info}]
 By the proof of \Cref{prop:convex}, any extreme belief quantile pair can be rationalized by 
 \[f=  \1_{A_1} + \lambda \1_{A_2\backslash A_1}\,, \]
where $A_1 \subset A_2$ are two nested up-sets and $\lambda \in [0, 1]$ is pinned down given $A_1, A_2$ and prior $p$. By \Cref{thm:rectangle}, we also know that $A_2 \backslash A_1$ must be a rectangle $I_1 \times I_2$ where $I_i$ is a closed interval on dimension $i$. Let $g_i$ be the lower nonincreasing boundary of $A_i$. Note that $g_1 \geq g_2$. Let $g^\star = (1-\lambda) g_1 + \lambda g_2$. Let $A^\star:= \{(s_1, s_2): s_2 \geq g^\star(s_1)\}$ be the up-set defined by $g^\star$. Note that 
\[\int \1_{A^\star}(s)\d s = \int \1_{A_1}(s) + \lambda \1_{A_2\backslash A_1}(s) \d s = p \,,\]
and hence $\1_{A^\star}$ defines a feasible private private information structure. Now consider the information structure defined by \textit{(i)} drawing signals uniformly from $A^\star$ if the state is $1$ and uniformly from its complement $[0, 1]^2 \backslash A^\star$ otherwise, and \textit{(ii)} pooling the signals $s_2$ in the interval $I_2$ into a single signal. 

Note that this information structure generates the same belief distribution in each dimension $i$. Moreover, since the pooling in dimension $2$ is only a function of $s_2$, the resulting signals are still independent of each other. Thus, it is still a private private information structure. By Theorem 1 of \citet*{he2024private}, a private private information structure is Blackwell-Pareto undominated if and only if it can be implemented using an up-set in the sense we describe above, concluding the proof. 
\end{proof}

The characterization given by \Cref{prop:two-info} can also be made \textit{sufficient} for a belief quantile pair to be an extreme point if, after selecting any Blackwell-Pareto undominated information structure, we pool an interval of signals in each dimension into a single signal in a consistent way that results in two nested up-sets differing by at most a rectangle, because Bayes plausibility constraint satisfies the sufficiency condition in \Cref{prop:marginal-sufficiency}.

To illustrate \Cref{prop:two-info}, consider the following example adapted from \citet*{he2024private}:
\begin{ex}[Information design with privacy constraint]
Consider a platform market with two buyers and one seller. The seller has two units of the same object (with cost normalized to be $0$). The buyers have unit demand and have a common value $v(\omega)$ for the object, where $\omega \in \{0, 1\}$ is a random state with a common prior $p=\P(\omega = 1)$. Suppose without loss that $v(1) > v(0)$. The platform observes the realized state $\omega$ and wants to design an information structure that sends signals $s_i$ about $\omega$ to each buyer $i$ while keeping the signals independent of each other (e.g., to prevent full surplus extraction by the seller). Suppose that each unit is sold at some price $k$, where $v(0) < k < v(1)$. The platform wants to maximize the probability of trade (because, e.g., a per-trade commission): 
\[ \sum_i \int  \1_{\{v(1)p_i +v(0)(1-p_i)\geq k\}} \mu_i (\d p_i) \,.\]
Note that $\1_{\{v(1)p_i +v(0)(1-p_i)\geq k\}}$ is not convex in $p_i$, and hence we cannot restrict attention to Blackwell-Pareto undominated information structures as in \citet*{he2024private}. But the objective is quasi-convex in the belief quantiles $(q_1, q_2)$. Thus, \Cref{prop:two-info} applies, and shows that the optimal private private information can be implemented using an up-set and applying a single interval pooling. \qed
\end{ex}

\section{Conclusion}\label{sec:conclusion}
We characterize the extreme points of multidimensional monotone functions from $[0,1]^n$ to $[0,1]$, as well as the extreme points of the set of one-dimensional marginals of these functions. These characterizations lead to new results in various mechanism design and information design problems, including public good provision with interdependent values; interim efficient bilateral trade mechanisms; asymmetric reduced form auctions; as well as private private information. 
As another application, we also uncover a mechanism anti-equivalence theorem for two-agent, two-alternative social choice problems.

\renewcommand*{\bibfont}{\footnotesize}
\setlength\bibsep{0pt}
\bibliographystyle{ecta} 
\bibliography{references}

\newpage
\noindent\title{\centering \LARGE Online Appendix to ``Multidimensional Monotonicity and Economic Applications''}

\[\text{\centering \Large Frank Yang and Kai Hao Yang}\]
\appendix
\section{Omitted Proofs}\label{app:proof}

The proofs are presented in the same order in which they are referenced in the main text, except that we prove \Cref{lem:extreme-projection} first,\footnote{We thank Elliot Lipnowski for providing a simple proof for this lemma.} since it will be used in the proofs of \Cref{thm:nested-up-sets} and \Cref{thm:extreme-are-up-sets}.

\subsection{Proof of \texorpdfstring{\Cref{lem:extreme-projection}}{}}
It suffices to show that for any $y \in \mathrm{ext}(L(X))$, $L^{-1}(y) \cap \mathrm{ext}(X) \neq \emptyset$. Clearly, $L^{-1}(y) \neq \emptyset$ since $y \in L(X)$. Moreover, $L^{-1}(y)$ is compact since $X$ is compact and $L$ is continuous. Since $L$ is affine and $X$ is convex, $L^{-1}(y)$ is also convex. By the Krein-Milman theorem, $\mathrm{ext}(L^{-1}(y)) \neq \emptyset$. Take any $z \in \mathrm{ext}(L^{-1}(y))$. We claim that  $z \in \mathrm{ext}(X)$. To see this, fix any $x,x' \in X$ and $\lambda \in (0,1)$ such that $z=\lambda x+(1-\lambda)x'$. Since $L$ is affine, we have 
\[
y=L(z)=L(\lambda x+(1-\lambda) x')=\lambda L(x)+(1-\lambda) L(x')\,.
\]
Since $y \in \mathrm{ext}(L(X))$, it must be that $L(x)=L(x')=y$, which in turn implies $x,x' \in L^{-1}(y)$. Furthermore, since $z \in \mathrm{ext}(L^{-1}(y))$, it must then be that $x=x'=z$. As a result, $z \in \mathrm{ext}(X)$. 

\subsection{Proof of \texorpdfstring{\Cref{thm:ordered-up-sets}}{}}
The ``if'' part follows immediately as monotonicity is preserved under mixtures. For the ``only if'' part,  
consider any $f \in \F$. For any $x \in [0,1]^n$, write
 \[
 f(x) = \int_0^1 \1\{f(x) \geq 1-r\} \d r\,.
 \]
For any $r \in [0,1]$, let  
\[
A_{r}:= \big\{x \in [0, 1]^n: f(x) \geq 1-r \big\}\,.
\]
Note that since $f$ is monotone, $A_r$ is an up-set for all $r \in [0,1]$. Moreover, $A_r \subseteq A_{r'}$ for all $r<r'$. Let $\mu$ be the uniform distribution on $[0, 1]$. Then, for all $x \in [0,1]^n$, we have 
\[
\int_0^1 \1_{A_{r}}(x) \mu(\d r)=\int_0^1 \1\{ f(x) \geq 1-r\} \d r=f(x)\,.
\]
To prove uniqueness, suppose that there exist a family of nested up-sets $\{A_r\}_{r \in [0,1]}$ and a probability measure $\mu \in \Delta([0,1])$  
\[
f(x)=\int_0^1 \1_{A_r}(x)\mu(\d r)
\]
for all $x \in [0,1]^n$. Let $M^{-1}(r):=\inf\{r' \in [0,1]: \mu([0,r']) \geq r\}$ for all $r \in [0,1]$. Since $\{A_r\}_{r \in [0,1]}$ is nested, $\{A_{M^{-1}(r)}\}_{r \in [0,1]}$ is also nested. Moreover, since $A_{M^{-1}(r)}$ is an up-set, there exists a nondecreasing $g:[0,1]^n \to [0,1]$ such that 
\[
A_{M^{-1}(r)}=\Big\{x: x_1 \geq g(1-r,x_{-1})\Big\}
\]
for all $r \in [0,1]$. Since $\{A_{M^{-1}(r)}\}_{r\in [0,1]}$ is nested, $g(\,\cdot\,,x_{-1})$ is nondecreasing for all $x_{-1} \in [0,1]^{n-1}$. Let $g^{-1}(q,x_{-1}):=\inf\{x_1 \in [0,1]:g(x_1,x_{-1}) \geq q\}$ for all $q \in [0,1]$ and for all $x_{-1} \in [0,1]^{n-1}$. 
As a result, for all $x \in [0,1]^n$, we have 
\begin{align*}
f(x)=\int_0^1 \1_{A_r}(x)\mu(\d r)=&\int_0^1 \1_{A_{M^{-1}(r)}}(x)\d r\\
=& \int_0^1 \1\{x_1 \geq g(1-r,x_{-1})\}\d r\\
=& \int_0^1 \1\{g^{-1}(x_1,x_{-1}) \geq 1-r\} \d r\\
=&g^{-1}(x_1,x_{-1})\,.
\end{align*}
Therefore, for all $r \in [0, 1]$, 
\[
A_{M^{-1}(r)}=\Big\{x: x_1 \geq g(1-r,x_{-1})\Big\} = \Big\{x: g^{-1}(x_1,x_{-1}) \geq 1-r \Big\}= \Big\{x: f(x) \geq 1-r\Big\}\,,
\]
equivalent to our construction. This completes the proof.

\subsection{Proof of \texorpdfstring{\Cref{thm:nested-up-sets}}{}}
Consider any extreme point $\overline{f}$ of $\overline{\F}$. By \Cref{thm:ordered-up-sets}, there exist nested up-sets $\{A_r\}_{r \in [0,1]}$
and a probability distribution $\mu \in \Delta([0,1])$ such that 
\[
\overline{f}(x)=\int_0^1 \1_{A_r}(x)\mu(\d r)\,.
\]
For each $j \in \{1,\ldots,m\}$, let
\[
\rho_j(r):=\int_{[0,1]^n} \phi^j(x)\1_{A_r}(x) \d x
\]
for all $r \in [0,1]$. Then, for each $j \in \{1,\ldots,m\}$, 
\begin{align*}
\int_0^1 \rho_j(r) \mu(\d r)=&\int_0^1 \left(\int_{[0,1]^n}\phi^j(x) \1_{A_r}(x) \d x\right) \mu(\d r)\\
=&\int_{[0,1]^n} \phi^j(x) \left(\int_0^1 \1_{A_r}(x) \mu(\d r) \right)\d x\\
=& \int_{[0,1]^n} \phi^j(x) \overline{f}(x) \d x\\
\leq & \eta^j\,.
\end{align*}

Let $\M$ be the set of probability measures $\tilde{\mu} \in \Delta([0,1])$ such that 
\[
\int_0^1 \rho_j(r) \tilde{\mu}(\d r)\leq \eta^j
\]
for all $j \in \{1,\ldots,m\}$. Note that $\M$ is a compact (under the weak-* topology) and convex set.\footnote{Convexity of $\M$ follows from the linearity of the constraints. To see that $\M$ is compact, note that since $\{A_r\}_{r \in [0,1]}$ are nested up-sets, for any sequence $\{r_k\} \to r \in [0,1]$, $\{\1_{A_{r_k}}\} \to \1_{A_r}$ pointwise almost everywhere. Therefore, by the dominated convergence theorem, $\rho_j$ is continuous for all $j$. As a result, $\tilde{\mu} \mapsto \int_0^1 \rho_j(r) \tilde{\mu}(\d r)$ is continuous under the weak-* topology. Therefore, $\M$ is a closed subset of the compact set of probability measures on $[0,1]$ under the weak-* topology, and hence is compact.} Define a continuous affine map $L:\mathcal{M} \to \F$ by 
\[
L(\tilde{\mu})[x]:=\int_0^1 \1_{A_r}(x) \tilde{\mu}(\d r)\,,
\]
for all $x \in [0,1]^n$. Then, $\mu \in \mathcal{M}$, $\overline{f}=L(\mu)$, and 
\[
\int_{[0,1]^n} L(\tilde{\mu})[x]\phi^j(x) \d x=\int_0^1 \left(\int_{[0,1]^n}\phi^j(x) \1_{A_r}(x)\d x\right)\tilde{\mu}(\d r)=\int_0^1 \rho_j(r) \tilde{\mu}(\d r) \leq \eta^j\,,
\]
for all $\tilde{\mu} \in \M$ and all $j \in \{1,\ldots,m\}$. That is, $L(\M)$ is a convex subset of $\overline{\F}$ that includes $\overline{f}$. Since $\overline{f}$ is an extreme point of $\overline{\F}$, it must also be an extreme point of $L(\M)$. By \Cref{lem:extreme-projection}, $\overline{f}=L(\mu^\star)$, for some $\mu^\star \in \M$ that is an extreme point of $\M$. Moreover, by Proposition 2.1 of \citet{Winkler1988}, $|\mathrm{supp}(\mu^\star)| \leq m+1$. Together, it then follows that $\overline{f}$ is a mixture of at most $m+1$ indicator functions defined on nested up-sets. This completes the proof.

\subsection{Proof of \texorpdfstring{\Cref{prop:necessity-joint}}{}}
Necessity follows immediately from the proof of \Cref{thm:nested-up-sets}. For sufficiency, consider any such $f \in \overline{\F}^\star$. If $k=1$, then $f$ must be an extreme point of $\overline{\F}^\star$ by \Cref{thm:choquet} as $A_1$ is an up-set. Suppose that $k \geq 2$ and that there exists $u:[0,1]^n \to [0,1]$ such that $f+u$ and $f-u$ are in $\overline{\F}^\star$. For each $j \in \{1,\ldots,k-1\}$, let $A_j^\star:=A_{j+1}\backslash A_j$. Since $\{A_j\}_{j=1}^k$ is nested, $\{A_j^\star\}_{j=1}^{k-1}$ is disjoint. Moreover, since $f=\sum_{j=1}^k \lambda_j \1_{A_j}$, $f$ is constant on $A_j^\star$ for all $j \in \{1,\ldots,k-1\}$. Since $f+u$ and $f-u$ are monotone, and since $A_{j} \subseteq A_{j+1}$ are up-sets, $u$ must also be constant on $A_j^\star$ for all $j \in \{1,\ldots,k-1\}$. Let $u_j:=u(x)$ for all $x \in A_j^\star$. Then, for each $j \in \{1,\ldots,m\}$, 
\[
\sum_{l=1}^k u_l \int_{A_l^\star} \phi^j(x) \d x=0\,.
\]
However, since 
\[
\left\{\begin{pmatrix}
\int_{A_1^\star}\phi^1(x) \d x\\
\vdots\\
\int_{A_1^\star}\phi^m(x) \d x
\end{pmatrix},\cdots,\begin{pmatrix}
\int_{A_k^\star}\phi^1(x) \d x\\
\vdots\\
\int_{A_k^\star}\phi^m(x) \d x
\end{pmatrix}\right\}
\]
are linearly independent, it must be that $u_j=0$ for all $j \in \{1,\ldots,k\}$. Moreover, since $f\equiv 0$ on $[0,1]^n \backslash A_k$ and $f\equiv 1$ on $A_1$, and since $f+u$ and $f-u$ are in $\overline{\F}^\star$, $u \equiv 0$ on  $[0,1]^n \backslash \cup_{j=1}^{k-1} A^\star_j$. Together, $u \equiv 0$. Therefore, $f$ must be an extreme point of $\overline{\F}^\star$. 

\subsection{Proof of \texorpdfstring{\Cref{prop:sym-nested-up-sets}}{}}
Note that by the proof of \Cref{thm:ordered-up-sets}, for any symmetric monotone function $f$, there exist nested, symmetric up-sets $\{A_r\}_{r\in [0, 1]}$ and a probability distribution $\mu \in \Delta([0, 1])$ such that 
\[f(x) = \int_{0}^1 \1_{A_r}(x) \mu (\d r)\,,\]
because the sets defined by 
\[A_r := \Big\{x \in [0, 1]^n : f(x) \geq 1 - r\Big\}\]
are symmetric up-sets. The result follows immediately by the same proof of \Cref{thm:nested-up-sets}.

\subsection{Proof of \texorpdfstring{\Cref{thm:extreme-are-up-sets}}{}}
To prove part $(i)$, define an affine map $L:\F \to \cQ$ by
\[
L(f):=\left(\int_{[0,1]^{n-1}} f(x) \d x_{-1}\,,\,\ldots\,,\, \int_{[0,1]^{n-1}} f(x) \d x_{-n}\right)\,.
\]
By \Cref{lem:gutmann-monotone} (Theorem 5 of \citealt{gutmann1991existence}), a tuple $q$ of monotone functions is rationalizable if and only if it can be rationalized by a monotone function $f \in \F$. Thus, $L(\F)=\cQ$.\footnote{Note that since $\F \subseteq L^1([0,1]^n)$, $f \in \F$ is identified by an equivalent class of functions $f' \equiv f$ almost everywhere. Thus, $L(\F) \subseteq \cQ$, since for any monotone function $f'$, there exists $f$ whose marginals are left-continuous and $f \equiv f'$ almost everywhere. Together with $\cQ \subseteq L(\F)$, we have $L(\F)=\cQ$.} Moreover, by the dominated convergence theorem, $L$ is continuous. Together, since $\F$ is compact and convex, part $(i)$ follows by invoking \Cref{thm:choquet} and \Cref{lem:extreme-projection}. 

To prove part $(ii)$, let 
\[
\overline{\F}:=\left\{f \in \F: \int_{[0,1]^n} f(x) \sum_{i=1}^n \psi_i^j(x_i) \d x\leq \eta^j\,\,\, \text{ for all $j$}\right\}\,.
\]
By \Cref{lem:gutmann-monotone} (Theorem 5 of \citealt{gutmann1991existence}) again, any $q \in \overline{\cQ}$ is rationalized by some $f \in \overline{\F}$, and hence $L(\overline{\F})=\overline{\cQ}$. Moreover, by the dominated convergence theorem, $L$ is continuous. Lastly, since $\overline{\F} \subseteq \F$ is closed and since $\F$ is compact, $\overline{\F}$ is compact as well. Part $(ii)$ then follows by invoking \Cref{thm:nested-up-sets} and \Cref{lem:extreme-projection}. This completes the proof.

\subsection{Proof of \texorpdfstring{\Cref{thm:rationalized-upsets}}{}}

Necessity follows immediately from \Cref{thm:extreme-are-up-sets}. For sufficiency, fix any $q \in \cQ$ that is rationalized by $\1_A$ for some up-set $A \subseteq [0,1]^2$.

We prove a stronger claim that $q$ is an exposed point of $\mathcal{Q}$ and hence an extreme point. Since $A$ is an up-set, there exists a nonincreasing function $g: [0,1] \to [0,1]$ such that $A= \{(x_1,x_2):x_2 \geq g(x_1)\}$. Let $\phi_1(x_1) := -g(x_1)$ and $\phi_2(x_2) := x_2$. Then note that 
\[\phi_1(x_1) + \phi_2(x_2) \begin{cases}
>0,&\mbox{if }(x_1, x_2) \in A \backslash \{(x_1, g(x_1))\}_{x_1 \in [0, 1]}\\
=0,&\mbox{if } (x_1, x_2) \in \{(x_1, g(x_1))\}_{x_1 \in [0, 1]}\\
<0,&\mbox{otherwise}
\end{cases}\,.
\]
We claim that the optimization problem
\begin{equation}\label{eq:unique}
\max_{f\in \mathcal{F}} \int_{[0,1]^2} (\phi_1(x_1) + \phi_2(x_2)) f(x_1, x_2) \d x\,
\end{equation}
has a unique solution given by $\1_{A}$. To see this, note that we can relax the monotonicity constraint imposed by $\mathcal{F}$, and show that $\1_A$ is still the unique maximizer. Indeed, $\1_A$ can clearly attain the upper bound given by $ \int \max\{\phi_1(x_1) + \phi_2(x_2), 0\} \d x$. Moreover, if any function $f:[0, 1]^2 \rightarrow [0, 1]$ differs from $\1_{A}$ on a positive measure of $x$, then $f$ must have a strictly lower objective by the construction of $\phi_1$ and $\phi_2$. This proves that $\1_A$ must be the unique maximizer in $\mathcal{F}$ for the above linear functional on $\mathcal{F}$.\footnote{Note that this also shows that any extreme point of $\mathcal{F}$ is exposed.}

Now consider the linear functional $\Phi:\cQ \to \R$  defined by
\[
\Phi(\tilde{q}):=\int_0^1 \phi_1(x_1) \tilde{q}_1(x_1) \d x_1 + \int_0^1 \phi_2(x_2)\tilde{q}_2(x_2) \d x_2\,.
\]
For any $\tilde{q} \in \mathcal{Q}$, let $\tilde{f} \in \mathcal{F}$ be a monotone function that rationalizes $\tilde{q}$ (recall \Cref{lem:gutmann-monotone}). Then, we have 
\begin{align*}
    \int_0^1 \phi_1(x_1) \tilde{q}_1(x_1) \d x_1 + \int_0^1 \phi(x_2)\tilde{q}_2(x_2) \d x_2 &= \int_0^1 (\phi_1(x_1) + \phi_2(x_2)) \tilde{f}(x_1, x_2) \d x \\
    &\leq \int_0^1 (\phi_1(x_1) + \phi_2(x_2)) \1_{A}(x) \d x\,\\
    &= \int_0^1 \phi_1(x_1) q_1(x_1) \d x_1 + \int_0^1 \phi(x_2)q_2(x_2) \d x_2\,,
\end{align*}
where the inequality is due to the optimality of $\1_A$ in the previous claim. Moreover, for any $\tilde{q} \neq q$, the function $\tilde{f}$ cannot equal $\1_A$, which means that the above inequality is strict since $\1_A$ is the unique maximizer for the above objective by the previous claim. This proves that $q$ is an exposed point and hence an extreme point of $\cQ$. 

For unique rationalizability, suppose that $q$ is rationalized by $\1_A$ and some other (not necessarily monotone) function $\tilde{f}$. Suppose for contradiction that $\tilde{f} \neq \1_A$ on a positive measure of $x$. Then, since $\1_A$ is the unique solution of  \eqref{eq:unique} even if we relax the monotonicity constraint imposed by $\mathcal{F}$, we must have 
\begin{align*}
\Phi(q)=&\int_0^1 \phi_1(x_1)q_1(x_1)\d x_1+\int_0^1 \phi_2(x_2)q_2(x_2)\d x_2\\
=& \int_{[0,1]^2} (\phi_1(x_1)+\phi_2(x_2))\1_A(x)\d x\\
>&\int_{[0,1]^2} (\phi_1(x_1)+\phi_2(x_2))\tilde{f}(x_1,x_2)\d x\\
=& \int_0^1 \phi_1(x_1)q_1(x_1)\d x_1+\int_0^1 \phi_2(x_2)q_2(x_2)\d x_2=\Phi(q)\,,
\end{align*}
a contradiction. Thus, it must be that $\tilde{f}\equiv \1_A$ almost everywhere. This completes the proof.

\subsection{Proof of \texorpdfstring{\Cref{thm:rectangle}}{}}

\Cref{subsec:rationalize} proves the first part of \Cref{thm:rectangle}; \Cref{subsec:unique} proves the second part of \Cref{thm:rectangle}. 

\subsubsection{Rationalizability by Nested Upsets Differing by a Rectangle}\label{subsec:rationalize}
Consider any extreme point $q$ of $\overline{\cQ}$. Since $q \in \overline{\cQ}$ is an extreme point of $\overline{\cQ}$, by \Cref{thm:extreme-are-up-sets}, there exist nested up-sets $A_1 \subseteq A_2$ such that $q$ is rationalized by $f=\lambda\1_{A_1}+(1-\lambda)\1_{A_2}$, for some $\lambda \in (0,1)$. We now show that $A_1$ and $A_2$ can be chosen to differ by at most a rectangle (and in fact they must differ by at most a rectangle from our unique rationalizability result which we prove in \Cref{subsec:unique}). For each $j \in \{1,2\}$, since $A_j$ is an up-set, there exists a nonincreasing left-continuous function $g^j:[0,1]\to [0,1]$ such that $A_j=\{(x_1,x_2) \in [0,1]^2: x_2 \geq g^j(x_1)\}$. Moreover, since $A_1 \subseteq A_2$, $g^1(z) \geq g^2(z)$ for all $z \in [0,1]$. Let $\gamma^j(z):=1-g^j(z)$ for all $z \in [0,1]$ and $j \in \{1,2\}$. Then, for all $z \in [0,1]$,
\begin{equation}\label{eq:q1nested}
q_1(z)=\int_{0}^1 [\lambda \1_{A_1}(z,x_2)+(1-\lambda) \1_{A_2}(z,x_2)]\d x_2=\lambda \gamma^1(z)+(1-\lambda) \gamma^2(z)
\end{equation}
and 
\begin{equation}\label{eq:q2nested}
q_2(z)=\int_0^1[\lambda \1_{A_1}(x_1,z)+ (1-\lambda) \1_{A_2}(x_1,z)]\d x_1=\lambda \hat{\gamma}^1(z)+(1-\lambda)\hat{\gamma}^2(z)\,,
\end{equation}
where $\hat{\gamma}^j(z):=1-(\gamma^j)^{-1}(1-z)$ is the conjugate of $\gamma^j$, for all $j \in \{1,2\}$. 

We first claim that $q_1$ must be an extreme point of the set of monotone functions that are majorized by $\hat{q}_2$ and satisfy the linear constraint (fixing $q_2$). Indeed, suppose not. Then there exist distinct monotone functions $q_1^1,q_1^2$ such that $q_1^l \preceq \hat{q}_2$ for $l \in \{1,2\}$, $\lambda' q_1^1+(1-\lambda')q_1^2=q_1$ for some $\lambda' \in (0,1)$, and that 
\[
\int_0^1 q_1^l(x_1)\psi_1(x_1)\d x_1+\int_0^1 q_2(x_2) \psi_2(x_2) \d x_2\leq \eta\,,
\]
for all $l \in \{1,2\}$. Since $q_1^l \preceq \hat{q}_2$ for $l \in \{1,2\}$, by \Cref{lem:gutmann}, there exist $f^1,f^2 \in \F$ such that $(q_1^l,q_2)$ is rationalized by $f^l$ for $l \in \{1,2\}$. Therefore, there exist distinct $q^1:=(q_1^1,q_2),\,q^2:=(q_1^2,q_2) \in \overline{\cQ}$ such that $\lambda' q^1+(1-\lambda')q^2=q$, and hence $q$ is not an extreme point of $\overline{\cQ}$. 

As a result, by Theorem 1 of \citet{nikzad2023constrained}, there exists at most a countable collection of disjoint intervals $\{(\underline{z}_k,\overline{z}_k]\}_{k=1}^\infty$ such that $q_1\equiv \hat{q}_2$ on $[0,1] \backslash \cup_{k=1}^\infty (\underline{z}_k,\overline{z}_k]$, and $q_1$ is a step function with at most two steps on $(\underline{z}_k,\overline{z}_k]$ for all $k \in \N$. Note that this collection can be empty but then our result would follow immediately since such a pair can be rationalized by a single up-set as shown in the proof of \Cref{prop:reverse-majorization}. From now on, we assume that the collection is not empty. We order the intervals from left to right indexed by $k$. Together with \eqref{eq:q1nested}, it then follows that, since $\gamma^1$ and $\gamma^2$ are nondecreasing and $q_1=\lambda \gamma^1+(1-\lambda)\gamma^2$, we must have $\gamma^1,\gamma^2$ as step functions with at most two steps on $(\underline{z}_k,\overline{z}_k]$, for all $k \in \N$. In particular, for each $k \in \N$, there exists $y_k \in (\underline{z}_k,\overline{z}_k)$ such that 
\[
\gamma^j(z)=\begin{cases}
\underline{\gamma}_k^j,&\mbox{if } z \in (\underline{z}_k,y_k]\\
\overline{\gamma}_k^j,&\mbox{if } z \in (y_k,\overline{z}_k]
\end{cases}\,,
\]
for all $j \in \{1,2\}$, where $\overline{\gamma}_k^j \geq \underline{\gamma}_k^j$ for $j \in \{1,2\}$, $\overline{\gamma}_k^1 \leq \overline{\gamma}_k^2$, and $\underline{\gamma}_k^1 \leq \underline{\gamma}_k^2$. Clearly, we have 
\[
q_1(z)=\lambda \gamma^1(z)+(1-\lambda) \gamma^2(z)=\begin{cases}
\lambda \underline{\gamma}_k^1+(1-\lambda)\underline{\gamma}_k^2,&\mbox{if } z \in (\underline{z}_k,y_k]\\
\lambda\overline{\gamma}_k^1+(1-\lambda) \overline{\gamma}_k^2,&\mbox{if } z \in (y_k,\overline{z}_k]
\end{cases}\,.
\]

Next, we show that, for each $k \in \N$, if $\overline{\gamma}^1_k<\overline{\gamma}^2_k$ and $\underline{\gamma}^1_k<\underline{\gamma}^2_k$, then $\overline{\gamma}^j_k=\underline{\gamma}^j_k$ for $j \in \{1,2\}$. We first prove this claim in the case of $\underline{\gamma}^2_k > \overline{\gamma}^1_k$. Suppose for contradiction that there exists $k \in \N$ such that $\overline{\gamma}^1_k<\overline{\gamma}^2_k$, $\underline{\gamma}^1_k<\underline{\gamma}^2_k$, and $\overline{\gamma}^j_k>\underline{\gamma}^j_k$ for some $j \in \{1,2\}$. Suppose that $\overline{\gamma}^1_k>\underline{\gamma}^1_k$.
The case of $\overline{\gamma}^2_k>\underline{\gamma}^2_k$ while $\overline{\gamma}^1_k=\underline{\gamma}^1_k$ can be treated by a similar argument. Let $B:=(\underline{z}_k,y_k] \times (1-\overline{\gamma}^1_k,1-\underline{\gamma}^1_k]$, and $C:=(y_k,\overline{z}_k] \times (1-\underline{\gamma}^2_k,1-\overline{\gamma}^1_k]$. Note that by assumption, both regions $B$ and $C$ have positive measures. By \citet{nikzad2023constrained}, we also know that $q_1(\underline{z}_k) < q_1(\underline{z}_k^+)$ and $q_1(\overline{z}_k) < q_1(\overline{z}_k^+)$. Thus, 
\[q_1(\underline{z}_k) < \lambda \underline{\gamma}_k^1+(1-\lambda)\underline{\gamma}_k^2 < \lambda\overline{\gamma}_k^1+(1-\lambda) \overline{\gamma}_k^2 <  q_1(\overline{z}_k^+)\]
is a strictly increasing sequence of four numbers. Let $\delta_1 > 0$ be the minimum of the consecutive differences. 

We claim that 
\begin{equation}\label{eq:highest-jump}
 q_2(1-\underline{\gamma}^1_k) < q_2((1-\underline{\gamma}^1_k)^{+})   
\end{equation}
\begin{equation}\label{eq:middle-jump}
q_2(1-\overline{\gamma}^1_k) < q_2((1-\overline{\gamma}^1_k)^{+}) 
\end{equation}
\begin{equation}\label{eq:lowest-jump}
q_2(1-\underline{\gamma}^2_k) < q_2((1-\underline{\gamma}^2_k)^{+}) 
\end{equation}

To see \eqref{eq:highest-jump}, note that if we let $\tilde{A}_1:= A_1 \cup B$, then $\tilde{A}_1 \subseteq A_2$ are still nested up-sets and projecting $\lambda \1_{\tilde{A}_1} + (1-\lambda) \1_{A_2}$ to dimension $2$ would result in a monotone $\tilde{q}_2$ that coincides with $q_2$ on $(1-\underline{\gamma}^1_k, 1]$ with $\tilde{q}_2(1-\underline{\gamma}^1_k) > q_2(1-\underline{\gamma}^1_k)$. Thus,
\[q_2((1-\underline{\gamma}^1_k)^+) = \tilde{q}_2((1-\underline{\gamma}^1_k)^+) \geq \tilde{q}_2(1-\underline{\gamma}^1_k) > q_2(1-\underline{\gamma}^1_k)\,.\]
To see \eqref{eq:middle-jump}, apply the same argument with $\tilde{A}_1 := A_1 \cup (y_k,1] \times (1-\underline{\gamma}^2_k,1-\overline{\gamma}^1_k]$. To see \eqref{eq:lowest-jump}, apply the same argument but now keep $A_1$ as before and let $\tilde{A}_2 := A_2 \cup (\underline{z}_k, 1] \times (0, 1 - \underline{\gamma}^2_k]$. In particular, note that $A_1 \subseteq \tilde{A}_2$ are still two nested up-sets, and projecting $\lambda \1_{A_1} + (1-\lambda) \1_{\tilde{A}_2}$ to dimension $2$ would result in a monotone $\tilde{q}_2$ that coincides with $q_2$ on $(1-\underline{\gamma}^2_k, 1]$ with $\tilde{q}_2(1-\underline{\gamma}^2_k) > q_2(1-\underline{\gamma}^2_k)$. Now, let 
\[\delta_2:=   \min \Big\{q_2((1-\underline{\gamma}^1_k)^{+}) - q_2(1-\underline{\gamma}^1_k),\,\, q_2((1-\overline{\gamma}^1_k)^{+})  - q_2(1-\overline{\gamma}^1_k),\,\,  q_2((1-\underline{\gamma}^2_k)^{+})  - q_2(1-\underline{\gamma}^2_k) \Big\}\,. \]
By \eqref{eq:highest-jump}, \eqref{eq:middle-jump}, \eqref{eq:lowest-jump}, we have $\delta_2 > 0$.

Take $\varepsilon_B,\,\varepsilon_C \in \mathbb{R}$ 
with at least one of $\varepsilon_B,\,\varepsilon_C$ being non-zero such that 
\begin{equation}\label{eq:epsilon}
\varepsilon_B \int_B (\psi_1(x_1)+\psi_2(x_2)) \d x=\varepsilon_C \int_C (\psi_1(x_1)+\psi_2(x_2)) \d x\,, 
\end{equation}
and 
\begin{align}\label{eq:small-enough}
\max\Big\{|\varepsilon_B|,\,|\varepsilon_C|\Big\}<\frac{1}{2}\cdot \min \Big\{\delta_1,\, \delta_2\Big\}\,.
\end{align}
Now, let 
\[
f'(x):=\begin{cases}
f(x)+\varepsilon_B,&\mbox{if } x \in B\\
f(x)-\varepsilon_C,&\mbox{if } x \in C\\
f(x),&\mbox{otherwise}
\end{cases}; \quad \mbox{ and } \quad 
f''(x):=\begin{cases}
f(x)-\varepsilon_B,&\mbox{if } x \in B\\
f(x)+\varepsilon_C,&\mbox{if } x \in C\\
f(x), &\mbox{otherwise}.
\end{cases}
\]
By construction, $\frac{1}{2}f'+\frac{1}{2}f''=f$, and hence $q=\frac{1}{2} q'+\frac{1}{2}q''$, where $q'$ and $q''$ are the one-dimensional marginals of $f'$ and $f''$, respectively. 

Since the rectangles $B, C \subseteq [0, 1]^2$ both have positive measures and do not overlap when projecting to either dimension $1$ or dimension $2$, we must have  $q' \neq q''$. Moreover, by \eqref{eq:epsilon} and \eqref{eq:small-enough}, $q_i'$ and $q_i''$ are nondecreasing for $i \in \{1,2\}$, and 
\begin{align*}
&\int_0^1 q_1'(x_1)\psi_1(x_1)\d x_1+\int_0^1 q_2' \psi_2(x_2) \d x_2\\
=& \int_{[0,1]^2} f'(x_1,x_2)(\psi_1(x_1)+\psi_2(x_2))\d x\\
=& \int_{[0,1]^2} f(x_1,x_2)(\psi_1(x_1)+\psi_2(x_2))\d x+ \varepsilon_B \int_B (\psi_1(x_1)+\psi_2(x_2))\d x-\varepsilon_C \int_C (\psi_1(x_1)+\psi_2(x_2))\d x\\
=&\int_{[0,1]^2} f(x_1,x_2)(\psi_1(x_1)+\psi_2(x_2))\d x\\
=& \int_0^1 q_1(x_1) \psi_1(x_1)\d x_1+\int_0^1 q_2(x_2) \psi_2(x_2)\d x_2\\
\leq & \eta\,.
\end{align*}
and 
\begin{align*}
&\int_0^1 q_1''(x_1)\psi_1(x_1)\d x_1+\int_0^1 q_2''(x_2) \psi_2(x_2) \d x_2\\
=& \int_{[0,1]^2} f''(x_1,x_2)(\psi_1(x_1)+\psi_2(x_2))\d x\\
=& \int_{[0,1]^2} f(x_1,x_2)(\psi_1(x_1)+\psi_2(x_2))\d x- \varepsilon_B \int_B (\psi_1(x_1)+\psi_2(x_2))\d x+\varepsilon_C \int_C (\psi_1(x_1)+\psi_2(x_2))\d x\\
=&\int_{[0,1]^2} f(x_1,x_2)(\psi_1(x_1)+\psi_2(x_2))\d x\\
=& \int_0^1 q_1(x_1) \psi_1(x_1)\d x_1+\int_0^1 q_2(x_2) \psi_2(x_2)\d x_2\\
\leq & \eta\,.
\end{align*}
Together, $q',q'' \in \overline{\cQ}$, $q' \neq q''$ and $\frac{1}{2}q'+\frac{1}{2}q''=q$. Therefore, $q$ is not an extreme point of $\overline{\cQ}$, a contradiction. As a result, whenever $\overline{\gamma}^1_k<\overline{\gamma}^2_k$ and $\underline{\gamma}^1_k<\underline{\gamma}^2_k$, it must be that $\overline{\gamma}^j_k=\underline{\gamma}^j_k$ for $j \in \{1,2\}$. 

Now, for the case of $\underline{\gamma}^2_k \leq  \overline{\gamma}^1_k$, apply the same argument as above, except now letting $B := (\underline{z}_k, y_k] \times (1 - \underline{\gamma}^2_k, 1-\underline{\gamma}^1_k]$ and $C := (y_k, \overline{z}_k] \times (1 - \overline{\gamma}^2_k, 1-\overline{\gamma}^1_k]$. In particular, by the same argument as before, we still have 
\[q_1(\underline{z}_k) < \lambda \underline{\gamma}_k^1+(1-\lambda)\underline{\gamma}_k^2 < \lambda\overline{\gamma}_k^1+(1-\lambda) \overline{\gamma}_k^2 <  q_1(\overline{z}_k^+)\,.\]
Moreover, by the same arguments, we also have 
\begin{equation}\label{eq:highest-jump-square}
 q_2(1-\underline{\gamma}^1_k) < q_2((1-\underline{\gamma}^1_k)^{+})   
\end{equation}
\begin{equation}\label{eq:middle-jump-square}
 q_2(1-\underline{\gamma}^2_k) < q_2((1-\underline{\gamma}^2_k)^{+})   
\end{equation}
\begin{equation}\label{eq:middle-2-jump-square}
 q_2(1-\overline{\gamma}^1_k) < q_2((1-\overline{\gamma}^1_k)^{+})   
\end{equation}
\begin{equation}\label{eq:lowest-jump-square}
 q_2(1-\overline{\gamma}^2_k) < q_2((1-\overline{\gamma}^2_k)^{+})   
\end{equation}  
Thus, we can repeat the previous perturbation argument and conclude that whenever $\overline{\gamma}^1_k<\overline{\gamma}^2_k$ and $\underline{\gamma}^1_k<\underline{\gamma}^2_k$, it must be that $\overline{\gamma}^j_k=\underline{\gamma}^j_k$ for $j \in \{1,2\}$. 

Together, we know that there exists a countable collection of intervals $\big\{(\underline{z}_k, \overline{z}_k]\big\}_k$ such that $q_1$ is a constant on each of the interval $(\underline{z}_k, \overline{z}_k]$, and $q_1 = \hat{q}_2$ otherwise. Moreover, for each interval $(\underline{z}_k, \overline{z}_k]$, we also know that the majorization inequality holds with strict inequality for all $s \in (\underline{z}_k, \overline{z}_k)$, and that it holds with equality at the two ends of the interval. In particular, the constant $\gamma_k$ is pinned down by 
\[ \gamma_k = \frac{\int_{\underline{z}_k}^{\overline{z}_k} \hat{q}_2(s) \d s}{\overline{z}_k - \underline{z}_k}\,.\]
In fact, by the same two perturbation arguments given above, we also know that the closure of the pooling intervals must be disjoint: for any $k \neq k' \in \N$, 
\[\overline{z}_k \neq \underline{z}_{k'}\,.\]
This implies that we must have\footnote{This is because, for any $k$, there exist sequences $\{\overline{z}^m_k\}$ and $\{\underline{z}^m_k\}$, along which the majorization constraint binds and $\{\overline{z}^m_k\} \downarrow \overline{z}_k$ and $\{\underline{z}^m_k\} \uparrow \underline{z}_k$.} 
\[q_1(\underline{z}_k) = \hat{q}_2(\underline{z}_k)\,,\,\,q_1(\overline{z}^+_k) = \hat{q}_2(\overline{z}^+_k)\,.\]

Now, apply the symmetric argument to $q_2$. In particular, we can equivalently write $\overline{\mathcal{Q}}$ as the set of monotone pairs $(q_1, q_2)$ that satisfy the linear constraint and satisfy 
\[q_2 \preceq \hat{q}_1\,.\]
Using this representation, we can apply exactly the symmetric argument to conclude that there exists a countable collection of disjoint intervals $\big\{(\underline{t}_j, \overline{t}_j]\big\}_j$ such that $q_2$ is a constant on each of the interval $(\underline{t}_j, \overline{t}_j]$, and $q_2 = \hat{q}_1$ otherwise.

Now, fix any pooling interval $(\underline{z}_k, \overline{z}_k]$ of $q_1$. Consider the interval $I^\star_k:=\big(1 - q_1(\overline{z}_k^+), 1 -q_1(\underline{z}_k)\big]$. Note that on this interval, $\hat{q}_1$ must be a step function with the two steps equal to $1 - \overline{z}_k$ and $1- \underline{z}_k$, and a single jump at $1 - \gamma_k$. Note that there must exist at least one pooling interval $(\underline{t}_j, \overline{t}_j]$ of $q_2$ that overlaps with $I^\star_k$, because otherwise $q_2 = \hat{q}_1$ everywhere on $I^\star_k$, which, by definition, means that for all 
$z \in (\underline{z}_k,\overline{z}_k]$, 
\begin{align*}
\hat{q}_2(z)=1-q_2^{-1}(1-z)=&1-\inf\{y \in [0,1]:q_2(y)>1-z\}\\
=&1-\inf\{y \in (1-q_1(\overline{z}_k^+),1]: q_2(y)>1-z\}\\
=&1-\inf\{y \in (1-q_1(\overline{z}_k^+),1]: \hat{q}_1(y)>1-z\}\\
=&1-(1-\gamma_k)=\gamma_k\,.
\end{align*} 
Thus, $q_1=\hat{q}_2$ everywhere on $(\underline{z}_k,\overline{z}_k]$, and hence the majorization constraint binds everywhere on $(\underline{z}_k,\overline{z}_k]$, contradicting the construction of the interval. Moreover, note that any pooling interval $(\underline{t}_j, \overline{t}_j]$ that overlaps with $I^\star_k$ must satisfy $(\underline{t}_j, \overline{t}_j] \subseteq I^\star_k$. To see this, suppose not. Then there exists some pooling interval $(\underline{t}_j, \overline{t}_j]$
such that $(\underline{t}_j, \overline{t}_j] \cap I^\star_k \neq \emptyset$ and  it includes $1 -q_1(\overline{z}_k^+)$ or $(1 -q_1(\underline{z}_k))^{+}$. Note that if it includes $1 -q_1(\overline{z}_k^+)$, then $\underline{t}_j < 1 -q_1(\overline{z}_k^+) < \overline{t}_j$
and hence there must be a jump discontinuity in $\hat{q}_2$ that jumps strictly across the value $q_1(\overline{z}_k^+)$. But then 
\[\hat{q}_2(s) \neq q_1(\overline{z}_k^+)\]
for all $s \in [0, 1]$ by monotonicity of $\hat{q}_2$, which contradicts that $q_1(\overline{z}_k^+) = \hat{q}_2(\overline{z}_k^+)$. Similarly, if it includes $(1 -q_1(\underline{z}_k))^{+}$, then $\overline{t}_j > 1 -q_1(\underline{z}_k) > \underline{t}_j$ and hence there must be a jump discontinuity in $\hat{q}_2$ that jumps strictly across the value $q_1(\underline{z}_k)$. But then 
\[\hat{q}_2(s) \neq q_1(\underline{z}_k)\]
for all $s \in [0, 1]$ by monotonicity of $\hat{q}_2$, which contradicts that $q_1(\underline{z}_k) = \hat{q}_2(\underline{z}_k)$. Therefore, we conclude that any pooling interval $(\underline{t}_j, \overline{t}_j]$ of $q_2$ that overlaps with $I^\star_k$ must satisfy $(\underline{t}_j, \overline{t}_j] \subseteq I^\star_k$. But then, because $\hat{q}_1$ is a step function on the interval $I^\star_k$ with two steps, any such pooling interval $(\underline{t}_j, \overline{t}_j]$ must contain the jump point $1-\gamma_k$ in its interior in order to have the majorization constraint hold with strict inequality in the interior of the interval. Since these intervals are disjoint, there can be at most one interval $(\underline{t}_j, \overline{t}_j]$ that contains $1-\gamma_k$. Together, we conclude that there exists a unique pooling interval $(\underline{t}_j, \overline{t}_j]$ such that $(\underline{t}_j, \overline{t}_j] \cap I^\star_k \neq \emptyset$. Moreover, the pooling interval $(\underline{t}_j, \overline{t}_j]$ is entirely contained in $I^\star_k$ and includes the jump point $1 -\gamma_k$. Let $(\underline{t}^\star_k, \overline{t}^\star_k]$ denote this unique interval associated with $I^\star_k$. On this interval, $q_2$ is a constant $\kappa_k$ that is pinned down by 
\[ \kappa_k = \frac{\int^{\overline{t}^\star_k}_{\underline{t}^\star_k} \hat{q}_1(s) \d s}{\overline{t}^\star_k -\underline{t}^\star_k}\,.\]
Note that the above conclusion also implies that on each interval $(\underline{z}_k, \overline{z}_k]$, $\hat{q}_2$ is a step function with exactly two steps, where one step is strictly below and one step is strict above the constant value of $q_1$, and a jump point at $1 - \kappa_k$.  

We also claim that any pooling interval $(\underline{t}_j, \overline{t}_j]$ of $q_2$ must be contained in the interval $I^\star_k$ for some $k$. To see this, suppose not. Then there exists a pooling interval $(\underline{t}_j, \overline{t}_j]$ such that it is not contained in  $I^\star_k$ for any $k$. But then by the previous argument, it cannot overlap with $I^\star_k$ for any $k$. Thus,  $(\underline{t}_j, \overline{t}_j] \subseteq [0, 1]\backslash \cup_k I^\star_k$. Note that by symmetric arguments, we also know that on the pooling interval $(\underline{t}_j, \overline{t}_j]$ of $q_2$, we must have $\hat{q}_1$ being a step function with exactly two steps, one strictly below and one strictly above the constant value $\beta_j$ of $q_2$, and a single jump point. Let $1 - w_j$ denote the jump point, and let $1-\overline{s}_j<\beta_j$ be the lower step, $1-\underline{s}_j>\beta_j$ be the higher step. Then $q_1(s)=1-(\hat{q}_1)^{-1}(1-s)=w_j$ for all $s \in (\underline{s}_j,\overline{s}_j]$. Note that $(\underline{s}_j, \overline{s}_j] \cap (\underline{z}_k, \overline{z}_k] = \emptyset$ for all $k$, because otherwise there exist some $k$ and some $z \in (\underline{z}_k, \overline{z}_k]$ such that 
\[1 - q_1(z) = 1-  w_j \in (\underline{t}_j, \overline{t}_j]\,,\]
contradicting that $(\underline{t}_j, \overline{t}_j] \cap I^\star_k = \emptyset$. Now, since $(\underline{s}_j, \overline{s}_j] \subseteq [0, 1]\backslash \cup_k (\underline{z}_k, \overline{z}_k]$, we know that $q_1(s) = \hat{q}_2(s)$ for all $s \in (\underline{s}_j, \overline{s}_j]$. However, since $1-\overline{s}_j<\beta_j<1-\underline{s}_j$, for all $s$ such that $1-s \in (1-\overline{s}_j,\,\beta_j)$, we have $\hat{q}_2(s)=1-q_2^{-1}(1-s) \geq 1-\underline{t}_j$. Likewise, for all $s'$ such that $1-s' \in (\beta_j,\,1-\underline{s}_j)$, we have $\hat{q}_2(s')=1-q_2^{-1}(1-s') \leq 1-\overline{t}_j$. Thus, we have $s, s' \in (\underline{s}_j,\overline{s}_j)$, such that 
\[\hat{q}_2(s) \geq 1-\underline{t}_j > 1 - \overline{t}_j \geq \hat{q}_2(s')\,.\]
But we also know that \[\hat{q}_2(s) = q_1(s) = w_j = q_1(s') = \hat{q}_2(s')\,,\] 
a contradiction.

Combining the previous two paragraphs, we can conclude that the pooling intervals $(\underline{t}_j, \overline{t}_j]$ of $q_2$ where the majorization inequality holds with strict inequality in their interiors are exactly the pooling intervals $(\underline{t}^\star_k, \overline{t}^\star_k]$ that we identify. On each interval $(\underline{t}^\star_k, \overline{t}^\star_k]$,  we know that  $q_2$ is a constant, $\hat{q}_1$ is a step function with two steps where the jump point is exactly $1 - q_1(\overline{z}_k)$. Note that the two steps are exactly $1- \overline{z}_k$ and $1 - \underline{z}_k$.

Now we construct a function $f$ that rationalizes $(q_1, q_2)$ and then argue that we can always perturb $f$ to be $f'$ and $f''$ such that their one-dimensional marginals $q'$ and $q''$ satisfy $\frac{1}{2} q' + \frac{1}{2} q'' = q$, unless the function $f$ is a mixture of two indicator functions defined on nested up-sets that differ by a rectangle. 

Let 
\[\tilde{g}^1(z) := 1 - q_1(z)\,, \quad\tilde{g}^2(z) := 1 - q_1(z) \]
for all $z \not\in \bigcup_k (\underline{z}_k, \overline{z}_k]$, and let  
\[\tilde{g}^1(z) := \overline{t}^\star_k\,, \quad \tilde{g}^2(z) := \underline{t}^\star_k\,.\]
for all $z \in (\underline{z}_k, \overline{z}_k]$ and for all $k$. For each $j \in \{1, 2\}$, let  
\[\tilde{A}_j:= \Big\{x \in [0, 1]^2: x_2 \geq \tilde{g}^j(x_1)\Big\}\,.\]
For each $k$, let $\lambda^\star_k \in (0, 1)$ be the unique solution to 
\[\lambda^\star_k (1-\overline{t}^\star_k) + (1-\lambda^\star_k)(1 -\underline{t}^\star_k) = q_1(\overline{z}_k) \,.\]
There always exists such a solution because 
\[ 1 - \overline{t}^\star_k < q_1(\overline{z}_k) <1- \underline{t}^\star_k\,,\]
where the inequalities are due to that the interval $(\overline{t}^\star_k, \underline{t}^\star_k]$ contains $1-\gamma_k$, and $\gamma_k$ is the constant value $q_1$ takes on $(\underline{z}_k, \overline{z}_k]$. For each $k$, let 
\[D_k :=  (\underline{z}_k, \overline{z}_k] \times (\underline{t}^\star_k, \overline{t}^\star_k]\,.\]
Let 
\[f:= \1_{\tilde{A}_1} + \sum_k (1-\lambda^\star_k) \1_{D_k}\,.\]

Clearly, we have $\tilde{g}^1 \geq \tilde{g}^2$ by construction. Moreover, by construction, we also have 
\[1 - q_1(\underline{z}_k) \geq \overline{t}^\star_k > \underline{t}^\star_k \geq 1 - q_1(\overline{z}^+_k)\,.\]
Combined with $q_1$ being nondecreasing, this shows that the constructed $\tilde{g}^j$ is nonincreasing for each $j\in \{1, 2\}$. Therefore, $\tilde{A}_1 \subset \tilde{A}_2$ are two nested up-sets that differ by 
\[\bigcup_k (\underline{z}_k, \overline{z}_k] \times (\underline{t}^\star_k, \overline{t}^\star_k]\,,\]
which is a countable collection of rectangles. Moreover, these rectangles $\{D_k\}_k$ satisfy that for any $k \neq k'$ and any $i \in \{1, 2\}$, we have 
\[\Big\{x_i : \int_0^1 \1_{D_k}(x_i, x_{-i}) \d x_{-i} > 0\Big\} \bigcap \Big\{x_i : \int_0^1 \1_{D_{k'}}(x_i, x_{-i}) \d x_{-i} > 0\Big\} = \emptyset\,.\]
That is, any two different rectangles have non-overlapping projections on each dimension. Combined with that 
\[\bigcup_k D_k = \tilde{A}_2\backslash \tilde{A}_1\]
is the difference between two nested up-sets, we can conclude that the function $f$ is a multidimensional monotone function for any collection of $\{\lambda^\star_k\}_k$. 

We verify that $f$ rationalizes $(q_1, q_2)$. Note that by construction $q_1$ is $f$'s one-dimensional marginal in dimension $1$. It remains to verify that $q_2$ is equal to $f$'s one-dimensional marginal in dimension $2$. By the previous claims, we know that the majorization inequality  
\[\int^1_s q_2(z) \d z \leq \int^1_s \hat{q}_1(z) \d z\]
hold with strict inequality if and only if $s \in (\underline{t}^\star_k, \overline{t}^\star_k)$ for some $k$. Since we also know that $\overline{t}^\star_k < \underline{t}^\star_{k+1}$ for all $k$, it follows that $q_2(s) = \hat{q}_1(s)$ for all $s \not \in \cup_{k} (\underline{t}^\star_k, \overline{t}^\star_k]$ and \[q_2(s) = \frac{\int^{\overline{t}^\star_k}_{\underline{t}^\star_k} \hat{q}_1(t) \d t}{\overline{t}^\star_k -\underline{t}^\star_k}\]
for all $s \in (\underline{t}^\star_k, \overline{t}^\star_k]$ and all $k$. Now, for any $s \not \in \cup_k (\underline{t}^\star_k, \overline{t}^\star_k]$, we can compute 
\[\int_0^1 f(x_1, s) \d x_1 = 1 - (g^1)^{-1}(s) =  1 - (g^2)^{-1}(s) = 1 - q_1^{-1}(1-s) = \hat{q}_1(s)\,.\]
Moreover, for any $k$ and any  $s \in  (\underline{t}^\star_k, \overline{t}^\star_k]$, we can compute 
\begin{align*}
    \int_0^1 f(x_1, s) \d x_1 &= (1 - \lambda^\star_k) (1 - (g^2)^{-1}(s)) + \lambda^\star_k (1 - (g^1)^{-1}(s))\\
    &= (1 - \lambda^\star_k) (1-\underline{z}_k) + \lambda^\star_k (1-\overline{z}_k)\,.
\end{align*}
Recall that we also have 
\[\lambda^\star_k (1-\overline{t}^\star_k) + (1-\lambda^\star_k)(1 -\underline{t}^\star_k) = q_1(\overline{z}_k) \,,\]
and hence 
\[\lambda^\star_k \overline{t}^\star_k + (1-\lambda^\star_k)\underline{t}^\star_k = 1 - q_1(\overline{z}_k) \,.\]
Let $l_k := 1 - q_1(\overline{z}_k)$, which is the jump point of $\hat{q}_1$ on the interval $(\underline{t}^\star_k, \overline{t}^\star_k]$. Note that we can compute 
\[q_2(s) = \frac{\int^{\overline{t}^\star_k}_{\underline{t}^\star_k} \hat{q}_1(t) \d t}{\overline{t}^\star_k -\underline{t}^\star_k} = \frac{\hat{q}_1((\underline{t}^\star_k)^+)\cdot (l_k - 
\underline{t}^\star_k) + \hat{q}_1(\overline{t}^\star_k)\cdot (
\overline{t}^\star_k - l_k)}{\overline{t}^\star_k - \underline{t}^\star_k} = \frac{(1 - \overline{z}_k) (l_k - \underline{t}^\star_k) + (1 - \underline{z}_k) (
\overline{t}^\star_k - l_k)}{\overline{t}^\star_k - \underline{t}^\star_k}\,.\]
Using the construction of $\lambda^\star_k$, we have
\[(1 - \overline{z}_k) (l_k - \underline{t}^\star_k) + (1 - \underline{z}_k)  (\overline{t}^\star_k - l_k)= (1-\overline{z}_k)\lambda^\star_k(\overline{t}^\star_k - \underline{t}^\star_k) +  (1-\underline{z}_k)(1-\lambda^\star_k)(\overline{t}^\star_k - \underline{t}^\star_k)\,, \]
and hence 
\[q_2(s) = \lambda^\star_k (1 - \overline{z}_k) + (1 - \lambda^\star_k) (1 - \underline{z}_k) =   \int_0^1 f(x_1, s) \d x_1 \,,\]
proving the claim. 

Therefore, $f$ rationalizes $(q_1, q_2)$. We first claim that $\lambda^\star_k$ must all coincide. Suppose for contradiction that $\lambda^\star_k \neq \lambda^\star_{k'}$ for some $k$, $k'$. Recall that $\lambda^\star_k, \lambda^\star_{k'}\in (0, 1)$. Then, by \Cref{thm:nested-up-sets}, $f$ is not an extreme point of $\overline{\mathcal{F}}$ where the constraint is the linear constraint imposed on $(q_1, q_2)$. Thus, there exists some $u: [0, 1]^2 \rightarrow \R$ such that $f' := f + u  \in \overline{\mathcal{F}}$,  $f'' := f - u \in \overline{\mathcal{F}}$ and $f = \frac{1}{2}f' + \frac{1}{2}f''$. But clearly, $u$ must be equal to $0$ on $\tilde{A}_1$ and also equal to $0$ on $[0, 1]^2 \backslash \tilde{A}_2$, since $f$ equals either $1$ or $0$ on these two regions, respectively. Thus, $u$ takes non-zero values only on $\tilde{A}_2 \backslash \tilde{A}_1 = \bigcup_j D_j$. Let $q'$ and $q''$ be the one-dimensional marginals of $f'$ and $f''$, respectively. Clearly we have $q'$ and $q''$ are both in $\overline{\mathcal{Q}}$ and $\frac{1}{2}q' + \frac{1}{2}q'' = q$.  Since any two different rectangles $D_j, D_{j'}$ have non-overlapping projections on each dimension, note that $q' \neq q$ and $q'' \neq q$. But then $q$ cannot be an extreme point. Contradiction. 

Therefore, $\lambda^\star_k$ must be a constant $\lambda^\star$ across all $k$. Now, we claim that there can be at most one such rectangle. Indeed, if there are two such rectangles, simply following the same perturbation we used in the beginning for the case of $\underline{\gamma}^2_k \leq  \overline{\gamma}^1_k$ to perturb $f$ would result in $f'$ and $f''$ such that their one-dimensional marginals $q'$ and $q''$ are both in $\overline{\mathcal{Q}}$, are different, and satisfy $\frac{1}{2}q' + \frac{1}{2}q'' = q$. Thus, it must be the case $\tilde{A}_2\backslash\tilde{A}_1$ has only one rectangle, and hence $(q_1, q_2)$ can be rationalized by a mixture of two indicator functions defined on two nested up-sets that differ by at most a rectangle.

\subsubsection{Unique Rationalizability}\label{subsec:unique}
Now we prove that every extreme point of $\overline{\cQ}$ must be uniquely rationalized among all monotone functions. Fix any $q=(q_1, q_2) \in \text{ext}(\overline{\cQ})$. We first prove that it is uniquely rationalized among all mixtures of two indicator functions defined on nested up-sets. Since $q$ is an extreme point, by \Cref{subsec:rationalize}, it is rationalized by $f^\star:= \lambda^\star \1_{A^\star_1}+(1-\lambda^\star)\1_{A^\star_2}$ for some nested up-sets $A_1^\star\subseteq A_2^\star$ that differ by at most a rectangle and some $\lambda^\star$. If $A_2^\star \backslash A_1^\star$ has measure $0$, then $q$ must be an extreme point of $\mathcal{Q}$ and uniquely rationalized among all functions by \Cref{thm:rationalized-upsets}. Now suppose $A_2^\star \backslash A_1^\star = [\underline{x}_1, \overline{x}_1] \times [\underline{x}_2, \overline{x}_2]$, for some $\underline{x}_1 < \overline{x}_1$ and $\underline{x}_2 < \overline{x}_2$. Since $A_1^\star$ and $A_2^\star$ are up-sets, there exist nonincreasing left-continuous functions $g_1^\star, g_2^\star:[0,1] \to [0,1]$ such that $g_1^\star \geq g_2^\star$, and that $A_j^\star=\{(x_1,x_2): x_2 \geq g^\star_j(x_1)\}$ for $j \in \{1,2\}$ (up to a measure zero set). Then, since $A_1^\star$ and $A_2^\star$ differ by a rectangle $[\underline{x}_1,\overline{x}_1] \times [\underline{x}_2,\overline{x}_2]$, it must be that $g_1^\star \equiv g_2^\star$ on $[0,1] \backslash (\underline{x}_1,\overline{x}_1]$, $g_1^\star \equiv \overline{x}_2$ and $g_2^\star \equiv \underline{x}_2$ on $(\underline{x}_1,\overline{x}_1]$. 

Take any $\hat{x}_2 \in (\underline{x}_2,\overline{x}_2)$, and let 
\[
\phi_1(x_1):=\begin{cases}
-g_1^\star(x_1),&\mbox{if } x_1 \notin (\underline{x}_1,\overline{x}_1]\\
-\hat{x}_2,&\mbox{if } x_1 \in (\underline{x}_1,\overline{x}_1]
\end{cases}\,,\quad \mbox{ and } \quad 
\phi_2(x_2):=\begin{cases}
x_2,&\mbox{if } x_2 \notin (\underline{x}_2,\overline{x}_2]\\
\hat{x}_2,&\mbox{if } x_2 \in (\underline{x}_2,\overline{x}_2]
\end{cases}\,,
\]
for all $(x_1,x_2) \in [0,1]^2$.
Then, note that 
\[
\phi_1(x_1)+\phi_2(x_2)=\begin{cases}
x_2-g_1^\star(x_1),&\mbox{if } x_1 \notin (\underline{x}_1,\overline{x}_1]\,, x_2 \notin (\underline{x}_2,\overline{x}_2]\\
x_2-\hat{x}_2,&\mbox{if } x_1  \in (\underline{x}_1,\overline{x}_1]\,, x_2 \notin (\underline{x}_2,\overline{x}_2]\\
\hat{x}_2-g_1^\star(x_1),&\mbox{if }  x_1  \notin (\underline{x}_1,\overline{x}_1]\,, x_2 \in (\underline{x}_2,\overline{x}_2]\\
0,& \mbox{if } x_1 \in (\underline{x}_1,\overline{x}_1]\,, x_2 \in (\underline{x}_2,\overline{x}_2]
\end{cases}
\]
In particular, since $g_1^\star \geq g_2^\star$ are nonincreasing and since $g_1^\star \equiv \overline{x}_2$ and $g_2^\star \equiv \underline{x}_2$ on $(\underline{x}_1,\overline{x}_1]$,
\begin{align*}
\phi_1(x_1)+\phi_2(x_2) &\begin{cases}
>0,&\mbox{if} (x_1,x_2) \in \{(x_1,x_2):x_2>g_1^\star(x_1)\,, x_1 \notin (\underline{x}_1,\overline{x}_1]\} \cup (\underline{x_1},\overline{x}_1] \times (\overline{x}_2,1] \\
<0,&\mbox{if } (x_1,x_2) \in \{(x_1,x_2):x_2<g_1^\star(x_1)\,, x_1 \notin (\underline{x}_1,\overline{x}_1]\} \cup (\underline{x_1},\overline{x}_1] \times [0,\underline{x}_2]\\
=0,&\mbox{if } (x_1,x_2) \in \{(x_1,x_2):x_2=g_1^\star(x_1)\,, x_1 \notin (\underline{x}_1,\overline{x}_1]\} \cup (\underline{x}_1,\overline{x}_1] \times (\underline{x}_2,\overline{x}_2]\,.
\end{cases}
\end{align*}

Let $\overline{\mathcal{F}} \subseteq \F$ be the set of monotone functions $f$ such that 
\[
\int_{[0,1]^2} (\psi_1(x_1)+\psi_2(x_2))f(x_1,x_2)\d x \leq \eta \,,
\]
and consider the linear functional $\Phi:\overline{\F} \to \R$, where 
\[
\Phi(\tilde{f}):=\int_{[0,1]^2} [\phi_1(x_1)+\phi_2(x_2)]\tilde{f}(x_1,x_2)\d x \,.
\]
Note that since $f^\star(x_1,x_2)=1$ for all $(x_1,x_2)$ such that $\phi_1(x_1)+\phi_2(x_2)>0$, and $f^\star(x_1,x_2)=0$ for all $(x_1,x_2)$ such that $\phi_1(x_1)+\phi_2(x_2)<0$, we must have that $f^\star$ solves 
\begin{equation}\label{eq:linear-square}
\max_{\tilde{f} \in \overline{\F}} \Phi(\tilde{f})\,.
\end{equation}
Moreover, we claim that for any $f \in \overline{\F}$, if \textit{(i)} $f$ is a solution of \eqref{eq:linear-square}, \textit{(ii)} $f=\lambda\1_{A_1}+(1-\lambda)\1_{A_2}$, where $A_1 \subseteq A_2$ are nested up-sets, and \textit{(iii)} $f$ rationalizes $q$, then $f \equiv f^\star$ almost everywhere. Indeed, for $f=\lambda_{A_1}+(1-\lambda)\1_{A_2}=\1_{A_1}+(1-\lambda) \1_{A_2\backslash A_1}$ to solve \eqref{eq:linear-square}, it must be that $\Phi(f)=\Phi(f^\star)$. As a result, $f(x_1,x_2)$ must equal $1$ whenever $\phi_1(x_1)+\phi_2(x_2)>0$, and must equal $0$ whenever  $\phi_1(x_1)+\phi_2(x_2)<0$. Hence, $f$ and $f^\star$ could only differ on $(\underline{x}_1,\overline{x}_1] \times (\underline{x}_2,\overline{x}_2]$. Moreover, we may assume that $\lambda \in (0, 1)$, because otherwise $f$ is an indicator function defined on an up-set, which by \Cref{thm:rationalized-upsets} implies that $f^\star \equiv f$.  Meanwhile, since $A_1 \subseteq A_2$ are up-sets, there exist  nonincreasing functions $g_1,g_2:[0,1] \to [0,1]$ with $g_1 \geq g_2$ such that $\{(x_1, x_2): x_2 \geq g_j(x_1)\} \backslash A_j$ has measure zero for each $j$. Since $f$ also rationalizes $q$, we have 
\[
\lambda g_1(x_1)+(1-\lambda) g_2(x_1)=1-q_1(x_1)=\lambda^\star g_1^\star(x_1)+(1-\lambda^\star)g_2^\star(x_1)\,,
\]
for all $x_1 \in [0,1]$, and 
\[
\lambda g_1^{-1}(x_2)+(1-\lambda) g_2^{-1}(x_2)=1-q_2(x_2)=\lambda^\star(g_1^\star)^{-1}(x_2)+(1-\lambda^\star)(g_2^\star)^{-1}(x_2)\,,
\]
for all $x_2 \in [0,1]$. Since $q_1$ and $q_2$ are constant on $(\underline{x}_1,\overline{x}_1]$ and $(\underline{x}_2,\overline{x}_2]$, respectively, it must be that $g_1$ and $g_2$ are constant on $(\underline{x}_1,\overline{x}_1]$, while $g_1^{-1}$ and $g_2^{-1}$ are constant on $(\underline{x}_2,\overline{x}_2]$. Moreover, since $f \equiv f^\star$ on $[0,1]^2 \backslash (\underline{x}_1,\overline{x}_1] \times (\underline{x}_2,\overline{x}_2]$, it must be that $\lambda=\lambda^\star$, $g_1 \equiv \overline{x}_2$, and $g_2 \equiv \underline{x}_2$ on $(\underline{x}_1,\overline{x}_1]$. Indeed, if the constant value $\kappa_1$ of $g_1$ does not coincide with $\overline{x}_2$, then sine $\lambda \in (0, 1)$, $q_2$ must take at least two distinct values on $(\underline{x}_2,\overline{x}_2]$. Thus, $\kappa_1 = \overline{x}_2$. Similarly the constant value $\kappa_2$ of $g_2$ must coincide with $\underline{x}_2$ in order for $q_2$ to be constant on $(\underline{x}_2,\overline{x}_2]$. But then the constant $\lambda$ must coincide with $\lambda^\star$ for $f$ to rationalize $(q_1, q_2)$. Together, this implies that $f \equiv f^\star$ almost everywhere, as desired.

Lastly, for any $f \in \overline{\F}$ that rationalizes $q$, since
\begin{align*}
\Phi(f^\star)=&\int_{[0,1]^2} [\phi_1(x_1)+\phi_2(x_2)] f^\star(x_1,x_2)\d x\\
=&\int_0^1 \phi_1(x_1) q_1(x_1) \d x_1 +\int_0^1 \phi_2(x_2)q_2(x_2)\d x_2\\
=& \int_{[0,1]^2} [\phi_1(x_1)+\phi_2(x_2)] f(x_1,x_2)\d x\\
=& \Phi(f)\,,
\end{align*}
we have that $f$ solves \eqref{eq:linear-square}, where the second and the third equations follow from rationalizability. As a result, if, in addition, $f=\lambda\1_{A_1}+(1-\lambda)\1_{A_2}$ for some $\lambda \in (0,1)$ and for some nested up-sets $A_1 \subseteq A_2$ then it must be that $f\equiv f^\star$ almost everywhere, as desired. 

It now remains to prove that $q$ is uniquely rationalized among all monotone functions. To see this, note that any monotone $f \in \mathcal{F}$ that rationalizes $q$ must be in $\overline{\mathcal{F}}$. Now, for any monotone $f \in \overline{\mathcal{F}}$ that rationalizes the extreme point $q \in \overline{\cQ}$, by \Cref{thm:nested-up-sets} and Choquet's integral representation theorem, 
\[f = \int f^r \mu (\d r)\]
where $r \in [0, 1]$ is an index, $f^r = \lambda^r \1_{A_1^r} + (1 - \lambda^r) \1_{A^r_2}$ where $A^r_1 \subseteq A^r_2$ are two nested up-sets, $\lambda^r \in (0,1)$, and $f^r \in \overline{\mathcal{F}}$. Let $q^r=(q^r_1, q^r_2) \in \overline{\mathcal{Q}}$ be the one-dimensional marginals of $f^r$. That is, 
\[
q_1^r(x_1):=\int_0^1 f^r(x_1,x_2)\d x_2 \quad \mbox{ and } \quad q_2^r(x_2):=\int_0^1 f^r(x_1,x_2)\d x_1\,.
\]
It follows immediately that, for $i \in \{1,2\}$,
\[q_i = \int q^r_i \mu (\d r)\,.\]
Since $(q_1, q_2)$ is an extreme point of $\overline{\mathcal{Q}}$, by \citet{bauer1961silovscher}, there exists a unique probability measure on $\overline{\mathcal{Q}}$ that represents $q$. Thus, 
\[q^r \equiv q\]
for almost all $r$ with respect to measure $\mu$. For all such $r$, because $f^r$ is a mixture of two indicator functions defined on nested up-sets that also rationalizes $q$, by the previous claim, it must be that 
\[f^r \equiv \lambda^\star \1_{A^\star_1} + (1- \lambda^\star) \1_{A^\star_2}\,,\]
almost everywhere. Therefore, we have 
\[f \equiv \lambda^\star \1_{A^\star_1} + (1- \lambda^\star) \1_{A^\star_2}\,,\]
almost everywhere. This completes the proof.

\subsection{Proof of \texorpdfstring{\Cref{prop:marginal-sufficiency}}{}}
Necessity and uniqueness follow immediately from the proof of \Cref{thm:rectangle}. For sufficiency, consider any such $q$. Let $f=\1_{A}+\lambda\1_{D}$ be a monotone function that rationalizes $q$, where $\lambda \in (0,1)$, $A \subseteq [0,1]^2$ is an up-set, and $D \subseteq [0,1]^2$ is a rectangle. If $D$ has measure zero, then $q$ is an extreme point of $\overline{\cQ}^\star$ according to \Cref{thm:rationalized-upsets}. Suppose now that $D$ has a positive measure. Then, $\mathrm{cl}(D)=[\underline{x}_1,\overline{x}_1] \times [\underline{x}_2,\overline{x}_2]$ with $\underline{x}_1<\overline{x}_1$ and $\underline{x}_2<\overline{x}_2$. In particular, $q_1(x_1)=1-(\lambda\underline{x}_2+(1-\lambda) \overline{x}_2)$ for all $x_1 \in (\underline{x}_1,\overline{x}_1]$, whereas $q_2(x_2)=1-(\lambda \underline{x}_1+(1-\lambda)\overline{x}_1)$ for all $x_2 \in (\underline{x}_2,\overline{x}_2]$. Let 
\[
g_1(z):=\begin{cases}
1-q_1(z),&\mbox{if } z \notin (\underline{x}_1,\overline{x}_1]\\
\overline{x}_2,&\mbox{if } z \in (\underline{x}_1,\overline{x}_1]
\end{cases}\,; \quad  \mbox{ and } \quad
g_2(z):=\begin{cases}
1-q_1(z),&\mbox{if } z \notin (\underline{x}_1,\overline{x}_1]\\
\underline{x}_2,&\mbox{if } z \in (\underline{x}_1,\overline{x}_1]
\end{cases}\,,
\]
for all $z \in [0,1]$. Note that $g_1,g_2$ are nonincreasing and left-continuous, $g_1 \geq g_2$, and $A=\{(x_1,x_2):x_2 \geq g_1(x_1)\}$ whereas $A\cup D=\{(x_1,x_2):x_2 \geq g_2(x_1)\}$ almost everywhere. 

In the meantime, by Choquet's representation theorem, there exist  $\{q^r\}_{r \in [0,1]} \subseteq \mathrm{ext}(\overline{\cQ}^\star)$ and a probability measure $\mu \in \Delta([0,1])$ such that 
\begin{equation}\label{eq:mixrationalize}
q_i(x_i)=\int_0^1 q^r_i(x_i) \mu(\d r)\,,
\end{equation}
for all $x_i \in [0,1]$ and $i \in \{1,2\}$. We prove that $q$ is an extreme point of $\ub{\cQ}^\star$ by showing that $q^r \equiv q$ almost everywhere, for $\mu$-almost all $r \in [0,1]$. To this end, first note that since $q_i$ is constant on $(\underline{x}_i,\overline{x}_i]$, $q_i^r$ must also be constant on $(\underline{x}_i,\overline{x}_i]$ for $\mu$-almost all $r \in [0,1]$ and for all $i \in \{1,2\}$. Furthermore, since $q^r$ is an extreme point of $\overline{\cQ}^\star$, by \Cref{thm:rectangle}, $q^r$ must be rationalized by some  $f^r:=\1_{A_r}+\lambda_r \1_{D_r}$, for all $r \in [0,1]$, where $\lambda_r \in (0,1)$, $A_r \subseteq [0,1]^2$ is an up-set and $D_r \subseteq [0,1]^2$ is a rectangle.

Similar to the proof of the uniqueness part of \Cref{thm:rectangle}, take any $\hat{x}_2 \in (\underline{x}_2,\overline{x}_2)$, and let 
\[
\phi_1(x_1):=\begin{cases}
-g_1(x_1),&\mbox{if } x_1 \notin (\underline{x}_1,\overline{x}_1]\\
-\hat{x}_2,&\mbox{if } x_1 \in (\underline{x}_1,\overline{x}_1]
\end{cases}\,,\quad \mbox{ and } \quad 
\phi_2(x_2):=\begin{cases}
x_2,&\mbox{if } x_2 \notin (\underline{x}_2,\overline{x}_2]\\
\hat{x}_2,&\mbox{if } x_2 \in (\underline{x}_2,\overline{x}_2]
\end{cases}\,,
\]
for all $(x_1,x_2) \in [0,1]^2$.
Then, as argued in the uniqueness proof of \Cref{thm:rectangle} (see \Cref{subsec:unique}),
\begin{align*}
\phi_1(x_1)+\phi_2(x_2) &\begin{cases}
>0,&\mbox{if} (x_1,x_2) \in A \\
<0,&\mbox{if } (x_1,x_2) \in [0,1]^2 \backslash A \cup D\\
=0,&\mbox{if } (x_1,x_2) \in D\,,
\end{cases}
\end{align*}
for almost all $(x_1,x_2) \in [0,1]^2$. Let $\overline{\mathcal{F}}^\star \subseteq \F$ be the set of monotone functions $\tilde{f}$ such that 
\begin{equation}\label{eq:linear-equality}
\int_{[0,1]^2} (\psi_1(x_1)+\psi_2(x_2))\tilde{f}(x_1,x_2)\d x = \eta \,.
\end{equation}
Note that, since $q^r \in \overline{\cQ}^\star$ for all $r \in [0,1]$ and $q \in \overline{\cQ}^\star$, and since $f$ rationalizes $q$ and $f^r$ rationalized $q^r$ for all $r \in [0,1]$, we have $f,f^r \in \overline{\F}^\star$ for all $r \in [0,1]$. Consider now the linear functional $\Phi:\overline{\F}^\star \to \R$, where 
\[
\Phi(\tilde{f}):=\int_{[0,1]^2} [\phi_1(x_1)+\phi_2(x_2)]\tilde{f}(x_1,x_2)\d x \,.
\]
Note that since $f(x_1,x_2)=1$ for all $(x_1,x_2)$ such that $\phi_1(x_1)+\phi_2(x_2)>0$, and $f(x_1,x_2)=0$ for all $(x_1,x_2)$ such that $\phi_1(x_1)+\phi_2(x_2)<0$, we must have that $f$ solves 
\begin{equation}\label{eq:linear-square2}
\max_{\tilde{f} \in \overline{\F}^\star} \Phi(\tilde{f})\,.
\end{equation}
Moreover, any other solution of \eqref{eq:linear-square2} can only differ from $f$ with a positive measure on $D$. 

Note that, by \eqref{eq:mixrationalize}, we have 
\begin{align*}
\Phi(f)=&\int_{[0,1]^2} (\phi_1(x_1)+\phi_2(x_2))f(x_1,x_2)\d x\\
=& \int_0^1 q_1(x_1) \phi_1(x_1)\d x_1+\int_0^1 q_2(x_2)\phi_2(x_2) \d x_2\\
=& \int_0^1 \left(\int_0^1 q_1^r(x_1)\mu(\d r)\right)\phi_1(x_1)\d x_1+\int_0^1 \left(\int_0^1 q_2^r(x_2)\mu(\d r)\right)\phi_2(x_2)\d x_2\\
=&\int_0^1 \left(\int_0^1 \int_0^1 f^r(x_1,x_2)\mu(\d r) \d x_2\right)\phi_1(x_1)\d x_1+\int_0^1 \left(\int_0^1 \int_0^1 f^r(x_1,x_2)\mu(\d r) \d x_1\right)\phi_2(x_2)\d x_2\\
=& \int_{[0,1]^2} (\phi_1(x_1)+\phi_2(x_2))\left(\int_0^1 f^r(x_1,x_2)\mu(\d r)\right) \d x\\
=&\Phi\left(\int_0^1 f^r\mu(\d r)\right)\\
=&\int_0^1 \Phi(f^r) \mu(\d r)\,,
\end{align*}
and therefore $f^r$ must also solve \eqref{eq:linear-square2} for $\mu$-almost $r \in [0,1]$. As a result, for $\mu$-almost all $r \in [0,1]$, $f^r \equiv f$ on $[0,1]^2 \backslash D$ almost everywhere. In particular, $D_r \subseteq D$ for $\mu$-almost all $r \in [0,1]$. Furthermore, since $q_1^r$ and $q_2^r$ are constant on $(\underline{x}_1,\overline{x}_1]$ and $(\underline{x}_2,\overline{x}_2]$, respectively, and since $\mathrm{cl}(D)=[\lb{x}_1,\ub{x}_1] \times [\lb{x}_2,\ub{x}_2]$, it must be that $D_r=D$ for $\mu$-almost all $r \in [0,1]$ (by the same argument as in \Cref{subsec:unique}), and thus, $q_1^r(x_1)=1-(\lambda_r \underline{x}_2+(1-\lambda_r)\overline{x}_2)$ for all $x_1 \in (\lb{x}_1,\ub{x}_1]$ and $q_2^r(x_2)=1-(\lambda_r \lb{x}_1+(1-\lambda_r)\ub{x}_1)$ for all $x_2 \in (\lb{x}_2,\ub{x}_2]$. Together, since $f^r \equiv f$ outside of $D$, \eqref{eq:mixrationalize} and \eqref{eq:linear-equality} then imply that, for $\mu$-almost $r \in [0,1]$, 
\begin{align*}
&\int_{\lb{x}_1}^{\ub{x}_1}\left(1-(\lambda \lb{x}_2+(1-\lambda)\ub{x}_2)\right)\psi_1(x_1)\d x_1+\int_{\lb{x}_2}^{\ub{x}_2}\left(1-(\lambda \lb{x}_1+(1-\lambda)\ub{x}_1)\right)\psi_2(x_2)\d x_2\\
=&\int_{\lb{x}_1}^{\ub{x}_1}\left(1-(\lambda_r \lb{x}_2+(1-\lambda_r)\ub{x}_2)\right)\psi_1(x_1)\d x_1+\int_{\lb{x}_2}^{\ub{x}_2}\left(1-(\lambda_r \lb{x}_1+(1-\lambda_r)\ub{x}_1)\right)\psi_2(x_2)\d x_2\,,
\end{align*}
which simplifies to 
\[
\lambda \int_{D}(\psi_1(x_1)+\psi_2(x_2))\d x
=\lambda_r \int_{D}(\psi_1(x_1)+\psi_2(x_2))\d x\,.
\]
Since 
\[
\int_{D}(\psi_1(x_1)+\psi_2(x_2))\d x \neq 0\,,
\]
it must be that $\lambda=\lambda_r$. As a result, it must be that, for $\mu$-almost $r \in [0,1]$, $f^r \equiv f$ almost everywhere, and hence $q^r \equiv q$ almost everywhere. This completes the proof.   

\subsection{Proof of \texorpdfstring{\Cref{prop:reverse-majorization}}{}}

By \Cref{lem:gutmann}, we have 
\[
\Big\{q: q_1 \preceq \hat{q}_2\,;\, q_1,q_2 \mbox{ are nondecreasing and left-continuous} \Big\}=\cQ\,,
\]
which is clearly convex. Now, by \Cref{thm:rationalized-upsets}, $q \in \cQ$ is an extreme point of $\cQ$ if and only if it is rationalized by $\1_A$ for some up-set $A \subseteq [0,1]^2$. 

Now fix any extreme point $q$ of $\cQ$. There exists $\1_{A}$ for some up-set $A$ that rationalizes $q$. Since $A$ is an up-set, there exists nonincreasing and left-continuous function $g:[0,1] \to [0,1]$ such that $A=\{(x_1,x_2): x_2 \geq g(x_1)\}$. Thus,
\[
q_1(x_1)=\int_0^1 \1_A(x_1,x_2)\d x_2=\int_{g(x_1)}^1 1 \d x_2=1-g(x_1)\,, 
\]
and 
\[
q_2(x_2)=\int_0^1 \1_A(x_1,x_2) \d x_1=\int_{g^{-1}(x_2)}^1 1 \d x_1=1-g^{-1}(x_2)\,,
\]
and hence, $q_1 \equiv \hat{q}_2$. 

Conversely, fix any $q$ such that $q_1$, $q_2$ are nondecreasing and left-continuous, and $q_1 \equiv \hat{q}_2$. Let 
\[
A:=\{(x_1,x_2): x_2 \geq 1-q_1(x_1)\}\,.
\]
Then, $A$ is an up-set since $q_1$ is nondecreasing. Moreover, we have 
\[
q_1(x_1)=\int_0^1 \1_A(x_1,x_2) \d x_2
\]
by definition, and 
\[
q_2(x_2)=\hat{q}_1(x_2)=\int_0^1 \1_A(x_1,x_2) \d x_1\,.
\]
That is, $q$ is rationalized by $\1_A$, and hence $q$ is an extreme point of 
\[
\cQ=\Big\{q: q_1 \preceq \hat{q}_2\,;\, q_1,q_2 \mbox{ are nondecreasing and left-continuous} \Big\}\,.
\]
This completes the proof.

\subsection{Proof of \texorpdfstring{\Cref{prop:square-majorization}}{}}
By \Cref{lem:gutmann}, we have 
\[
\begin{aligned}
\bigg\{q\,\,:\,\, &q_1 \preceq \hat{q}_2\,;\, q_1,q_2 \mbox{ are nondecreasing and left-continuous}\,;\\
&\int_{0}^1 q_1(x_1)\psi_1(x_1)\d x_1+\int_0^1 q_2(x_2)\psi_2(x_2)\d x_2 \leq \eta \bigg\}=\overline{\cQ}\,.
\end{aligned}
\]
Thus, by \Cref{thm:rectangle}, any extreme point of
\[
\begin{aligned}
\bigg\{q\,\,:\,\, &q_1 \preceq \hat{q}_2\,;\, q_1,q_2 \mbox{ are nondecreasing and left-continuous}\,;\\
&\int_{0}^1 q_1(x_1)\psi_1(x_1)\d x_1+\int_0^1 q_2(x_2)\psi_2(x_2)\d x_2 \leq \eta \bigg\}
\end{aligned}
\]
must be rationalized by $(1-\lambda)\1_{A_1}+\lambda\1_{A_2}$ for some $\lambda \in [0,1]$ and some nested up-sets $A_1 \subseteq A_2 \subseteq [0,1]^2$ that differ by at most a rectangle $[\underline{x_1},\overline{x}_1] \times [\underline{x}_2,\overline{x}_2]$. 

Furthermore, for the up-sets $A_1,A_2$, there exist nonincreasing and left-continuous functions $g_1,g_2:[0,1] \to [0,1]$ such that $A_j=\{(x_1,x_2): x_2 \geq g_j(x_1)\}$ for $j \in \{1,2\}$, where $g_1\equiv g_2$ on $[0,1] \backslash [\underline{x}_1,\overline{x}_1]$, and $g_2(x_1)=\underline{x}_2$, $g_1(x_1)=\overline{x}_2$ for all $x_1 \in (\underline{x}_1,\overline{x}_1]$. Since $q$ is rationalized by $(1-\lambda)\1_{A_1}+\lambda\1_{A_2}$, it follows that 
\[
q_1(x_1)=\int_0^1  (1-\lambda)  \1_{A_1}(x_1,x_2)+ \lambda\1_{A_2}(x_1,x_2) \d x_2=(1-\lambda)  (1-g_1(x_1))+\lambda(1-g_2(x_1))\,,
\]
and 
\[
q_2(x_2)=\int_0^1 (1-\lambda) \1_{A_1}(x_1,x_2)+ \lambda  \1_{A_2}(x_1,x_2) \d x_1= (1-\lambda)(1-g_1^{-1}(x_2))+ \lambda(1-g_2^{-1}(x_2))\,.
\]
Let $\underline{z}:=\underline{x}_1$ and $\overline{z}:=\overline{x}_1$. Let $\underline{\gamma} := 1 - \overline{x}_2$ and $\overline{\gamma} := 1 - \underline{x}_2$. This then completes the proof.

\subsection{Proof of \texorpdfstring{\Cref{prop:weak}}{}}
We first show that the set 
\[\mathcal{Q}^w := \Big\{q\,:\,  q_1 \preceq_w  \hat{q}_2\,;\, q_1,q_2 \mbox{ are nondecreasing and left-continuous}\Big\}\]
is convex. By Theorem 4.A.6 of \citet{shaked2007stochastic}, for any two nondecreasing left-continuous functions $g_1,g_2:[0,1] \to [0,1]$, $g_1 \preceq_w g_2$ if and only if 
\[
g_1 \leq \tilde{g}_1 \preceq g_2
\]
for some nondecreasing left-continuous function $\tilde{g}_1:[0,1] \to [0,1]$. 

Consider the set 
\[\mathcal{K}:=\Big\{(q_1, \tilde{q}_1, q_2)\,:\, q_1 \leq \tilde{q}_1 \preceq  \hat{q}_2\,;\, q_1, \tilde{q}_1, q_2 \mbox{ are nondecreasing and left-continuous}\Big\}\,.\]
Note that $\mathcal{K}$ is convex since for any $\lambda \in [0, 1]$ and any $(q_1, \tilde{q}_1, q_2), (q'_1, \tilde{q}'_1, q'_2) \in \mathcal{K}$, we have 
\[\lambda q_1 + (1-\lambda) q'_1 \leq \lambda \tilde{q}_1 + (1-\lambda) \tilde{q}'_1 \preceq \widehat{\lambda q_2 + (1-\lambda) q'_2}\,,\]
where the majorization relation is due to the convexity of the set \[\Big\{q\,:\, q_1 \preceq \hat{q}_2\,;\,\, q_1,q_2 \mbox{ are nondecreasing and left-continuous} \Big\}\,,\]
established in \Cref{prop:reverse-majorization}. Therefore, $\mathcal{K}$ is convex. As $\cQ^w$ is a linear projection of $\mathcal{K}$, and $\mathcal{K}$ is convex, $\cQ^w$ must also be convex.

Now, fix any extreme point $q$ of the convex set $\mathcal{Q}^w$. Since $q_1\preceq_w \hat{q}_2$, we know that $q_1$ must be an extreme point of the convex set of nondecreasing left-continuous functions that are weakly majorized by $\hat{q}_2$. By Corollary 2 of \citet*{kleiner2021extreme}, there exists $k \in [0,1]$ such that $q_1 \equiv 0$ on $[0,k]$ almost everywhere and that $q_1$ is an extreme point of the convex set of nondecreasing left-continuous functions that are majorized by $\hat{q}_2\1_{[k,1]}$. In particular, 
\[
\int_s^1 q_1(z) \d z \leq \int_s^1 \hat{q}_2\1_{[k,1]}(z) \d z= \int_s^1 \hat{q}_2(z) \d z
\]
for all $s \in [k,1]$, and 
\[
\int_k^1 q_1(z) \d z =\int_0^1 q_1(z) \d z= \int_0^1 \hat{q}_2 \1_{[k,1]}(z) \d z=\int_k^1 \hat{q}_2(z) \d z\,.
\]

Now, consider the set 
\[\cQ^k := \Big\{q\,:\, q_1(s) = 0 \text{ for all $s \leq k$}\,;\, q_1 \preceq \hat{q}_2\,;\, q_1, q_2 \text{ are nondecreasing and left-continuous} \Big\}\,.\]
We know that $\cQ^k$ is convex since it is the intersection of two convex sets. We claim that the extreme point $q$ of $\mathcal{Q}^w$ must satisfy 
\[\Big(q_1, \min\{q_2, 1 - k\}\Big) \in \text{ext}(\mathcal{Q}^k)\,.\]
Indeed, note that for all $s \leq k$, 
\[
\mathrm{conjugate}[\min\{q_2, 1 - k\}](s) = 0\,, 
\]
and  for all $s > k$, 
\[
\mathrm{conjugate}[\min\{q_2, 1 - k\}](s) = \hat{q}_2(s)\,,
\]
by definition, where $\mathrm{conjugate}(q)$ denotes the conjugate of a monotone function $q$. Thus, by the construction of $k$ and previous observation, we have 
\[q_1 \preceq \mathrm{conjugate}[\min\{q_2, 1 - k\}]
\]
and $q_1(s) = 0$ for all $s \leq k$. Therefore, $(q_1, \min\{q_2, 1 - k\}) \in \mathcal{Q}^k$. Moreover, note that by \Cref{lem:gutmann}, we have 
\[\mathcal{Q}^k = \mathcal{Q} \cap \{(q_1, q_2)\,:\,\text{$q_1(s) = 0$ for all $s \leq k$}\}\,. \]
where $\mathcal{Q}$ is the set of rationalizable monotone pairs. Then, any $(\tilde{q}_1, \tilde{q}_2) \in \mathcal{Q}^k$ must be rationalized by some function $f$ such that $f(x_1, x_2) = 0$ for all $x_1 \leq k$. Thus, $\tilde{q}_2(s) \leq 1 - k$. Suppose for contradiction that $(q_1, \min\{q_2, 1 - k\})$ is not an extreme point of $\mathcal{Q}^k$. Then there exist $u_1, u_2: [0, 1] \rightarrow \R$ such that at least one of $u_1$ and $u_2$ is not identically zero and  
\[(q_1 + u_1, \min\{q_2, 1- k\} + u_2) \in \mathcal{Q}^k\,,\]
and
\[(q_1 - u_1, \min\{q_2, 1- k\} - u_2) \in \mathcal{Q}^k\,.\]
We claim that 
\[(q_1 + u_1, q_2 + u_2) \in \mathcal{Q}^w\,.\]
\[(q_1 - u_1, q_2 - u_2) \in \mathcal{Q}^w\,.\]
Indeed, note that $u_2(s) = 0$ for all $s$ such that 
\[q_2(s) \geq 1 - k\,,\]
since otherwise it would imply that at such $s$ we have either
\[\min\{q_2(s), 1-k\}+ u_2(s) = 1-k + u_2(s) > 1 - k\]
or 
\[\min\{q_2(s), 1-k\} - u_2(s) = 1-k - u_2(s) > 1 - k\,,\]
but as argued before, any  $(\tilde{q}_1, \tilde{q}_2) \in \mathcal{Q}^k$ must satisfy $\tilde{q}_2(s) \leq 1 - k$ for all $s \in [0, 1]$. Now, for all $s$ such that $q_2(s) < 1 - k$, we can write 
\[\min\big\{q_2(s), 1- k\big\} \pm u_2(s) = q_2(s) \pm u_2(s)\,.\]
Together, for all $s\in [0, 1]$, we have 
\[\min\big\{q_2(s), 1- k\big\} \pm u_2(s) = q_2(s) \pm u_2(s)\,.\]
Therefore, $ q_2 \pm u_2$ are both monotone. Clearly, $q_1 \pm u_1$ are both monotone. Moreover,
\[\int_s^1 \big(q_1(z) + u_1(z)\big) \d z \leq \int_s^1 \text{conjugate}[\min\{q_2, 1 -k\} + u_2](z) \d z  =  \int_s^1 \text{conjugate}[q_2 + u_2](z)\d z\,, \]
for all $s\in [0,1]$. Similarly,  
\[\int_s^1 \big(q_1(z) - u_1(z)\big) \d z \leq \int_s^1 \text{conjugate}[\min\{q_2, 1 -k\} - u_2](z) \d z  =  \int_s^1 \text{conjugate}[q_2 - u_2](z)\d z\,, \]
for all $s \in [0, 1]$. Therefore, 
\[(q_1 + u_1, q_2 + u_2) \in \mathcal{Q}^w\,.\]
\[(q_1 - u_1, q_2 - u_2) \in \mathcal{Q}^w\,.\]
But then, since at least one of $u_1$ and $u_2$ is not identically $0$, the pair $(q_1, q_2)$ cannot be an extreme point of $\mathcal{Q}^w$, a contradiction. 

Therefore, the extreme point $q$ of $\mathcal{Q}^w$ must satisfy 
\[\Big(q_1, \min\{q_2, 1 - k\}\Big) \in \text{ext}(\mathcal{Q}^k)\,.\]
Note that by \Cref{lem:gutmann-monotone}, the set 
\[\mathcal{Q}^k = \mathcal{Q} \cap \{(q_1, q_2)\,:\,\text{$q_1(s) = 0$ for all $s \leq k$}\}\, \]
is exactly the set of one-dimensional marginals of the following set 
\[\mathcal{F}^k := \Big\{f \in \mathcal{F}: f(x_1, x_2) = 0 \text{ for all $x_1 \leq k$ and all $x_2$}\Big\}\,,\]
where $\mathcal{F}$ is the set of monotone functions from $[0, 1]^2$ to $[0, 1]$. Clearly, $\mathcal{F}^k$ can be identified by the set of monotone functions from $[k, 1] \times [0, 1]$ to $[0, 1]$. Therefore, by \Cref{lem:extreme-projection} and the proof of \Cref{thm:extreme-are-up-sets}, we immediately have that any extreme point of $\mathcal{Q}^k$ must be rationalized by $\1_{A}$ for some up-set $A \subseteq [k, 1]\times[0, 1]$, which is also an up-set in $[0, 1]^2$. Since $\big(q_1, \min\{q_2, 1 - k\}\big)$ is an extreme point of $\mathcal{Q}^k$ and hence is rationalized by such an up-set function $\1_A$, by the proof of \Cref{prop:reverse-majorization}, we have 
\[q_1 = \mathrm{conjugate}[\min\{q_2, 1 - k\}] = \hat{q}_2 \1_{[k, 1]}\,,\]
proving the result. 

\subsection{Proof of \texorpdfstring{\Cref{prop:example}}{}}
First, note that, given an externality-refund mechanism, and given any realized pair of signals $(s_1,s_2)$, each agent paying the price $k$ if and only if their own signal $s_i$ is weakly above $k$ forms an ex-post Nash equilibrium. Thus, such a mechanism can implement the allocation rule $\alpha(s_1,s_2)=\1\{\max\{s_1,s_2\} \geq k\}$ for any $k \geq 0$. 

Next, we show that there must exist an optimal mechanism that is a randomization of two externality-refund mechanisms. To see this, note that since the environment is symmetric, there must exist a symmetric optimal mechanism. By strong duality, any symmetric optimal mechanism $\alpha$ must solve
\begin{align}\label{eq:dual}
\max_{\alpha \in \F_{\mathrm{sym}}} &\Bigg[\E\Bigg[ \alpha(s) \Big(\sum_i v_i(s) - c \Big)\Bigg]\notag\\
&+\lambda \Bigg(\mathbb{E}\Bigg[\sum_i \alpha(s) v_i(s)  - \int^{s_i}_0 \alpha(z, s_{-i}) \frac{\partial}{\partial s_i} v_i(z, s_{-i}) \d z - c \cdot \alpha (s)\Bigg]\Bigg)\Bigg]\,,
\end{align}
for some $\lambda \geq 0$. Note that for $i \in \{1,2\}$, 
\[
\frac{\partial}{\partial s_i}v_i(s_1,s_2)=\1\{s_i \geq s_{-i}\}\,,
\]
and thus 
\[
\frac{1-G(s_1|s_2)}{g(s_1|s_2)}\frac{\partial}{\partial s_1}v_1(s_1,s_2)=\frac{1-G(\max\{s_1,s_2\}|\min\{s_1,s_2\})}{g(\max\{s_1,s_2\}|\min\{s_1,s_2\})}\,,
\]
if $s_1 \geq s_2$, and 
\[
\frac{1-G(s_2|s_1)}{g(s_2|s_1)} \frac{\partial}{\partial s_2}v_2(s_1,s_2)=\frac{1-G(\max\{s_1,s_2\}|\min\{s_1,s_2\})}{g(\max\{s_1,s_2\}|\min\{s_1,s_2\})}\,,
\]
if $s_2 \geq s_1$, for all $(s_1,s_2) \in [0,1]^2$. Moreover, for all $(s_1,s_2)\in [0,1]^2$, we have 
\begin{align*}
&v_1(s_1,s_2)+v_2(s_1,s_2)=|s_1-s_2|=\max\{s_1,s_2\}-\min\{s_1,s_2\}\,.
\end{align*}
Therefore, by changing variables and letting $x=\max\{s_1,s_2\}$ and $y=\min\{s_1,s_2\}$,  \eqref{eq:dual} can be written as 
\begin{equation}\label{eq:changed-dual}
\max_{\tilde{\alpha} \in \F}\int_{[0,1]^2} \tilde{\alpha}(x,y) \left[(1+\lambda)(x-y-c)-\lambda\frac{1-G(x|y)}{g(x|y)}\right]k(x,y)\d x \d y\,,
\end{equation}
where 
\[
k(x,y):=2\cdot \1\{x \geq y \}  g(x,y) 
\]
is the joint density of $\max\{s_1,s_2\}$ and $\min\{s_1,s_2\}$.\footnote{Note that for any symmetric monotone $\alpha(s_1, s_2)$, it induces some monotone $\tilde{\alpha}(x, y)$ defined on $\{(x, y): x  \geq y\}$ and vice versa. For notational convenience, we define $\tilde{\alpha}$ also on $[0, 1]^2$ and write $\tilde{\alpha} \in \mathcal{F}$ (any monotone extension of $\tilde{\alpha}$ from $\{(x, y): x  \geq y\}$ to the whole domain works for our purpose).} 

Moreover, by \Cref{prop:public}, there exist two nested up-sets $A_1 \subseteq A_2 \subseteq [0,1]^2$ and some $p \in [0, 1]$ such that 
\[
\tilde{\alpha}^\star(x,y)=p \1\{(x,y) \in A_1\}+(1-p) \1\{(x,y) \in A_2\}\,.
\]
Since $A_1$ and $A_2$ are nested up-sets, there exist nonincreasing, left-continuous functions $\gamma_1 \geq \gamma_2$ such that  
\[
A_j=\{(x,y):x \geq \gamma_j(y)\}
\]
almost everywhere. Therefore, 
\begin{align*}
&\max_{\tilde{\alpha} \in \F}\int_{[0,1]^2} \tilde{\alpha}(x,y) \left[(1+\lambda)(x-y-c)-\lambda\frac{1-G(x|y)}{g(x|y)}\right]k(x,y)\d x \d y\\
=&p\int_{[0,1]^2} \1\{x \geq \gamma_1(y)\}\left[(1+\lambda)(x-y-c)-\lambda\frac{1-G(x|y)}{g(x|y)}\right]k(x,y)\d x \d y\\
&+(1-p)\int_{[0,1]^2} \1\{x \geq \gamma_2(y)\}\left[(1+\lambda)(x-y-c)-\lambda\frac{1-G(x|y)}{g(x|y)}\right]k(x,y)\d x \d y\,.
\end{align*}
We claim that such $\gamma_1,\gamma_2$ must be constant on the relevant domain. Indeed, since $h(x|y)$ is strictly increasing in $x$ and decreasing in $y$ and since $\lambda \geq 0$, the function defined by 
\[
(x,y) \mapsto (1+\lambda) (x-y-c)-\lambda\frac{1-G(x|y)}{g(x|y)}
\]
must be strictly increasing in $x$ and decreasing in $y$. Then, combined with that $\tilde{\alpha}^*$ is optimal, this implies that for each $j \in \{1,2\}$, since the density $g$ is continuous and since $c \in (0,1)$, there exists some $(x_j^\star,y^\star_j)$, where $x_j^\star \geq y^\star_j$,  such that 
\begin{equation}\label{eq:zero}
(1+\lambda) (x^\star_j-y^\star_j-c)-\lambda\frac{1-G(x_j^\star|y_j^\star)}{g(x_j^\star|y_j^\star)}=0
\end{equation}
and that 
\[
x_j^\star \in [\gamma_j(y_j^{\star+}),\gamma_j(y_j^\star)]\,.
\]
Then, since
\[
(x,y) \mapsto (1+\lambda) (x-y-c)-\lambda\frac{1-G(x|y)}{g(x|y)}
\]
is strictly increasing in $x$ and decreasing in $y$, and since $\gamma_j$ and $\gamma_j^{-1}$ are nonincreasing, we have
\begin{align*}
&\int_{[0,1]^2} \1\{x \geq x_j^\star\} \left[(1+\lambda)(x-y-c)-\lambda\frac{1-G(x|y)}{g(x|y)}\right]k(x,y)\d x \d y\\
&-\int_{[0,1]^2} \1\{x \geq \gamma_j(y)\} \left[(1+\lambda)(x-y-c)-\lambda\frac{1-G(x|y)}{g(x|y)}\right]k(x,y)\d x \d y\\
=&\int_{x_j^\star}^1 \int_{0}^{\gamma_j^{-1}(x)} \left[(1+\lambda)(x-y-c)-\lambda\frac{1-G(x|y)}{g(x|y)}\right]k(x,y) \d y \d x\\
&-\int_{0}^{x_j^\star}\int_{\gamma_j^{-1}(x)}^1 \left[(1+\lambda)(x-y-c)-\lambda\frac{1-G(x|y)}{g(x|y)}\right]k(x,y) \d y \d x\\
\geq & \int_{x_j^\star}^1 \int_0^{\gamma_j^{-1}(x)} \left[(1+\lambda)(x_j^\star-y_j^\star-c)-\lambda\frac{1-G(x_j^\star|y_j^\star)}{g(x_j^\star|y_j^\star)}\right]k(x,y) \d y \d x\\
&-\int_{0}^{x_j^\star} \int_{\gamma_j^{-1}(x)}^1 \left[(1+\lambda)(x_j^\star-y_j^\star-c)-\lambda\frac{1-G(x_j^\star|y_j^\star)}{g(x_j^\star|y_j^\star)}\right]k(x,y) \d y \d x\\
=&0\,,
\end{align*}
where the last equality follows from \eqref{eq:zero}, and the inequality must be strict if $\gamma_j$ is not constant on $[0, x^\star_j]$. 

As a result, it must be that $A_1 \cap \{(x, y): x \geq y\}=\{(x,y): x \geq y, x \geq k_1\}$ and $A_2 \cap \{(x, y): x \geq y\}=\{(x,y): x \geq y, x \geq k_2\}$, for some $k_1, k_2 \in [0, 1]$. Changing variables back to $(s_1,s_2)$ using $x=\max\{s_1,s_2\}$ and $y=\min\{s_1,s_2\}$ (i.e., extending $\tilde{\alpha}^\star$ from $\{(x, y): x \geq y\}$ to the whole domain by symmetry), it then follows that any symmetric optimal allocation rule in the original space must satisfy 
\[
\alpha^\star(s_1,s_2) =p\1\{(s_1,s_2): \max\{s_1,s_2\} \geq k_1\}+(1-p)\1\{(s_1,s_2): \max\{s_1,s_2\} \geq k_2\}\,.
\]
The proof is then completed by noting that $\alpha^\star$ is a mixture of two externality-refund mechanisms.

\section{Additional Results}\label{app:additional}
\subsection{Exposed Points of \texorpdfstring{$\mathcal{F}$}{}}
An extreme point $f$ of $\F$ is \textit{\textbf{exposed}} if there exists an essentially bounded $\phi:[0,1]^n \to \R$ such that 
\[
\int_{[0,1]^n} f(x)\phi(x) \d x >\int_{[0,1]^n} \hat{f}(x) \phi(x) \d x\,,
\]
for all $\hat{f} \in \F \backslash \{f\}$. 

\begin{taggedtheorem}{B.1}\label{thm:exposedF}
   \textit{ $f \in \mathcal{F}$ is an exposed point of $\mathcal{F}$ if and only if $f= \1_A$ for some up-set $A$. }
\end{taggedtheorem}

\begin{proof}[Proof of \Cref{thm:exposedF}]
For any exposed point $f \in \F$, $f$ must be an extreme point. Therefore, $f=\1_A$ for some up-set $A$ by \Cref{thm:choquet}. Now, consider any up-set $A \subseteq [0,1]^n$. Let $\phi:[0,1]^n \to \R$ be defined as 
\[
\phi(x):=\begin{cases}
1,&\mbox{if } x \in A\\
-1,&\mbox{if } x \notin A
\end{cases}\,.
\]
Therefore, for any $f \in \F$, 
\[
\int_{[0,1]^n} \phi(x)f(x) \d x=\int_A f(x) \d x -\int_{[0,1]^n \backslash A} f(x) \d x \leq \int_A 1 \d x=\int_{[0,1]^n} \phi(x)\1_A(x) \d x\,,
\]
and the inequality is strict whenever $f \neq \1_A$. Thus, $\1_A$ is an exposed point of $\F$ as it uniquely maximizes 
\[
f \mapsto \int_{[0,1]^n} \phi(x) f(x) \d x\,,
\]
concluding the proof. 
\end{proof}

\subsection{Exposed Points of \texorpdfstring{$\cQ$}{}}
An extreme point $q$ of $\cQ$ is said to be an \textbf{\textit{exposed point}} if there exist essentially bounded functions $\{\psi_{i}\}_{i=1}^n$ such that 
\[
\sum_{i=1}^n \int_0^1 q_i(z) \psi_i(z) \d z>\sum_{i=1}^n \int_0^1 \hat{q}_i(z) \psi_i(z) \d z 
\]
for all $\hat{q} \in \cQ\backslash \{q\}$. 

An up-set $A \subseteq [0,1]^n$ is an \textbf{\textit{additive set}} if there exist nondecreasing functions $\{\phi_i\}_{i=1}^n$ such that 
\[
A=\left\{x \in [0, 1]^n : \sum_{i=1}^n \phi_i(x_i) \geq 0\right\}\,.
\]
Moreover, we say that an additive set $A \subseteq [0,1]^n$ is \textbf{\textit{strongly additive}} if at least one of the nondecreasing functions $\{\phi_i\}_{i=1}^n$ can be chosen to be strictly increasing.

\begin{taggedtheorem}{B.2}\label{thm:exposedQ}
\textit{Every exposed point of $\cQ$ is rationalized by $\1_A$ for an additive set $A$. Conversely, if $q \in \cQ$ is rationalized by  $\1_A$ for a strongly additive set $A$, then $q$ is an exposed point of $\cQ$.}   
\end{taggedtheorem}

\begin{proof}[Proof of \Cref{thm:exposedQ}]
For sufficiency, fix any strongly additive set $A$ defined by  nondecreasing functions $\{\phi_i\}_{i=1}^n$ where at least one of them is strictly increasing. Let $q$ be the projection of $f:=\1_A$. Then, for any $\hat{q} \in \cQ \backslash\{q\}$, $\hat{q}$ is rationalized by some $\hat{f}\neq f$, which implies 
\begin{align*}
\sum_{i=1}^n \int_0^1 q_i(z)\phi_i(z)\d z=& \sum_{i=1}^n \int_{[0,1]^n} \phi_i(x_i)f(x) \d x\\
=& \int_{[0,1]^n} \sum_{i=1}^n \phi_i(x_i) \1\left\{x: \sum_{i=1}^n \phi_i(x_i) \geq 0 \right\} \d x\\
>& \int_{[0,1]^n} \sum_{i=1}^n \phi_i(x_i) \hat{f}(x) \d x\\
=& \sum_{i=1}^n \int_{[0,1]^n} \phi_i(x_i) \hat{f}(x) \d x \\
=& \sum_{i=1}^n \int_0^1 \phi_i(z) \hat{q}_i(z) \d z\,,
\end{align*}
where the strict inequality holds since the set 
\[\Bigg\{x \in [0, 1]^n: \sum_{i=1}^n \phi_i(x_i) = 0\Bigg\}\]
has measure zero given that at least one of the $\phi_i$'s is strictly increasing. Thus, by taking $\psi_i:=\phi_i$ for all $i$, it then follows that $q$ is exposed. 

For necessity, suppose that $q$ is an exposed point of $\cQ$. Then there exist essentially bounded functions $\{\psi_i\}_{i=1}^n$ such that 
\[
\sum_{i=1}^n \int_0^1 q_i(z) \psi_i(z) \d z > \sum_{i=1}^n \int_0^1 \hat{q}_i(z) \psi_i(z) \d z 
\]
for all $\hat{q} \neq q \in \cQ$. Define an additive set $A$ as 
\[
A:=\left\{x \in [0, 1]^n: \sum_{i=1}^n \overline{\psi}_i(x_i) \geq 0\right\}\,,
\]
where $\overline{\psi}_i$ is the ironed version of $\psi_i$, i.e., $\overline{\psi}_i$ is the (left-continuous) subgradient of the convex closure of $z \mapsto \int_0^z \psi_i(t) \d t$. Note that, by construction, for all $i$ and for all $\hat{q} \in \cQ$,
\begin{equation}\label{eq:ironingeq}
\int_0^1 \hat{q}_i(z)\psi_i(z) \d z \leq \int_0^1 \hat{q}_i(z) \overline{\psi}_i(z) \d z\,,
\end{equation}
and the equality holds whenever $\hat{q}_i$ is constant on all intervals in which $\overline{\psi}_i$ is constant. 

Let $q^\star$ be the projection of $\1_A$. We claim that $q\equiv q^\star$. To see this, first note that since
\[
q_i^\star(x_i)=\int_{[0,1]^{n-1}} \1\left\{x: \overline{\psi}_i(x_i) \geq -\sum_{j\neq i} \overline{\psi}_j(x_j)\right\}\d x_{-i}\,,
\]
for all $x_i \in [0,1]$, $q_i^\star$ is constant whenever $\overline{\psi}_i$ is constant. Therefore, by \eqref{eq:ironingeq}, 
\[
\sum_{i=1}^n \int_0^1 q_i^\star(z) \overline{\psi}_i(z) \d z=\sum_{i=1}^n \int_0^1 q_i^\star(z) \psi_i(z) \d z\,.
\]
Now let $f \in \F$ be any function that rationalizes $q$. Then, 
\begin{align*}
\sum_{i=1}^n \int_0^1 q_i(z) \psi_i(z) \d z \leq& \sum_{i=1}^n \int_0^1 q_i(z) \overline{\psi}_i(z) \d z \tag{by \eqref{eq:ironingeq}}\\
=& \int_{[0,1]^n} f(x) \sum_{i=1}^n \overline{\psi}_i(x_i) \d x \tag{$f$ rationalizes $q$}\\
\leq & \int_{[0,1]^n } \1_A(x) \sum_{i=1}^n \overline{\psi}_i(x_i) \d x\tag{by definition of $A$}\\
=& \sum_{i=1}^n \int_0^1 q_i^\star(z) \overline{\psi}_i(z) \d z \tag{$\1_A$ rationalizes $q^\star$} \\
=& \sum_{i=1}^n \int_0^1 q_i^\star(z) \psi_i(z) \d z \tag{by \eqref{eq:ironingeq}}\,.
\end{align*}
However, since $q$ is the unique maximizer of 
\[
\tilde{q} \mapsto \sum_{i=1}^n \int_0^1 \tilde{q}_i(z) \psi_i(z) \d z 
\]
over $\cQ$ and since $q^\star \in \cQ$, it must be that $q \equiv q^\star$, as desired. 
\end{proof}

\subsection{Projections under General Product Measures}\label{subsec:measure}
In \Cref{rmk:projection}, we note that \Cref{thm:rationalized-upsets} and \Cref{thm:rectangle} can be extended to projections that are defined under different measures. Specifically, fix any pair of CDFs $G_1,G_2$ that are continuous and have full supports on some compact intervals $[\underline{x}_1,\overline{x}_1]$ and $[\underline{x}_2,\overline{x}_2]$ respectively.  We say that a pair of nondecreasing left-continuous functions $q=(q_1,q_2)$ is $(G_1,G_2)$\textit{\textbf{-rationalized}} by a function $f:[\underline{x}_1,\overline{x}_1] \times [\underline{x}_2,\overline{x}_2] \rightarrow [0,1]$ if 
\[
q_1(x_1)=\int_{\underline{x}_2}^{\overline{x}_2} f(x_1,x_2)G_2(\d x_2)\,, \quad \mbox{ and } \quad q_2(x_2)=\int_{\underline{x}_1}^{\overline{x}_1} f(x_1,x_2)G_1(\d x_1)\,,
\]
for all $x_1\in [\underline{x}_1,\overline{x}_1]$ and $x_2 \in [\underline{x}_2,\overline{x}_2]$. Let $\cQ^{(G_1,G_2)}$ be the set of pairs of nondecreasing functions that are $(G_1,G_2)$-rationalizable. Furthermore, for any essentially bounded functions $\psi_1,\psi_2$, and for any $\eta \in \R$, let 
\[
\overline{\cQ}^{(G_1,G_2)}:=\left\{q \in \cQ^{(G_1,G_2)}: \int_{\underline{x}_1}^{\overline{x}_1} q_1(x_1)\psi_1(x_1)G_1(\d x_1)+\int_{\underline{x}_2}^{\overline{x}_2} q_2(x_2)\psi_2(x_2)G_2(\d x_2) \leq \eta\right\}\,.
\]

By simply changing variables and letting $t_1:=G_1(x_1)$ and $t_2:=G_2(x_2)$, it follows immediately that, for any $f:[\underline{x}_1,\overline{x}_1] \times [\underline{x}_2,\overline{x}_2] \rightarrow [0,1]$, 
\[
\int_{\underline{x}_2}^{\overline{x}_2} f(x_1,x_2) G_2(\d x_2)=\int_0^1 f(x_1,G_2^{-1}(t_2))\d t_2
\]
and 
\[
\int_{\underline{x}_1}^{\overline{x}_1} f(x_1,x_2) G_1(\d x_1)=\int_0^1 f(G_1^{-1}(t_1),x_2)\d t_1\,.
\]
Therefore, any $f:[\underline{x}_1,\overline{x}_1] \times [\underline{x}_2,\overline{x}_2] \rightarrow [0,1]$ uniquely corresponds to some $\tilde{f}:[0,1]^2 \rightarrow [0,1]$, where 
\[
\tilde{f}(t_1,t_2)=f(G_1^{-1}(t_1),G_2^{-1}(t_2))\,,
\]
and any $q \in \cQ^{(G_1,G_2)}$ uniquely corresponds to some $\tilde{q} \in \cQ$, where 
\[
\tilde{q}_1(t_1)=q_1(G_1^{-1}(t_1))\,,\quad \mbox{ and } \quad \tilde{q}_2(t_2)=q_2(G^{-1}_2(t_2))\,,
\]
for all $(t_1,t_2)\in [0,1]^2$. By \Cref{thm:rationalized-upsets} and \Cref{thm:rectangle}, we then immediately have the following corollaries. 

\begin{taggedcorollary}{B.1}\label{cor:rationalized-upsets}
\textit{$q=(q_1,q_2)$ is an extreme point of $\cQ^{(G_1,G_2)}$ if and only if $q$ is $(G_1, G_2)$-rationalized by $\1_A$ for some up-set $A \subseteq [\underline{x}_1,\overline{x}_1] \times [\underline{x}_2,\overline{x}_2]$. Moreover, every extreme point of $\cQ^{(G_1,G_2)}$ is uniquely $(G_1,G_2)$-rationalized. }
\end{taggedcorollary}

\begin{taggedcorollary}{B.2}\label{cor:rectangle}
\textit{Every extreme point of $\overline{\cQ}^{(G_1, G_2)}$ is a mixture of $\1_A$ and $\1_{A'}$, where $A' \subseteq A \subseteq [\underline{x}_1,\overline{x}_1] \times [\underline{x}_2,\overline{x}_2]$ are nested up-sets that differ by at most a rectangle. Moreover, every extreme point of $\overline{\cQ}^{(G_1,G_2)}$ is uniquely $(G_1,G_2)$-rationalized among all monotone functions.}
\end{taggedcorollary}

\end{document}